\titleformat{\section}{\large\bfseries\filcenter}{\thesection}{1em}{}
\titleformat{\subsection}{\bfseries}{\thesubsection}{1em}{}
\newtheorem{theorem}{Theorem}[section]
\newtheorem{theorem-intro}{Theorem}[]
\newtheorem{lemma}[theorem]{Lemma}
\newtheorem{proposition}[theorem]{Proposition}
\theoremstyle{remark}
\theoremstyle{definition}
\newtheorem{remark}[theorem]{Remark}
\newtheorem{definition}[theorem]{Definition}
\newtheorem{conjecture}[theorem]{Conjecture}
\newtheorem{question}[theorem]{Question}
\numberwithin{equation}{section}
\renewcommand\thanks[1]{%
  \begingroup
  \renewcommand\thefootnote{}\footnote{#1}%
  \addtocounter{footnote}{-1}%
  \endgroup
}
\renewcommand{\tilde}{\widetilde}
\renewcommand{\epsilon}{{\varepsilon}}
\newcommand{\II}{{\mathbb{I}}}
\newcommand{\KK}{{\mathbb{K}}}
\newcommand{\RR}{{\mathbb{R}}}
\newcommand{\CC}{{\mathbb{C}}}
\newcommand{\G}{\mathsf{G}}
\newcommand{\Id}{{\operatorname{Id}}}
\newcommand{\g}{\mathsf{g}}
\newcommand{\fa}{\mathfrak{a}}
\newcommand{\E}{\mathsf{E}}
\newcommand{\M}{\mathsf{M}}
\newcommand{\N}{\mathsf{N}}
\renewcommand{\P}{\mathsf{P}}
\newcommand{\Q}{\mathsf{Q}}
\newcommand{\R}{\mathsf{R}}
\renewcommand{\S}{\mathsf{S}}
\newcommand{\T}{\mathsf{T}}
\newcommand{\V}{\mathsf{V}}
\newcommand{\D}{\mathsf{Dom}}
\newcommand{\Ker}{\mathsf{Ker}}
\newcommand{\ff}{{\mathfrak{f}}}
\newcommand{\fg}{{\mathfrak{g}}}
\newcommand{\fh}{{\mathfrak{h}}}
\newcommand{\vol}{{\textnormal{vol}\,}}
\newcommand{\supp}{{\textnormal{supp\,}}}
\DeclareRobustCommand{\intprod}{%
  \mathbin{\mathpalette\int@prod{(0.1,0)(0.9,0)(0.9,0.8)}}%
}
\DeclareRobustCommand{\intprodr}{%
  \mathbin{\mathpalette\int@prod{(0.1,0.8)(0.1,0)(0.9,0)}}}
\newcommand{\int@prod}[2]{%
  \begingroup
  \sbox\z@{$\m@th#1+$}%
  \setlength\unitlength{\wd\z@}%
  \begin{picture}(1,1)
  \roundcap
  \polyline#2
  \end{picture}%
  \endgroup
}
\begin{document}
\begin{center}
	
	\vspace{5mm}

	{\Large\bf 
	 THE QUANTIZATION OF PROCA FIELDS
		\\[3mm] 
		ON GLOBALLY HYPERBOLIC SPACETIMES:\\[4mm]  HADAMARD STATES AND M{\O}LLER OPERATORS } 
	
	\vspace{5mm}
	
	{\bf by}
	
	\vspace{5mm}
	\noindent
	{  \bf  Valter  Moretti$^1$, Simone Murro$^2$ and Daniele Volpe$^1$}\\[2mm]
	\noindent  $^1$ {\it Dipartimento di Matematica, Universit\`a di Trento \& INdAM\& INFN-TIFPA}\\
	{\it Via Sommarive 14,} {\it I-38123 Povo, Italy}\\[1mm]
	\noindent  $^2$ {\it Dipartimento di Matematica, Universit\`a di Genova \& INdAM\& INFN}\\
	{\it Via Dodecaneso 35 
		16146 Genova,  Italy}\\[2mm]
	
	Emails: \ {\tt  valter.moretti@unitn.it, murro@dima.unige.it, daniele.volpe@unitn.it}
	\\[10mm]
\end{center}

\begin{abstract}
This paper deals with several issues concerning the algebraic quantization of the real  Proca field in a globally hyperbolic spacetime and the definition and existence of Hadamard states for that field.   In particular, extending previous work,  we  construct the so-called M\o ller $*$-isomorphism between the algebras of Proca observables on paracausally related spacetimes, proving that the pullback of these isomorphisms  preserves the Hadamard property of corresponding quasifree states defined on the two spacetimes.
 Then, we pull-back a natural  Hadamard state constructed on ultrastatic spacetimes of bounded geometry, along this $*$-isomorphism, to obtain an Hadamard state on a general globally hyperbolic spacetime. We conclude the paper, by comparing the definition of an Hadamard state, here given in terms of wavefront set,  with the one proposed by Fewster and Pfenning, which makes use  of a supplementary  Klein-Gordon Hadamard form. We  establish an (almost) complete equivalence of the two definitions.
\end{abstract}

\paragraph*{Keywords:}  Hadamard states, M\o ller operators, Proca operators, algebraic quantum field theory, globally hyperbolic  manifolds,  paracausal deformation.
\paragraph*{MSC 2020: } Primary: 81T05, 81T20; Secondary: 58J40, 58J45, 58J47. 
\\[0.5mm]

\tableofcontents

\section{Introduction}
The (algebraic) quantization of  a quantum field propagating in a globally hyperbolic  curved spacetime $(\M,g)$ \cite{W,AQFT}
and the definition of meaningful  quantum states 
 has been
and continues to be at the forefront of scientific research. Linearized theories are the first step of all perturbative procedures, so the definition of physically meaningful states for linearized field equations is an important task. 

Gaussian, also known as quasifree,  states $\omega: \mathcal{A} \to \CC$ on the relevant CCR or CAR unital $*$-algebra ${\cal A}$ of observables of a given quantum field are an important family of (algebraic) states \cite{IgorValter}. They are completely defined by assigning the two-point function, a bi-distribution $\omega_2(x,y)$ on the sections used to smear the field operator.

A crucial  physical requirement on $\omega$  is the so-called \emph{Hadamard condition}, which is needed, in particular, for defining  locally-covariant renormalization procedures of Wick polynomials \cite{IgorValter, FV2}  and for the mathematical formulation of locally covariant perturbative renormalization in quantum field theory  \cite{rejzner}.

 \subsection{Generalized Klein-Gordon vector fields}

All the notations and conventions used in this section to briefly summarize our results will be defined precisely later. For a charged (i.e. complex)  Klein-Gordon field $A$, possibly vector-valued, the construction of Hadamard states amounts to finding distributional
bi-solutions  $\Lambda_2^\pm(x,y)$ of the Klein-Gordon equation $\N A=0$ describing the two-point functions\footnote{We use throughout the convention of summation over the repeated indices.}
$$\omega(\hat{\fa}(\ff)\hat{\fa}^*(\ff')) = \int_{\M\times \M} \: \Lambda_2^+(x,y)_{cd} \gamma^{ca}(x) \gamma^{db}(y)\overline{\ff_a(x)} \ff'_b(y) \vol_g\otimes \vol_g =: \Lambda_2^+(\bar{\ff},\ff')\:,$$ and
$$\omega(\hat{\fa}^*(\ff')\hat{\fa}(\ff)) = \int_{\M\times \M} \: \Lambda_2^-(x,y)_{cd} \gamma^{ca}(x) \gamma^{db}(y)\overline{\ff'_a(x)} \ff_b(y)  \vol_g\otimes \vol_g =: \Lambda_2^-(\ff,\bar{\ff}')\:.$$
Above, the generators of the CCR $*$-algebra of the Proca field  $\hat{\fa}(\ff)$ and $\hat{\fa}^*(\ff')= \hat{\fa}(\ff)^*$   are the (algebraic) field operators smeared with smooth compactly supported complex sections $\ff, \ff'$ of the relevant {\em complex} vector bundle $\E \to \M$. That bundle is  equipped with a non-degenerate Hermitian\footnote{In this work to be a Hermitian or real  scalar product  does not include the positivity condition, though it is always assumed to be non-degenerate.}  fiberwise scalar product (not necessarily positive) $\gamma$.  In case of the standard complex vector Klein-Gordon 
field over $(\M,g)$ constructed out the $1$-form Hodge D'Alembertian or the Levi-Civita  vector D'Alembertian, the vector bundle $\E$ is the one of smooth $1$-forms $\T^*\M_\CC := \T^*\M+ i\T^*\M$ and the Hermitian scalar product $\gamma$ is the indefinite one induced by the metric $g$  in $\T^*\M_\CC$, i.e., $\gamma= g^\sharp$. In the general case, a {\em Klein-Gordon operator} $\N$ is by definition a second-order operator on the smooth sections of $\E$ which is  {\em normally hyperbolic}  \cite{Ba,BaGi}:  its principal symbol $\sigma_\N$ satisfies
	\begin{equation*}
 \sigma_\N(\xi)=-g^\sharp(\xi,\xi)\,\Id_\E \quad \mbox{for all $\xi\in\T^*\M$, where $\Id_\E$ is the identity automorphism of $\E$.}
	\end{equation*}
	$\N$ is also required to be 
{\em formally selfadjoint} with respect to the Hermitian  scalar product (generally non-positive!) induced on the space of complex sections $\ff$ by $\gamma$ and  the volume form $\vol_g$,
$$(\ff|\fg):= \int_\M \overline{\ff_a}(x)\gamma^{ab}(x) \fg_b(x) \vol_g(x)\:.$$
The scalar complex Klein-Gordon field is encompassed by  simply taking $\CC$ as canonical fiber of $\E$ and using the trivial {\em positive} scalar product.

The requirements on the  bi-distributions $\Lambda_2^\pm$ are, where $\G_\N$ is the {\em causal propagator} of $\N$,
\begin{align*}
(1) & \quad \N_x \Lambda^\pm_{2}(x,y)= \Lambda^\pm_{2}(x,y) \N_y =0  \,\mbox{  and  }   \,  \Lambda_2^+ - \Lambda_2^- = -i\G_\N;\\
(2) & \quad\Lambda^\pm_{2}(\bar{\ff},\ff)  \geq 0\:, \quad \mbox{where  $\Lambda^\pm(\bar{\ff},\ff)=0$ implies $\ff= \N \fg$ for a compactly supported section $\fg$;}\\
(3) & \quad WF(\Lambda_2^\pm) = \{(x,k_x;y,-k_y)\in T^*\M^2\backslash\{0\}\:|\:(x,k_x)\sim_{\parallel}(y,k_y), k_x\triangleright0\}\:.
\end{align*}
The second part of (1) corresponds to the \emph{canonical commutations relations}, the first part is the "on-shell" condition, while condition~(2) is the \emph{positivity} requirement on two-functions. Then, the {\em Gelfand--Naimark--Segal construction}  gives rise to  a $*$-representation of $\mathcal{A}_g$ in terms of densely defined operators in a 
Hilbert space which, as a consequence of the above requirements (1) and (2)  and the Wick rule, is a Fock space. Here  $\omega$ is the  expectation value referred to vacuum state and the action of the image of the representation on the vacuum state produces  the dense invariant  domain of the representation itself.
Requirement  (3) is the celebrated \emph{Hadamard condition} (also known as the {\em microlocal spectrum condition}) which ensures the correct short-distance behavior of the $n$-point functions of the  field. This condition has a long history which can be traced back to \cite{FSW}, passing to \cite{KW} and \cite{Rad1,Rad2} (see \cite{IgorValter} for a review). It plays a crucial role in various contexts of quantum field theory  in curved spacetime. In particular, but not only, in  perturbative renormalization and semiclassical quantum gravity.  More recently, G\'erard and Wrochna in~\cite{gerard0,gerardKG}, proved that condition (1)-(3) can be controlled at the same time by using methods of pseudodifferential calculus in spacetimes {\em of bounded geometry} (see also the subsequent papers~\cite{gerardYM,gerard2,
gerard3,gerard4,gerard5}).

When dealing with {\em real} quantum fields, as in this work, for instance the  Klein-Gordon real vector  field $A$, a single bidistribution $\omega_2(x,y)$ is sufficient to define a quasifree state 
$\omega$:
$$\omega(\hat{\fa}(\ff)\hat{\fa}(\ff')) = \int_{\M\times \M} \: \omega_2(x,y)_{cd} \gamma^{ca}(x) \gamma^{db}(y)\ff_a(x) \ff'_b(y) \vol_g\otimes \vol_g$$
where $\hat{\fa}(\ff)=\hat{\fa}(\ff)^*$ is  the (algebraic) field operators smeared with smooth {\em real} compactly supported sections $\ff$ of a relevant {\em real} vector bundle $\E \to \M$, equipped with a fiberwise 
real symmetric non-degenerate (but not necessarily positive)  scalar product $\gamma$.
As before, a {\em Klein-Gordon operator}  $\N$ is  by definition a second-order differential operator on the smooth sections of $\E$ which is normally hyperbolic (same definition as for the complex case)  and formally selfadjoint with respect to the real symmetric scalar product (not necessarily positive) 
$$(\ff|\fg):= \int_\M {\ff_a}(x)\gamma^{ab}(x) \fg_b(x) \vol_g(x)\:.$$
In the case of the standard  real vector Klein-Gordon field (constructed out of the Hodge D'Alembertian or the Levi-Civita D'Alembertian) the  bundle is  exactly  $\T^*\M$,  equipped with a real symmetric non-degenerate but indefinite  fiberwise scalar product induced by the metric  $g$ on $\T^*\M$, namely $\gamma=g^{\sharp}$. The theory of the scalar real Klein-Gordon field is encompassed simply by taking $\RR$ as canonical fiber of $\E$ with trivial positive scalar product.

In the real case, defining the symmetric bilinear form $\mu(\ff,\ff'):= \frac{1}{2} (\omega_2(\ff,\ff')+ \omega_2(\ff',\ff))$,  conditions (1)-(3) are replaced by
\begin{align*}
(1)' & \quad \N_x \omega_{2}(x,y)= \omega_{2}(x,y) \N_y  =0  \,\mbox{  and  }   \omega_2(\ff,\ff') - \omega_2(\ff',\ff) =  i\G_\N(\ff,\ff')\:;\\
(2)' & \quad \mu_{2}(\ff,\ff)  \geq 0 \quad \mbox{where  $\mu(\ff,\ff)=0$ implies $\ff= \N \fg$ for a compactly supported section $\fg$;}\\
(3)' & \quad  |\G_\N(\ff,\ff')|^2 \leq 4\mu(\ff,\ff)\: \mu(\ff',\ff')\:;\\
(4)' & \quad WF(\omega_2^\pm) = \{(x,k_x;y,-k_y)\in T^*\M^2\backslash\{0\}\:|\:(x,k_x)\sim_{\parallel}(y,k_y), k_x\triangleright0\}\:.
\end{align*}
The apparently new continuity condition (3)' for the real case is actually embodied in the positivity condition (2) for the complex case \cite{gerardBook}.  As a matter of fact (2)' and (3)' together give rise to positivity of the whole state  $\omega$ on $\mathcal{A}_g$ induced by $\omega_2$  in the real case. Once again,  the GNS construction gives rise to a representation of the (complex) unital $*$-algebra $\mathcal{A}_g$ generated by the field operators $\hat{\fa}(\ff)$ exactly as in the complex case.

Since   the Klein-Gordon equations   are   normally hyperbolic, not only they are {\em Green hyperbolic}  so that the Green operators  $\G^\pm_\P$ and the causal propagator $\G_P=\G_P^+-\G_P^-$ can be therefore defined, but {\em the Cauchy problem} is also {\em  automatically} well posed  \cite{Ba,BaGi}. An important implication of this fact is that  the  two-point function of a quasifree state can be defined as a Hermitian  or real  bilinear form --  in the complex and real case respectively --  on the Cauchy data of solutions of the Klein-Gordon equation (e.g., see \cite{Norm}). We follow this route in the present paper and, to this end, we  will translate (1)-(3) and (1)'-(4)' in the language of Cauchy data. 

\subsection{Issues with the quantization of the Proca field} 
Most of the quantum  theories are described by {\em Green hyperbolic operators} \cite{Ba,BaGi}, as  Klein-Gordon operators $\N$ discussed above or the {\em Proca operator} \cite{Fewster, koProca},  studied in this work,  $$\P = \delta d + m^2$$ acting on smooth $1$-forms $A\in \Omega^1(\M)$ and where $m^2>0$ is a constant. These operators  are usually formally self-adjoint w.r.t. a (Hermitian or real symmetric) scalar   product induced by the analog $\gamma$  on the fibers of the relevant vector bundle.  In general $\gamma$ is {\em not positive definite}. Very common and physical examples are: the {\em standard vector} Klein-Gordon field, the  Proca field, the Maxwell field, more generally, the Yang-Mills field and also the linearized gravity. Referring to the Proca, and in general all 1-form fields, we have that $\gamma= g^\sharp$ is the inverse (indefinite!) Lorentzian metric of the spacetime $(\M,g)$. 

 Unfortunately,  in those situations,  the Hadamard condition  (4) and (5)' are  in conflict with the positivity of states, respectively, (3) and (2)'-(3)'. It is known  that for a  vectorial Klein-Gordon operator that is formally self-adjoint w.r.t. an {\em indefinite Hermitian/real symmetric  scalar product},  the existence of quasifree Hadamard states is forbidden (see  the comment after \cite[Proposition 5.6]{SV} and \cite[Section 6.3]{gerardYM}).

 The case of a (real) Proca field   seems to be  even more complicated at first glance. In fact,  on the one hand  differently from the Klein-Gordon operator, the Proca operator  is not even {\em normally hyperbolic} and this makes more difficult (but not impossible) the proof of the well-posedness of the Cauchy problem, in particular. On the other hand, similarly to the case of the vectorial  Klein-Gordon theory, the Proca theory  deals with an indefinite fiberwise scalar product.  
Actually, as we shall see in the rest of the work, {\em these two apparent drawbacks cooperate to permit the existence of  quasifree Hadamard states}.  Positivity of the two-point function $\omega_2$ is restored when dealing with a {\em constrained} space of Cauchy conditions that make well-posed the Cauchy problem.

In the present paper, we study the existence of quasifree Hadamard states  for the real  Proca field on a general globally hyperbolic spacetime.  A definition of Hadamard states for the Proca field  was introduced by Fewster and Pfenning in \cite{Fewster}, to study {\em quantum energy inequalities},
with a definition more involved than the one based on conditions (3) and (4)' above.  They also managed to prove that such states exist in globally hyperbolic spacetimes whose Cauchy surfaces are compact.

Differently from Fewster-Pfenning's definition, here we adopt a definition of Hadamard state which directly relies on conditions (3) and (4)' above and we consider a generic globally hyperbolic spacetime. At the end of the work, we actually prove that the two definitions of Hadamard states are substantially equivalent.

Before establishing that equivalence, using the technology of the M\o ller operators we introduced in \cite{Norm} for normally hyperbolic operators, and here extended to the Proca field, we prove the existence of quasifree Hadamard states in every globally hyperbolic spacetime, also in the case in which their Cauchy hypersurfaces are not compact. 

As a matter of fact,  it is enough to focus our attention on {\em ultrastatic} spacetimes of {\em bounded geometry}. In this class of spacetimes, we   directly work at the level of initial data for the Proca equation and we establish  the following, also by taking advantage of some technical results of spectral theory applied to {\em elliptic Hilbert complexes} \cite{BuLe}.
\begin{itemize}
\item[1.] The initial data of the Proca equations are a subspace $C_\Sigma$ of the initial data of {\em a couple} of  Klein-Gordon equations, one scalar and the other vectorial, however  both defined on bundles with fiberwise  {\em positive} real symmetric  scalar product;
\item[2.] The difference of a pair of certain  Hadamard two-point functions for two above-mentioned  Klein-Gordon fields becomes positive once that its arguments are  restricted to $C_\Sigma$.  There, it defines a two-point function $\omega_2$ for a quasifree state $\omega$ of the Proca field;
\item[3.] $\omega$  is also Hadamard since it is the difference of two two-point functions of Klein-Gordon fields which are Hadamard. 
They are Hadamard in view of known results of microlocal analysis of pseudodifferential operators on Cauchy surfaces of bounded geometry, for more details the interested reader can refer to \cite{gerardBook}.
\end{itemize}
Every field theory defined on a globally hyperbolic spacetime $(\M,g)$ is connected to one defined on an ultrastatic spacetime of bounded geometry
$(\RR\times \Sigma, -dt^2 +h)$
 through a M\o ller operator:  the associated M\o ller $*$-isomorphism between the algebras of Proca observables preserves the Hadamard condition. We therefore conclude that every globally hyperbolic spacetime $(\M,g)$ admits a Hadamard state for the Proca field. This state is nothing but the Hadamard state on $(\RR\times \Sigma, -dt^2 +h)$ pulled back to $(\M,g)$ by the M\o ller $*$-isomorphism.

One novelty of this paper is in particular a direct control of the positivity of the two-point functions, obtained by  spectral calculus of elliptic  Hilbert complexes. Some  microlocal property of the M\o ller operators then guarantees  the validity of the Hadamard condition without exploiting the classical so called {\em deformation argument}, or better, by re-formulating it into a new form in terms of M\o ller operators.

\subsection{Main results}
We explicitly  state here the principal results established in this paper referring, for the former, to the notions introduced in the previous section.
Below, $\G^\pm_{\P}$ denote the retarded and advanced Green operators of the Proca equation (\ref{ProcaA}),   we shall discuss in Section \ref{SECGREEN}. The symbol $\kappa_{g'g}$ denotes a linear fiber-preserving isometry from the spaces of smooth sections $\Gamma(\V_{g})$ to $\Gamma(\V_{g'})$ constructed in Section \ref{SECGREEN}. Here, $\V_g$ indicates the vector bundle of real $1$-forms over the spacetime $(\M,g)$ whose sections are the argument of the Proca operator $\P$. 

\begin{theorem-intro}[Theorems~\ref{remchain} and~\ref{causalpropR}]\label{thm:main intro 1}

Let $(\M,g)$ and $(\M,g')$ be globally hyperbolic spacetimes,   with associated real  Proca bundles  $\V_{g}$ and $\V_{g'}$ and Proca operators   $\P, \P'$.\\
	If the metric are paracausally related $g\simeq g'$, then  there exists a $\RR$-vector space isomorphism $\R : \Gamma(\V_{g}) \to \Gamma(\V_{g'})$, called {\bf M\o ller operator} of $g,g'$ (with this order), such that the following facts are true.
	\begin{itemize}
	\item[(1)] The restriction, called {\bf M\o ller map}
\begin{align*}
 \S^0 := \R|_{ \Ker_{sc}(\P)}  :  \Ker_{sc}(\P) \to   \Ker_{sc}(\P')  
 \end{align*}
is  well-defined vector space isomorphism with inverse  given by
\begin{equation*} (\S^0)^{-1} := \R^{-1}|_{ \Ker_{sc}(\P')}  :  \Ker_{sc}(\P') \to   \Ker_{sc}(\P) \,.
\end{equation*}
\item[(2)] It holds $\kappa_{gg'} \P'\R = \P $.
		\item[(3)]   The causal propagators 
		$\G_\P:= \G_\P^+-\G_\P^-$ and $\G_{\P'}:= \G_{\P'}^+-\G_{\P'}^-$, respectively
		of $\P$ and $\P'$, satisfy $\R \G_\P  \R^{\dagger_{g g'}} = \G_{\P'}\,.$
		\item[(4)] It holds  $
		\R^{\dagger_{g g'}} \P'\kappa_{g'g}|_{\Gamma_c(\V_g)}= \P|_{\Gamma_c(\V_g)}$, where the adjoint $^{\dagger_{gg'}}$ is defined in Definition \ref{defadjoint}.
	\end{itemize}	
\end{theorem-intro}

The next result (Theorem 2) permits us  to promote $\R$ to a  $*$-isomorphism $\mathcal{R}$ of the algebras of field operators $\mathcal{A}$, $\mathcal{A}'$
respectively associated to the paracausally related metrics $g$ and $g'$, with the associated $\P,\P'$, and
 generated by respective Hermitian  field operators 
$\fa(\ff)$ and $\fa'(\ff')$ with $\ff,\ff'$ compactly supported smooth real sections of $\V$. We will introduce these notions 
in Section \ref{SECALGEBRAS}.
These field operators satisfy respective CCRs
$$[\fa(\ff), \fa(\fh)]= i \G_\P(\ff,\fh) \II\:, \quad [\fa'(\ff'), \fa'(\fh')]= i \G_{\P'}(\ff',\fh')\II'$$
and the said unital $*$-algebra isomorphism $\mathcal{R}: \mathcal{A}' \to \mathcal{A}$ is uniquely determined by the requirement 
$$\mathcal{R}(\fa'(\ff)) = \fa(\R^{\dagger_{gg'}} \ff)\:, \quad \ff \in \Gamma_c(\V_{g'})\:.$$
The final important result regards the properties of $\mathcal{R}$ for the algebras of a pair of paracausally related metrics $g,g'$ when it acts on the states $\omega : \mathcal{A}\to \CC$, $\omega' : \mathcal{A}'\to \CC$ of the algebras in terms of pull-back.  
$$\omega' = \omega \circ \mathcal{R}\:.$$
As is known, the most relevant (quasifree)  states in algebraic QFT are {\em Hadamard states} characterized by  the {\em microlocal spectrum condition} valid for the wavefront set of their  two-point functions or, equivalently, an universal short distance structure of the distribution defining the two-point function.
A  definition of Hadamard state for the Proca field was first stated by Fewster and Pfenning in \cite{Fewster} and corresponds to Definition~\ref{Hadamard} in this paper.  That definition requires the existence of a bisolution of the {\em Klein Gordon} field satisfying the   microlocal spectrum condition. This bisolution is next used to construct the two-point function of the Proca field.
 Differently, in this work we adopt  a  direct definition (Definition~\ref{HadamardMMV})  which only requires  the  validity  of the  microlocal spectrum condition directly  for the two-point function of the Proca two-point function. We also prove that our  definition, exactly as it happens for   Fewster and Pfenning's definition, satisfies some physically relevant properties. In addition to these general results, we also prove that {\em the Hadamard property is preserved by the M{\o}ller operators} as one of main results of this work. 

\begin{theorem-intro}[Theorem~\ref{thm:main intro Had}]\label{thm:main intro 2}
 Let  $g,g'$ be paracausally related metric and consider the corresponding   Proca operators $\P,\P'$. Finally refer to the associated on-shell  $CCR$-algebras $\mathcal{A}$ and $\mathcal{A}'$.\\
 Let
$\omega:\mathcal{A}\to\CC$ be a quasifree Hadamard state.  The pull-back state $\omega':\mathcal{A'}\to\CC$ by
$\omega'=\omega\circ\mathcal{R}$ satisfies,
\begin{enumerate}
\item $\omega'$ is a well-defined state;
\item $\omega'$ is quasifree;
\item $\omega'$ is a Hadamard state.
\end{enumerate}
\end{theorem-intro}

Attention is next focused on the {\em existence problem} of quasifree Hadamard states for the real Proca field in a {\em generic} globally hyperbolic spacetime.
The technology of M{\o}ller operators allows us to reduce the construction of Hadamard
states for the real  Proca field to the special case of an {\em ultrastatic spacetime}  $(\RR\times \Sigma, -dt^2+h)$. In this class of spacetimes, if  assuming the further  geometric hypothesis of {\em bounded geometry},  we provide a direct construction of a Hadamard state just  working on the space of initial data $C_\Sigma $ for the Proca equation  $\P A =0$ where $A\in \Gamma(\V_g)$ has compact Cauchy data. Here,  $A$ decomposes as $A= A^{(0)}dt +A^{(1)}$, where $A^{(0)}$ and $A^{(1)}$ and are, respectively, a $0$-form and a $1$-form on $\{t\}\times \Sigma$.  As we shall prove, this  space of initial data  is actually {\em constrained} in order to satisfy the existence and uniqueness theorem for the Cauchy problem:  
\begin{equation*}
 C_\Sigma := \left\{ (a^{(0)},\pi^{(0)}, a^{(1)},\pi^{(1)})   \in \Omega_c^0(\Sigma)^2 \times \Omega_c^1(\Sigma)^2  \:\left|\: \pi^{(0)}  =- \delta^{(1)}_h a^{(1)} \:, \quad   (\Delta_h^{(0)} + m^2) a^{(0)} = \delta^{(1)}_h \pi^{(1)}\right. \right\}\:,
\end{equation*}
where $(a^{(0)},\pi^{(0)}) := (A^{(0)}, \partial_t A^{(0)})|_{t=0}$ and $(a^{(1)},\pi^{(1)}) := (A^{(1)}, \partial_t A^{(1)})|_{t=0}$.

\begin{theorem-intro}[Propositions~\ref{PROPHAD1} and~\ref{prop:ultraHada}] \label{thm:intro intermezzo}
Consider the  $*$-algebra $\mathcal{A}_g$ of the real Proca field on the ultrastatic spacetime $(\M,g)= (\RR\times \Sigma, -dt\otimes dt+h)$, with $(\Sigma, h)$ a smooth complete Riemannian manifold.
 Let $\eta_0:=-1$, $\eta_1:=1$ and $h^\sharp_{(j)}$ denote the standard inner product of $j$-forms on $\Sigma$ induced by $h$.
 Then the two-point function
\begin{equation}\nonumber 
\omega_\mu(\fa(\ff)\fa(\ff')) = \omega_{\mu2}(\ff,\ff') := \mu(A, A') + \frac{i}{2}\sigma^{(P)}(A, A')\ 
\end{equation}
defines a quasifree state $\omega_\mu$  on  $\mathcal{A}_g$ where $ \ff,\ff' \in \Gamma_c(\V_g) $. Above
$$ A=\G_\P\ff \:,\qquad A'=\G_\P\ff' \:, \qquad\sigma^{(P)}(A, A') =   \int_\M g^\sharp(\ff, \G_\P \ff') \:  \vol_g $$
$$\mu(A,A') := \sum_{j=0}^1 \frac{\eta_j}{2} \int_\Sigma   h_{(j)}^\sharp(\pi^{(j)}, (\overline{\Delta^{(j)} + m^2})^{-1/2} \pi^{(j)'})    +  h^\sharp_{(j)}  (a^{(j)},  (\overline{\Delta^{(j)} + m^2})^{1/2} a^{(j)'}  )\: \vol_h $$
 where $\Delta^{(j)}$ is the Hodge Laplacian for compactly supported real smooth $j$-forms on $(\Sigma, h)$.\\
Finally, $\omega_\mu$ is Hadamard if $(\Sigma,h)$ is of bounded geometry.
\end{theorem-intro}
Above the bar denotes the closure of the considered operators defined in suitable $L^2$-spaces of forms according to the theory of elliptic Hilbert complexes.

Using the fact that every globally hyperbolic spacetime is paracausally related to an ultrastatic spacetime with bounded geometry and 
 combining the two  previous Theorems, we can conclude that Proca fields can be quantized in any globally hyperbolic spacetime and admit Hadamard states.

\begin{theorem-intro}\label{thm:main intro 3}
Let $(\M, g)$ be a globally hyperbolic spacetime and refer to the
associated $CCR$-algebras $\mathcal A_g$ of the real Proca field. Then there exists a quasifree  Hadamard state on $\mathcal A_g$.
\end{theorem-intro}

Eventually,  coming  back to the alternative definition of Hadamard states proposed by Fewster and Pfenning in \cite{Fewster},  we prove an almost complete  equivalence theorem, which is the last main achievement of this work.

\begin{theorem-intro}[Theorem~\ref{teoequiv}]\label{thm:main intro 4}
Consider the globally hyperbolic spacetime $(\M,g)$ and a quasifree  state $\omega:\mathcal{A}_g\to\CC$ 
for the Proca algebra of observables on $(\M,g)$ with two-point function $\omega \in \Gamma_c'(\V_g\boxtimes \V_g)$.   The following facts are true.
\begin{itemize}
\item[(a)] If $\omega$ is Hadamard according to Fewster and Pfenning, then  it is also Hadamard according to Definition~\ref{HadamardMMV}. 
\item[(b)] If $(\M,g)$ admits a Hadamard state according to Fewster and Pfenning  and $\omega$ is Hadamard according to Definition~\ref{HadamardMMV},  then $\omega$ is Hadamard in the sense of Fewster-Pfenning's  definition.
\end{itemize}
\end{theorem-intro}
The existence of Hadamard states according to Fewster-Pfenning's definition was proved in \cite{Fewster} for spacetimes with compact Cauchy surfaces. For these spacetimes the equivalence of the two definitions  is complete.

\subsection{Structure of the paper}

The paper is structured as follows. In Section~\ref{SECGREEN} we will provide a detailed analysis of the M\o ller maps and the M\o ller operator for classical Proca fields. In particular, we will analyze the relation between the M\o ller operators and the causal propagators of Proca operators on paracausally related globally hyperbolic spacetimes. In Section~\ref{SECALGEBRAS} we will extend the   
action of the M\o ller operators to a $*$-isomorphism of the $CCR$-algebras of free Proca fields. This will allow us to pullback quasifree Hadamard states preserving the microlocal spectrum condition. In this section we also analyze the general properties of Hadamard states including their existence. The explicit construction of Hadamard states in an ultrastatic spacetime is performed in Section~\ref{sec:existence}. In Section~\ref{SECHADFP} we show that the microlocal spectrum condition is essentially equivalent to the definition of Hadamard states proposed by Fewster and Pfenning. Finally, we
conclude our paper with Section~\ref{sec:concl}, where open issues and future prospects are presented.

\subsection*{ Acknowledgments}
We are grateful to Nicol\`o Drago, Chris Fewster,   Christian G\'erard, and Igor Khavkine  for helpful discussions related to the topic of this paper.    This work was written  within the activities of the INdAM-GNFM

\subsection*{Funding}
V.M. and D.V. acknowledge the support of the INFN-TIFPA national project ``Bell''.
S.M. acknowledges the support of the INFN  and the GNFM-INDAM.

\section{Mathematical setup}	\label{sec2}

\subsection{Conventions and notation of geometric tools in spacetimes}
Throughout all the paper the symbols $\subset$ and $\supset$ allow the case $=$.\\
We explicitly adopt the signature $(-,+,\cdots, +)$ for Lorentzian metrics.

Throughout  $(\M,g)$ denotes a {\bf spacetime}, i.e., a paracompact, connected, oriented, time-oriented, smooth,
Lorentzian manifold $\M$, whose Lorentzian metric is $g$.  As in \cite{Norm}, the Lorentzian metrics $g$ of  spacetimes  are hereafter  supposed to be  equipped with their own  temporal orientation.

All considered spacetimes  $(\M,g)$ are also {\bf globally hyperbolic}. In other words, 
  a (smooth) {\bf Cauchy  temporal function} $t:\M\to\RR$ exists. By definition $dt$ is timelike, past directed and 
\begin{align*}
	(\M,g)  \quad \mbox{is isometric  to}\quad (\RR\times\Sigma, g')
	\quad \mbox{with metric}\quad 
	g' = -\beta d t\otimes dt+h_t\,,
\end{align*}
where $\beta:\RR \times \Sigma \to \RR$ is a smooth positive function, $h_t$ 
is a Riemannian metric on each slice
$\Sigma_t:=\{t\} \times \Sigma$ varying smoothly with $t$, and these slices are smooth 
{\bf spacelike Cauchy hypersurfaces}. By definition they are   achronal sets intersected 
exactly 
once by every inextensible timelike curve (see~\cite{Minguzzi} for a recent up-to-date survey on the subject).

According to \cite{Norm},  given two globally hyperbolic metrics $g$ and $g'$ on $\M$ ,  $g \preceq g'$ means that $V_p^{g+} \subset V_p^{g'+}$ {\em for all $p\in \M$}, where  $V_p^{g+}\subset \T_p\M$ is the open cone of future directed timelike vectors at $p$ in $(\M,\g)$.

Two globally hyperbolic metrics $g$ and $g'$ on $\M$  are {\bf paracausally related} , written   $g\simeq g'$,  if 
there exists a finite sequence of globally hyperbolic  metrics $g_1=g, g_2\ldots, g_n = g' $ on $\M$ such that for each  pair of consecutive metrics either
$$ g_{k} \preceq g_{k+1}  \qquad\text{ or } \qquad 
 g_{k+1} \preceq g_{k}\:.$$  For a discussion on this notion, its properties, and examples we refer to \cite[Section 2]{Norm}. 

 We henceforth denote by $\Gamma(\E)$ the real vector space of smooth  sections of any real vector bundle $\E \to \M$. More precisely, 
as in~\cite{Norm}, we denote with $\Gamma_c(\E), \Gamma_{fc}(\E),\Gamma_{pc}(\E),\Gamma_{sc}(\E)$  the space of sections respectively with {\bf compact support}, {\bf future-compact} (i.e. whose support stays before a smooth spacelike Cauchy surface), {\bf past-compact}  (i.e. whose support stays after  a smooth spacelike Cauchy surface), and {\bf spatially-compact support}  (i.e. whose support
on every smooth spacelike Cauchy surface  is compact).

If $\textsf{E}\rightarrow \textsf{M}$ and  $\textsf{E}'\rightarrow \textsf{M}'$ are two vector bundles,
$\textsf{E}\boxtimes \textsf{E}'$ denotes the  {\bf external tensor product} of these vector bundles. This vector bundle has base $\textsf{M}\times 
\textsf{M}'$ and fiber at $(p,p')$ given by the tensor products of the respective fibers at $p\in \textsf{M}$ and  $p'\in \textsf{M}'$ respectively.
If $\mathfrak{f} \in \Gamma(\textsf{E})$ and  $\mathfrak{f}' \in \Gamma(\textsf{E}')$, the section $\mathfrak{f}\otimes \mathfrak{f'} \in
 \Gamma(\textsf{E}\boxtimes \textsf{E}')$ is defined by $\mathfrak{f}\otimes \mathfrak{f'}(p,p'):= \mathfrak{f}(p)\otimes \mathfrak{f'}(p')$.
The tensor product of linear operators acting on sections of an external product bundle are denoted by $\otimes$.

\subsection{Smooth  forms, Hodge operators,  and the Proca equation}

In this work we frequently  deal with real smooth $k$-forms $\ff,\fg \in \Omega^k(\M)$,
where $k=0,\ldots, n= \dim \M$ (and one usually adds $\Omega^{n+1}(\M) = \Omega^{-1}(\M) = \{0\}$).
 The {\bf Hodge} {\em real}  {\bf inner product} can be computed by integrating the fiberwise product with respect to the volume form induced by $g$:
\begin{equation*}
( \ff| \fg )_{g, k} := \int_\M \ff  \wedge * \fg  =  \int_\M  g_{(k)}^\sharp(\ff,\fg)\: \vol_g \:,
 \end{equation*}
where at least one of the two forms has compact support and $g_{(k)}^\sharp$ is the natural inner product of $k$-forms induced by  $g$. This symmetric real scalar product $( \cdot| \cdot )_{g, k} $ is always  non-degenerate but it is not positive when $g$ is Lorentzian as in the considered case. It is  positive when $g$ is Riemannian.
If $k=1$, we simply write
\begin{equation}( \ff| \fg )_g = \int_\M g^\sharp(\ff,\fg)\: \vol_g\:.\label{hodge} \end{equation}
In this context,  $d^{(k)}:\Omega^k(\M) \to \Omega^{k+1}(\M)$ is the exterior derivative  and $\delta_g^{(k)}: \Omega^{k}(\M) \to \Omega^{k-1}(\M)$ 
 is the codifferential operator acting on the relevant spaces of smooth $k$-forms $\Omega^k(\M)$ on $\M$ depending on the metric $g$ on $\M$.  $d^{(k)}$ and $\delta_g^{(k+1)}$ are the {\bf formal adjoint} of one another with respect to the Hodge product (\ref{hodge}) in the sense that
$$( d^{(k)}\ff| \fg)_{g,k+1}  = ( \ff| \delta_g^{(k+1)} \fg)_{g, k} \:,  \quad \forall \ff \in  \Omega^k(\M)\:, \: \forall \fg \in \Omega^{k+1}(\M)\quad  \mbox{if $\ff$ or $\fg$ is compactly supported.} $$
In the rest of the paper we will often omit the indices$_{g,k}$ and  $^{(k)}$ referring to the metric and the  order  of the used forms,  when the choice of  the used metric and order will be obvious from the context.

If  $(\M,g)$ is globally hyperbolic,  we call {\bf Proca bundle} the real  vector bundle $\V_g := (\T^*\M, g^{\sharp})$ obtained by endowing the cotangent bundle with the fiber metric given by the dual metric $g^{\sharp}$ (also appearing in (\ref{hodge})) defined by
$$g^\sharp (\omega_{p},\omega'_{p}) := g(\sharp{\omega}_p,\sharp{\omega}'_p)\quad \mbox{for every   $\omega,\omega' \in\Gamma(\T^*\M)$ and $p\in \M$,}$$
where $\sharp:\Gamma(\T^*\M)\to \Gamma(\T\M)$ is the standard musical isomorphism. 

By construction  $\Gamma(\V_g) = \Omega^1(\M)$ and $\Gamma_c(\V_g) = \Omega_c^1(\M)$. Here and henceforth $\Omega^k_c(\M)\subset \Omega^k(\M)$ is the subspace of compactly supported real smooth $k$-forms on $\M$.

The formally selfadjoint {\bf Proca operator}  $\P$ on $(\M,g)$ is    defined by choosing a (mass) constant $m>0$, {\em the same  for all globally hyperbolic metrics we will consider on $\M$ in this work},
 \begin{equation}
	\P=\delta d + m^2:  \Gamma(\V_g) \to \Gamma(\V_g), \label{PROCAOP}
	\end{equation}
where $d:= d^{(1)}$, $\delta:= \delta_g^{(2)}$.
Actually $\P$ depends also  on $g$, but we shall not indicate those dependencies in the notation for the sake of shortness. 

The {\bf Proca equation} we shall consider in this paper reads
\begin{equation}
\P A =0 \quad \mbox{for $A\in \Gamma_{sc}(\V_g)$} \label{ProcaA}\:,
\end{equation}
where, as said above, $\Gamma_{sc}(\V_g)$ is the space of real smooth $1$-forms which have compact support on the Cauchy surfaces of the globally hyperbolic spacetime $(\M,g)$.

\section{M\o ller Maps and M\o ller Operators}\label{SECGREEN}

The construction of the so-called M\o ller operator for hyperbolic PDEs (coming from the realm of quantum field theories on curved spacetimes) has been studied extensively in various contexts in Quantum Field Theory, see e.g.~\cite{Moller, DHP, FPMoller,DefArg1,Norm,DiracMoller}. The key idea was to inspired by the
scattering theory: Starting with two ``free theories'' described  by the space of solutions of {\em normally hyperbolic operators} (see (\ref{NOP}) below) $\N_0$ and $\N_1$ in corresponding  spacetimes $(\M,g_0)$ and $(\M,g_1)$, respectively,  we connected them through  an ``interaction spacetime'' $(\M,g_\chi)$ with a ``temporally localized'' interaction defined by interpolating  the two metrics by means of a smoothing function $\chi$. Here we need two M{\o}ller maps: $\Omega_+$ connecting $(\M,g_0)$ and $(\M, g_\chi)$ --
which reduces to the identity in the past when $\chi$ is switched off --
and a second M{\o}ller map connecting $(\M,g_\chi)$ to $(\M,g_1)$ -- which reduces to the identity in the future when $\chi$ constantly takes the value $1$. The ``$S$-matrix'' given by the composition 
$\S :=\Omega_-\Omega_+$
 will be the M\o ller map connecting $\N_0$ and $\N_1$. 
 
  As remarked in~\cite[Section 6]{Norm}, all the results concern vector-valued normally hyperbolic operators  acting on real vector bundles  whose fiber metric  does not depend on
the globally hyperbolic metrics $g$ chosen on $\M$. These operators are also assumed to be formally selfadjoint with respect to the associated real symmetric  scalar product on the sections of the bundle. 
As already pointed out in the introduction, to  quantize the theory defining quantum states on an associated $*$-algebra of observables, the fiberwise metric on $\E$ should be  assumed to be {\em positive}.

This section aims to extend the construction of the M\o ller operators to Proca fields.
The main difficulties we have to face with respect to the case of the Klein-Gordon equation  are the following:
\begin{itemize}
\item the fiber metric of the Proca bundle depends on the underlying globally hyperbolic metrics $g$ chosen on $\M$ (and it  is not positive definite);
\item Proca operators are not normally hyperbolic.
\end{itemize}
The next two sections are devoted to tackle these technical issues before starting with the construction of the M\o ller maps.

\subsection{Linear fiber-preserving isometry }
As said above, to construct M\o ller maps for the Proca field  we should be able to compare different fiberwise metrics on  $\T^*\M$ when we change the metric $g$ on $\M$. This will be done by defining suitable fiber preserving isometries.

If $g$ and $g'$ are globally hyperbolic on $\M$ and $g \simeq g'$, it is possible to define
a linear fiber-preserving isometry from $\Gamma(\V_{g})$ to $\Gamma(\V_{g'})$ we denote with $\kappa_{g'g}$ and we shall take advantage of it very frequently in the rest of this work. In other words, if $\ff \in \Gamma(\V_{g})$, then $\kappa_{g'g} \ff \in \Gamma(\V_{g'})$, the map $\kappa_{g'g}: \Gamma(\V_{g})\to \Gamma(\V_{g'})$ is $\mathbb{R}$ linear, 
and $$g'^\sharp((\kappa_{g'g} \ff)(p), (\kappa_{g'g} \fg)(p))= g^\sharp(\ff(p), \fg(p))\quad \forall p\in \M\:.$$
Let us describe the (highly non-unique) construction of $\kappa_{gg'}$.
If $\chi \in C^\infty(\M; [0,1])$ and $g_0 \preceq g_1$,   then \begin{equation}g_\chi := (1-\chi) g_0 + \chi  g_1\label{gCHI}\end{equation}
is a Lorentzian metric globally hyperbolic on $\M$ (see ~\cite[Section 2]{Norm} for details) and satisfies 
 \begin{equation*}
  g_0\preceq g_\chi \preceq g_1 \:.
\end{equation*}

Now consider the product manifold  $\N:= \mathbb{R} \times \M$, equipped with the indefinite non-degenerate  metric 
$$h:= -dt\otimes dt + g_t\:,$$
 where $g_t = (1-f(t))g_0 + f(t) g_\chi$
and $f: \mathbb{R} \to [0,1]$ is smooth and $f(t)=0$ for $t\leq 0$, $f(t)=1$ for $t\geq 1$.
Notice that $g_t$ is Lorentzian according to \cite{Norm} because $g_0 \preceq g_\chi$ and $h$ is indefinite non-degenerate  by construction. At this point $\tilde{\kappa}_{\chi 0}: \T\M \to \T\M$ is the fiber preserving diffeomorphism such that $\tilde{\kappa}_{\chi 0}(x,v)$ is the parallel transport form $(0,x)$ to $(1,x)$ of $v\in \T_x\M \subset \T_{(0,x)}\N$ along the complete $h$-geodesic $\mathbb{R} \ni t \mapsto (t,x) \in \N$. Standard theorems on joint smoothness of the flow of ODEs depending on parameters assure that $\tilde{\kappa}_{\chi 0}: \T\M \to \T\M$ is smooth. Notice that 
$\tilde{\kappa}_{\chi 0}|_{\T_x\M}: \T_x\M \to \T_x\M$
is also a $h$-isometry from known properties of the parallel transport and thus it is a $g_0,g_\chi$-isometry by construction  because $h_{(t,x)}(v,v) = g_t(v,v)$ if $v\in \T_{x}\M \subset \T_{(t,x)}\N$. Taking advantage of the musical isomorphisms, $\tilde{\kappa}_{\chi 0}$ induces  a fiber-bundle  map $\kappa_{\chi 0} : T^*\M \to T^*\M$ which can be seen as a map on the sections of  $\Gamma(\V_{g_0})$  and producing sections of $\Gamma(\V_{g_\chi})$,
preserving the metrics $g^\sharp_0$, $g^\sharp_\chi$. 
Then the required Proca bundle isomorphism ${\kappa}_{g'g} ={\kappa}_{g_1g_0}$ is defined  by composition:
\begin{equation*}
\kappa_{1,0}=\kappa_{1 \chi}\kappa_{\chi 0}.
\end{equation*}
where $\kappa_{1 \chi}$ from $\Gamma(\V_{g_\chi})$ to $\Gamma(\V_{g_1})$ is defined analogously to $\kappa_{\chi 0}$. 
The general case $g\simeq g'$ can be defined by  composing the  fiber preserving linear isometries $\kappa_{g_{k+1}g_k}$ or $\kappa_{g_k,g_{k+1}}$.

\subsection{Klein-Gordon operator associated to a Proca operator and Green operators}
We pass to tackle the issue of normal hyperbolicity of $\P$. As we shall see here,  it is not really necessary to construct the M\o ller maps, and the weaker requirement of {\em Green hyperbolicity} is sufficient.

 Let  $\N$ be  the {\bf Klein-Gordon operator} associated to the Proca operator $\P$ (\ref{PROCAOP}) acting on $1$-forms
\begin{equation}\N := \delta d  + d \delta + m^2 : \Gamma(\V_g) \to  \Gamma(\V_g)\:.\label{NtoP}\end{equation}
Notice that  this operator is {\bf normally hyperbolic}:  its principal symbol $\sigma_\N$ satisfies
	\begin{equation}
 \sigma_\N(\xi)=-g^\sharp(\xi,\xi)\,\Id_{\V_g} \quad \mbox{for all $\xi\in\T^*\M$, where $\Id_{\V_g}$ is the identity automorphism of $\V_g$.} \label{NOP}
	\end{equation}
Therefore the Cauchy problem for $\N$ is well-posed \cite{Ba,BaGi}.
Both $\N$ and $\P$ are formally selfadjoint with respect to the Hodge  scalar product (\ref{hodge}) on $\Omega_c^1(\M)= \Gamma_c(\V_g)$.

Since $m^2>0$ and $\delta_g^{(1)} \delta_g^{(2)}=0$, it is easy to prove that the Proca equation  (\ref{ProcaA}) is {\em equivalent} to the pair of equations
\begin{align}
\N A &= 0\:, \quad \mbox{for $A\in \Gamma_{sc}(\V_g)$} \label{PROCAEQ1}\:,\\
\delta A &= 0\:. \label{PROCAEQ2}
\end{align}
As already noticed, differently from $\N$, the Proca operator is not normally hyperbolic. However, it is {\bf Green hyperbolic} \cite{Ba, BaGi,aqft1} as $\N$, in particular there exist linear  maps, dubbed {\bf advanced Green operator} 
	$\G_\P^+\colon \Gamma_{pc}(\V_g)  \to  \Gamma(\V_g)$  and {\bf retarded Green operator} $\G_\P^-\colon  \Gamma_{fc}(\V_g)  \to  \Gamma(\V_g)$ uniquely defined by the requirements
	\begin{itemize}
		\item[(i.a)] $\G_\P^+ \circ \P\, \ff  = \P \circ \G_\P^+ \ff=\ff$ for all $ \ff \in  \Gamma_{pc}(\V_g)$ ,
		\item[(ii.a)]  $\supp(\G_\P^+ \ff ) \subset J^+ (\supp \ff )$ for all $\ff \in  {\Gamma}_{pc} (\V_g)$;
		\item[(i.b)]  $\G_\P^- \circ \P\, \ff  = \P\circ \G_{\P}^- \ff=\ff$ for all $ \ff \in  \Gamma_{fc}(\V_g)$,
		\item[(ii.b)]  $\supp(\G_\P^- \ff ) \subset J^- (\supp \ff )$ for all $\ff \in  {\Gamma}_{fc} (\V_g)$;
	\end{itemize}
The {\bf causal propagator} of $\P$ is defined as 
\begin{equation} \G_{\P}:= \G_{\P}^+-\G_{\P}^-  : \Gamma_c(\V_g) \to \Gamma_{sc}(\V_g)\:.\label{CAUSALPROP}\end{equation} 
All these maps are also continuous with respect to the natural topologies of the definition spaces \cite{aqft1}.
As a matter of fact (see \cite[Proposition 3.19]{aqft1}  and also \cite{BaGi}), the  advanced and retarded Green operator  $\G^\pm_\P : \Gamma_{pc/fc} (\V_g)\to \Gamma_{pc/fc}(\V_g)$ can be written as
\begin{equation*} 
\G^\pm_\P :=  \left(\Id + \frac{d\delta}{m^2}\right)  \G^\pm_\N =  \G^\pm_\N \left(\Id + \frac{d\delta}{m^2}\right) 
\end{equation*}
where $\G^\pm_\N$  are the analogous Green operators for the Klein-Gordon operator $\N$. Therefore 
\begin{equation} 
\G_\P :=  \left(\Id + \frac{d\delta}{m^2}\right)  \G_\N =  \G_\N \left(\Id + \frac{d\delta}{m^2}\right) \label{GPGN}\:.
\end{equation}
The fact that $\P$ is normally hyperbolic can be proved just by checking that the operators above satisfy the requirements which define the Green operators as stated above, using the analogous properties for $\G^\pm_\N$.

Eq. (\ref{GPGN}) and the analogous properties for $\G_\N$
entail 
\begin{equation}
\G_\P(\Gamma_c(\V_g)) = \{ A \in \Gamma_{sc}(\V_g)\:|\: \P A =0\}\label{RAN}\:.
\end{equation}
Indeed, if $\P A=0$ then $\N A=0$ and $\delta A=0$. If $A\in \Gamma_{sc}(\V_g)$, \cite[Theorem 3.8]{Norm} implies $A= \G_\N\ff$ for some $\ff \in \Gamma_c(\V_g)$, so that $A =   \left(\Id + \frac{d\delta}{m^2}\right) A = \G_\P\ff$ as said.\\
Furthermore, 
\begin{equation} \Ker \G_\P = \{\P\fg \:|\: \fg \in \Gamma_c(\V_g)\}\:.
\label{cruciallemma0} \end{equation}
Indeed,  if $\P A=0$ then $m^{2}  \left(\Id + \frac{d\delta}{m^2}\right)\P A = \N A=0$. If $A\in \Gamma_{sc}(\V_g)$,  again  \cite[Theorem 3.8]{Norm} implies that $A = \N \ff$ for some  $\ff \in \Gamma_c(\V_g)$. Since we also know that $\delta A =0$, the form (\ref{NOP}) of $\N$ yields $A=\P \ff$.
On the other hand, if $A= \P \ff$  for some  $\ff \in \Gamma_c(\V_g)$, then  $\G_\P A= \G_\P^+ \ff-  \G_\P^- \ff = \ff-\ff=0$.

On account of~\cite[Proposition 3.6]{Norm}, for any  smooth function $\rho:\M\to (0,+\infty)$ also $\rho\P$ is Green hyperbolic and $\G^\pm_{\rho\P}=\G^\pm_\P \rho^{-1}$.

\subsection{Proca M{\o}ller maps}
A {\bf  smooth Cauchy time function} in  a globally hyperbolic spacetime $(\M,g)$ relaxes the notion of temporal Cauchy function, it is a  smooth  map $t: \M \to \mathbb{R}$ such that
$dt$ is everywhere  timelike and past directed, the 
 level surfaces of $t$ are
 smooth spacelike Cauchy surfaces and $(\M,g)$ is isometric to $(\mathbb{R}
 \times \Sigma, h)$. Here,  $t$ identifies with the natural coordinate on $\mathbb{R}$ and the Cauchy surfaces of $(\M,g)$ identify with the sets $\{t\} \times \Sigma$.

From now on we indicate by $\N_0$,  $\N_1$, $\N_\chi$ the Klein-Gordon operators (\ref{NtoP}) on $\M$ constructed out of $g_0$, $g_1$ and $g_\chi$ respectively, where the globally hyperbolic metric $g_\chi$ is defined  as in (\ref{gCHI}) (and thus  $g_0\preceq g_\chi\preceq g_1$   \cite[Theorem 2.18]{Norm}) and depends on the choice of a function $\chi \in C_0^\infty(\M, [0,1])$.
 Similarly, $\P_0$,  $\P_1$, $\P_\chi$ denote the Proca operators (\ref{PROCAOP}) on $\M$ constructed out of $g_0$, $g_1$ and $g_\chi$ respectively.

We can state the first technical result.
\noindent \begin{proposition} \label{propNN}

\begin{itemize} Let $g_0,g_1$ be globally hyperbolic metrics satisfying $g_0\preceq g_1$ and let be $\chi\in C^\infty(\M; [0,1])$. Choose
\item[(a)]  a smooth  Cauchy time $g_1$-function $t: \M \to \mathbb{R}$ and $\chi \in C^\infty(\M; [0,1])$ such that $\chi(p)=0$ if $t(p) <t_0$ and $\chi(p)=1$ if $t(p)>t_1$ for given $t_0< t_1$;
\item[(b)]
a pair of smooth functions $\rho, \rho' : \M \to (0,+\infty)$  such that $\rho(p) =1$ for $t(p) < t_0$ and $\rho'(p) = \rho(p) =1$ if $t(p)>t_1$.
(Notice that $\rho=\rho'=1$ constantly is allowed.)\\
\end{itemize}

Then the following facts are true where $g_\chi$ is defined as in (\ref{gCHI}).
\begin{itemize}
\item[(1)] The operators
\begin{align*}
  \R_+&: \Gamma(\V_{g_0}) \to \Gamma(\V_{g_\chi}) \qquad \R_+:=\kappa_{\chi 0} - \G_{\rho\P_\chi}^+\left(\rho \P_\chi \kappa_{\chi0}  -\kappa_{\chi0} \P_0\right),   \\
  \R_-&: \Gamma(\V_{g_\chi}) \to \Gamma(\V_{g_1}) \qquad \R_-:=\kappa_{1\chi} - \G_{\rho\P_1}^-  \left(\rho'\P_1\kappa_{1\chi} - \rho\kappa_{1\chi}\P_\chi\right)
  \end{align*}
are linear space isomorphisms, whose inverses are given by
\begin{align*}
\R_+^{-1}&: \Gamma(\V_{g_\chi})\to \Gamma(\V_{g_0}) \qquad \R_+^{-1}=\kappa_{0\chi}+\G_{\P_0}^+(\rho\kappa_{0\chi}  \P_\chi- \P_0\rho\kappa_{0\chi}),\\
  \R_-^{-1}&: \Gamma(\V_{g_1})\to\Gamma(\V_{g_\chi}) \qquad  \R_-^{-1}:=\kappa_{\chi 1}+\G_{\rho\P_\chi}^- \left(\rho'\kappa_{\chi 1} \P_1- \rho\kappa_{1\chi}\P_\chi\right).
   \end{align*}
  By composition we define the \textbf{M{\o}ller operator}:
  $$\R:\Gamma(\V_{g_0})\to\Gamma(\V_{g_1}) \qquad \R:=\R_-\circ \R_+ ,$$
  which is also a linear space isomorphism.
\item[(2)] It holds 
\begin{equation*}\rho \kappa_{0\chi} \P_\chi \R_+ = \P_0\: \qquad \text{ and }  \:\qquad \rho' \kappa_{\chi1}\P_1 \R_- = \rho \P_\chi\,.
 \end{equation*}
and also
\begin{equation*} \rho \kappa_{0\chi} \P_\chi  = \P_0\R_+^{-1}\: \qquad \text{ and }  \:\qquad \rho' \kappa_{\chi1}\P_1=  \P_\chi \R_- ^{-1}\,.
 \end{equation*}
 \item[(3)]  If $\ff \in \Gamma(\V_{g_0})$ or $\Gamma(\V_{g_{\chi}})$ respectively, then 
\begin{align}
(\R_+ \ff)(p) &= \ff(p) \quad \mbox{for} \quad t(p) < t_0,   \label{pastequal}\\
(\R_- \ff)(p) &= \ff(p) \quad \mbox{for} \quad t(p) > t_1 \label{futureequal}\:.
   \end{align}
\end{itemize}

\end{proposition}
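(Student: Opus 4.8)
The plan is to reorganise the proof so that the two intertwining identities of item~(2) and the locality statement of item~(3) are established first, and then to deduce the bijectivity of item~(1) from them by a uniqueness argument, rather than by brute-force multiplication of the two Volterra-type series defining $\R_\pm$ and their inverses. The observation underlying everything is that each operator has the shape ``fibre isometry minus a Green-dressed interaction'', and that the interaction is temporally localised. Concretely, the interaction operator $V_+:=\rho\P_\chi\kappa_{\chi0}-\kappa_{\chi0}\P_0$ is a differential (hence support non-increasing) operator that vanishes wherever $t<t_0$: there $\chi\equiv0$ forces $g_\chi=g_0$, whence $\kappa_{\chi0}=\Id$, $\P_\chi=\P_0$ and $\rho\equiv1$, so that $V_+=\P_0-\P_0=0$. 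Thus $V_+\ff$ is always past-compact and $\G^+_{\rho\P_\chi}V_+\ff$ is well-defined; the symmetric remark shows $V_-:=\rho'\P_1\kappa_{1\chi}-\rho\kappa_{1\chi}\P_\chi$ vanishes for $t>t_1$ (now $\chi\equiv1$, $\rho=\rho'\equiv1$, $\kappa_{1\chi}=\Id$, $\P_1=\P_\chi$), so $V_-\ff$ is future-compact and $\G^-_{\rho\P_1}V_-\ff$ makes sense. The same check applied to the interaction terms in the displayed inverses shows those operators are well-defined too. This temporal localisation is the device that replaces the scattering-theoretic limits.

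First I would prove~(2). Applying $\rho\P_\chi$ to $\R_+$ and using the defining Green identity $\rho\P_\chi\,\G^+_{\rho\P_\chi}=\Id$ collapses the correction term, $\rho\P_\chi\R_+=\rho\P_\chi\kappa_{\chi0}-V_+=\kappa_{\chi0}\P_0$, and composing with the fibre isometry $\kappa_{0\chi}=\kappa_{\chi0}^{-1}$ (which commutes with the scalar $\rho$) yields $\rho\kappa_{0\chi}\P_\chi\R_+=\P_0$. The same manipulation for $\R_-$, now using the retarded identity $Q\,\G^-_Q=\Id$ for the relevant weighted operator $Q$, gives $\rho'\kappa_{\chi1}\P_1\R_-=\rho\P_\chi$; and applying $\P_0$ to the displayed $\R_+^{-1}$ and collapsing via $\P_0\,\G^+_{\P_0}=\Id$ returns the companion relation $\P_0\R_+^{-1}=\rho\kappa_{0\chi}\P_\chi$, similarly for $\R_-^{-1}$. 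Here the scaling identity $\G^\pm_{\rho\P}=\G^\pm_\P\,\rho^{-1}$ recorded in the previous subsection is what lets one absorb the positive weights $\rho,\rho'$. For~(3) I would invoke the support property $\supp(\G^+_{\rho\P_\chi}\psi)\subseteq J^+(\supp\psi)$ together with the fact that, since $g_0\preceq g_\chi\preceq g_1$, the function $t$ is simultaneously a Cauchy time function for all three metrics, so the nesting of cones gives $J^+_{g_\chi}(\{t\ge t_0\})\subseteq\{t\ge t_0\}$. As $\supp V_+\ff\subseteq\{t\ge t_0\}$, the Green correction vanishes for $t<t_0$ while $\kappa_{\chi0}=\Id$ there, whence $\R_+\ff=\ff$ for $t<t_0$, and dually $\R_-\ff=\ff$ for $t>t_1$. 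The identical argument gives the analogous past/future identities for $\R_\pm^{-1}$.

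For the bijectivity~(1) I would argue by uniqueness rather than by composing the two series. The key lemma is that a Proca solution vanishing in the far past is identically zero, \emph{with no support hypothesis}: if $\P A=0$ then applying $\delta$ and using $\delta^2=0$, $m^2>0$ gives $\delta A=0$, whence $\N A=\P A+d\delta A=0$; since $\N$ is normally hyperbolic and $A$ vanishes on the open region $\{t<t_0\}$ it has vanishing Cauchy data on $\Sigma_{t_0}$, so $A\equiv0$ by well-posedness of the Klein--Gordon Cauchy problem \cite{Ba,BaGi}. Now set $\Phi:=\R_+^{-1}\R_+$. By the two intertwining relations of~(2), $\P_0\Phi=(\P_0\R_+^{-1})\R_+=\rho\kappa_{0\chi}\P_\chi\R_+=\P_0$, so $\P_0(\Phi-\Id)\ff=0$ for every $\ff$; and by the past-identities of~(3) for $\R_+$ and $\R_+^{-1}$ one has $(\Phi\ff)(p)=\ff(p)$ for $t(p)<t_0$. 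Thus $(\Phi-\Id)\ff$ is a Proca solution vanishing in the past, and the lemma forces $\Phi=\Id$. The mirror computation on the $g_\chi$ side gives $\R_+\R_+^{-1}=\Id$, and the same scheme, using the retarded data at $t>t_1$, handles $\R_-$.

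The step I expect to be the genuine obstacle is the bookkeeping of the two distinct Green operators (for $\P_0$ versus $\rho\P_\chi$, and for $\P_\chi$ versus the weighted $\P_1$) together with the weights $\rho,\rho'$: one must match each weight to the scaling identity $\G^\pm_{\rho\P}=\G^\pm_\P\rho^{-1}$ so that every factor telescopes, and verify at each stage that the domains are respected (past/future-compactness of every interaction term). Running the whole argument first in the normalised case $\rho=\rho'\equiv1$ isolates the algebraic skeleton and makes the role of the weights transparent; reinstating general $\rho,\rho'$ is then a matter of inserting the scaling identity at the right places. A secondary point, used silently above, is that $\kappa_{\chi0}$ and its analogues genuinely reduce to the identity where the two metrics coincide, which follows from the construction of $\kappa$ by parallel transport in the auxiliary product metric $h$.
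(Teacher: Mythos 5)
Your proof is correct, and it genuinely diverges from the paper's in two of the three items. For item (3) the paper does not apply the support axiom of the Green operators directly to $\G^+_{\rho\P_\chi}V_+\ff$; instead it tests this section against arbitrary compactly supported $\fh$ with $\supp\fh\subset t^{-1}((-\infty,t_0))$ and uses the duality between $\G^+_{P}$ and $\G^-_{P^*}$ together with the support property of the latter. Your direct argument --- $\supp(V_+\ff)\subseteq\{t\ge t_0\}$, hence $\supp(\G^+_{\rho\P_\chi}V_+\ff)\subseteq J^+_{g_\chi}(\{t\ge t_0\})\subseteq J^+_{g_1}(\{t\ge t_0\})\subseteq\{t\ge t_0\}$ by cone nesting and monotonicity of the $g_1$-Cauchy time function --- is shorter and equally valid given axioms (ii.a)--(ii.b). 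The larger divergence is item (1): the paper expands $\R_+^{-1}\circ\R_+$ and cancels everything term by term using the Green identities, whereas you deduce $\R_+^{-1}\R_+=\Id$ from (2) and (3) via uniqueness of the Cauchy problem for the normally hyperbolic operator $\N$ obtained after eliminating $\delta A$ (uniqueness for arbitrary smooth solutions, with no support hypothesis, is indeed available from the local energy-estimate arguments of \cite{Ba,BaGi}, so your key lemma is sound). This is more conceptual and dispenses with the bookkeeping of the cancellation, at the price of also needing the past/future identities for $\R_\pm^{-1}$, which you correctly observe follow by the same localisation argument. One caveat when you execute the $\R_-$ case: with general weights the composite intertwining identity comes out as $\rho\,\P_\chi\,\R_-^{-1}\R_-=\rho\,\P_\chi$ rather than $\P_\chi\,\R_-^{-1}\R_-=\P_\chi$ (applying $\P_\chi$ to the displayed $\R_-^{-1}$ and using $\G^-_{\rho\P_\chi}=\G^-_{\P_\chi}\rho^{-1}$ shows that the paper's identity $\rho'\kappa_{\chi 1}\P_1=\P_\chi\R_-^{-1}$ is missing a factor of $\rho$ on the right, since otherwise it would be inconsistent with $\rho'\kappa_{\chi 1}\P_1\R_-=\rho\P_\chi$ and $\R_-^{-1}\R_-=\Id$ unless $\rho\equiv 1$); one must therefore divide by the strictly positive $\rho$ before invoking your uniqueness lemma. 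This is precisely the weight bookkeeping you flag as the delicate step, so your scheme goes through once that is done.
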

\begin{proof}
First of all, we notice that the operator $\R_{+}$ is well defined on the whole space $\Gamma(\V_{g_0})$ since for all sections $\ff\in\Gamma(\V_{g_0})$ we have that $( \P_\chi\frac{\kappa_{\chi0}}{\rho}  -\frac{\kappa_{\chi0}}{\rho} \P_0)\ff\in\Gamma_{pc}(\V_{g_1})$: indeed by definition, there exists a $t_0\in\RR$ such that on $t^{-1}(-\infty,t_0)$ and we have that 
$\P_{\chi}=\P_0$, $\kappa_{\chi 0}=\Id$ and $t$ is a smooth $g_1$-Cauchy time function. Moreover, since $g_{\chi}\preceq g_1$ it follows that $\Gamma_{pc}(\V_{g_1})\subset \Gamma_{pc}(\V_{g_{\chi}})=\D(G_{\P_{\chi}})$.

To prove (1), we can first notice that 
\begin{align*}
\R_+^{-1}\circ\R_+&=\left(\kappa_{0\chi}+\G_{\P_0}^+(\rho\kappa_{0\chi}  \P_\chi- \P_0\kappa_{0\chi})\right) \circ
 \left(\kappa_{\chi 0} - \G_{\rho\P_\chi}^+(\rho \P_\chi\kappa_{\chi0} -\kappa_{\chi0} \P_0)\right)\\
&=\Id-\kappa_{0\chi}\G_{\rho\P_\chi}^+\left(\rho \P_\chi\kappa_{\chi0} -\kappa_{\chi0} \P_0\right)+\G_{\P_0}^+(\rho\kappa_{0\chi}  \P_\chi- \P_0\kappa_{0\chi})\kappa_{\chi 0}\\
&-\G_{\P_0}^+(\rho\kappa_{0\chi}  \P_\chi- \P_0\rho\kappa_{0\chi})\G_{\rho\P_\chi}^+\left(\rho \P_\chi\kappa_{\chi0} -\kappa_{\chi0} \P_0\right).
\end{align*} 
\noindent To conclude it is enough to show that everything cancels out except the identity operator, but that just follows by using basic properties of Green operators and straightforward algebraic steps. We easily see that the last addend can be recast as:
\begin{align*}
&\G_{\P_0}^+\left(\rho\kappa_{0\chi}  \P_\chi- \P_0\kappa_{0\chi}\right)\G_{\rho\P_\chi}^+\left(\rho \P_\chi\kappa_{\chi0} -\kappa_{\chi0} \P_0\right)\\
&=\G^+_{\P_0}\rho\kappa_{0 \chi}\P_{\chi}\G_{\rho\P_\chi}^+\left(\rho \P_\chi\kappa_{\chi0} -\kappa_{\chi0} \P_0\right)-\G^+_{\P_0}\P_0\kappa_{0 \chi}\G_{\rho\P_\chi}^+(\rho \P_\chi\kappa_{\chi0} -\kappa_{\chi0} \P_0)\\
&=\G^+_{\P_0}\kappa_{0 \chi}\left(\rho \P_\chi\kappa_{\chi0} -\kappa_{\chi0} \P_0\right)-\kappa_{0 \chi}\G_{\rho\P_\chi}^+(\rho \P_\chi\kappa_{\chi0} -\kappa_{\chi0} \P_0),
\end{align*}
\noindent which fulfills its purpose.\\
A specular computation proves that $\R_{+}^{-1}$ is also a right inverse. Almost identical reasonings prove that $\R_-^{-1}$ is a two sided inverse of $\R_-$ which is also well defined, then bijectivity of $\R$ is obvious.

(2) follows by the following direct computation
\begin{align*}
&\rho\kappa_{0 \chi}\P_{\chi}\R_{+}=\rho\kappa_{0 \chi}\P_{\chi}\left(\kappa_{\chi 0} - \G_{\rho\P_\chi}^+\left(\rho \P_\chi \kappa_{\chi0}  -\kappa_{\chi0} \P_0\right)\right)\\
&=\kappa_{0 \chi}\kappa_{\chi 0}\P_0=\P_0.
\end{align*}

(3)  Let us prove (\ref{pastequal}).  In the following $P^*$ denotes the formal dual operator of $P$ acting on the sections of the dual bundle $\Gamma_c(V^*_g)$. 
If $\ff' \in \Gamma_c(\V_g^*)$ and  $\ff \in \Gamma_{pc}(\V_g)$ or 
$\ff \in \Gamma_{fc}(V_g)$ respectively,
\begin{equation}\label{eqGstar}
\int_\M  \langle\G_{P^*}^{-}\ff',\ff\rangle \: \mbox{vol}_g = \int_\M \langle\ff', \G_{P}^+\ff\rangle \: \mbox{vol}_g \:,\quad
\int_\M \langle\G_{P^*}^{+}\ff',\ff\rangle \: \mbox{vol}_g = \int_\M \langle\ff', \G_{P}^- \ff \rangle \: \mbox{vol}_g \:,
 \end{equation}
where 
$\G_{P}^{\pm}$ indicate the Green operators of $\P$ and $\G_{P^*}^{\pm}$ indicate the Green operators of $\P^*$.
Consider now a compactly supported  smooth section $\fh$ whose support is included in the set $t^{-1}((-\infty, t_0))$. Taking advantage of the Equation~\eqref{eqGstar},
 we obtain
$$\int_\M \langle \fh , \G_{\rho \P_\chi}^+ (\rho \P_\chi-\P_0) \ff \rangle \: \mbox{vol}_{g_\chi} = 
\int_\M \langle \G_{(\rho \P_\chi)^*}^- \fh ,  (\rho \P_\chi-\P_0) \ff \rangle \: \mbox{vol}_{g_\chi} =0$$
since $\mbox{supp}( \G_{(\rho \P_\chi)^*}^-\fh) \subset J^{g_\chi}_-(\mbox{supp}(\fh))$  and thus that support  does not meet $\mbox{supp}((\rho \P_\chi-\P_0) \ff)$ because 
$((\rho \P_\chi-\P_0) \ff)(p)$ vanishes if $t(p)< t_0$.   As  $\fh$ is an arbitrary smooth section compactly  supported in $t^{-1}((-\infty,t_0))$,
$$\int_\M \langle \fh, \G_{\rho \P_\chi}^+ (\rho \P_\chi-\P_0) \ff \rangle \: \mbox{vol}_{g_\chi} = 0$$
 entails  that $\G_{\rho \P_\chi}^+ (\rho \P_\chi-\P_0) \ff =0$ on $t^{-1}((-\infty,t_0))$. The proof of (\ref{futureequal}) is strictly analogous, so we leave it to the reader.
\end{proof}

Using Proposition~\ref{propNN}, we can pass to the generic case $g\simeq g'$.

\begin{theorem}\label{remchain}
Let $(\M,g)$ and $(\M,g')$ be globally hyperbolic spacetimes,   with associated Proca bundles  $\V_{g}$ and $\V_{g'}$ and Proca operators   $\P, \P'$.\\
If  $g \simeq g'$,
 then there exist (infinitely many) vector space isomorphisms,
 \begin{equation*}
 \R  : \Gamma(\V_{g}) \to \Gamma(\V_{g'})
  \end{equation*}
 such that 
 \begin{itemize}
 \item[(1)]
referring to the said domains,  
$$\mu \kappa_{gg'} \P'\R = \P $$
for some smooth $\mu: \M \to (0,+\infty)$ (which can always be chosen $\mu=1$ constantly in particular), and a smooth  fiberwise isometry
$\kappa_{gg'}:\Gamma(\V_{g'})\to\Gamma(\V_{g})$.
\item[(2)] 
The restriction, called {\bf M\o ller map}
\begin{align*}
 \S^0 := \R|_{ \Ker_{sc}(\P)}  :  \Ker_{sc}(\P) \to   \Ker_{sc}(\P')  
 \end{align*}
is  well-defined vector space isomorphism with inverse  given by
\begin{equation*} (\S^0)^{-1} := \R^{-1}|_{ \Ker_{sc}(\P')}  :  \Ker_{sc}(\P') \to   \Ker_{sc}(\P) \,.
\end{equation*}
%
%
 \end{itemize}
\end{theorem}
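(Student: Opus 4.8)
The plan is to reduce the general paracausally related case to the elementary comparison $g_0\preceq g_1$ already handled in Proposition~\ref{propNN}, and then to propagate the resulting intertwining relation along a chain realizing $g\simeq g'$. I would first dispose of a single comparable step: assume $g_0\preceq g_1$, let $\R_\pm$ be the operators of Proposition~\ref{propNN} built through an intermediate metric $g_\chi$, and set $\R:=\R_-\R_+\colon\Gamma(\V_{g_0})\to\Gamma(\V_{g_1})$, a vector space isomorphism as a composition of such. Rewriting the two identities of Proposition~\ref{propNN}(2) as $\P_\chi\R_+=\kappa_{\chi0}\rho^{-1}\P_0$ and $\kappa_{\chi1}\P_1\R_-=\rho'^{-1}\rho\,\P_\chi$, and using that the positive scalar functions $\rho,\rho'$ act fiberwise and hence commute with the fiber-preserving isometries $\kappa$, a short computation gives
\begin{equation*}
\P_1\R=\kappa_{1\chi}\rho'^{-1}\rho\,\P_\chi\R_+=\kappa_{1\chi}\kappa_{\chi0}\,\rho'^{-1}\P_0=\kappa_{10}\,\rho'^{-1}\P_0,
\end{equation*}
so that $\rho'\,\kappa_{01}\P_1\R=\P_0$. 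This is precisely relation~(1) for one step, with $\mu=\rho'$ and the composite isometry $\kappa_{01}=\kappa_{0\chi}\kappa_{\chi1}$; the choice $\rho'=1$ permitted by Proposition~\ref{propNN}(b) yields $\mu=1$. If instead $g_1\preceq g_0$, I would apply the construction to the reversed pair and take $\R$ to be the inverse of the isomorphism so obtained; the analogous manipulation produces a relation of the same form with a positive scalar $\mu$.

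For the general case I would fix a chain $g=g_1,\dots,g_n=g'$ of pairwise comparable globally hyperbolic metrics and let $\R^{(k)}\colon\Gamma(\V_{g_k})\to\Gamma(\V_{g_{k+1}})$ be the one-step isomorphism just built, so that $\mu_k\,\kappa_{g_kg_{k+1}}\P_{k+1}\R^{(k)}=\P_k$. Setting $\R:=\R^{(n-1)}\cdots\R^{(1)}$, an induction on the chain length --- again using that each scalar $\mu_k$ commutes with the isometries and that a composite of fiberwise isometries is a fiberwise isometry --- gives $\mu\,\kappa_{gg'}\P'\R=\P$ with $\mu=\prod_k\mu_k>0$ and $\kappa_{gg'}$ the composite isometry; if every $\mu_k=1$ then $\mu=1$. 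The freedom in choosing $\chi$, the pair $t_0<t_1$, and $\rho,\rho'$ at each step accounts for the infinitely many admissible $\R$, and this proves~(1).

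For~(2), kernel preservation is immediate from~(1): if $A\in\Ker_{sc}(\P)$ then $\mu\,\kappa_{gg'}\P'\R A=\P A=0$, whence $\P'\R A=0$ since $\mu>0$ and $\kappa_{gg'}$ is invertible; the rearranged identity $\P'=\kappa_{g'g}\mu^{-1}\P\R^{-1}$ likewise shows that $\R^{-1}$ sends $\Ker(\P')$ into $\Ker(\P)$. The hard part will be that $\R$ and $\R^{-1}$ preserve \emph{spatial compactness} of support, which is exactly what makes $\S^0$ land in $\Ker_{sc}(\P')$ rather than merely in $\Ker(\P')$. Here I would exploit the structure of $\R_\pm$ and of their inverses given in Proposition~\ref{propNN}(1): each is a fiberwise isometry $\kappa$ plus a term of the form $\G^{\pm}\circ D$, where $D$ is a differential operator whose coefficients are supported in the time slab $t^{-1}([t_0,t_1])$ on which the two metrics differ. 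For spatially compact $A$ the section $DA$ is supported in $(\supp A)\cap t^{-1}([t_0,t_1])$, which is compact in the globally hyperbolic spacetime; applying an advanced or retarded Green operator then yields a section supported in $J^{\pm}$ of a compact set, which meets every Cauchy surface in a compact set and is therefore spatially compact. Since $\kappa$ trivially preserves spatial compactness, each $\R_\pm$, hence $\R$ and by the same reasoning $\R^{-1}$, preserves $\Gamma_{sc}$; this is the only place where the causal properties of the Green operators and the temporal localization of the interaction are genuinely used.

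Combining these facts concludes~(2): $\S^0:=\R|_{\Ker_{sc}(\P)}$ maps into $\Ker_{sc}(\P')$ and $\R^{-1}|_{\Ker_{sc}(\P')}$ maps into $\Ker_{sc}(\P)$, and since $\R^{-1}\R=\Id$ and $\R\R^{-1}=\Id$ on the full section spaces, these two restrictions are mutually inverse. Hence $\S^0$ is a vector space isomorphism with inverse $\R^{-1}|_{\Ker_{sc}(\P')}$.
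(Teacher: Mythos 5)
Your proposal is correct and follows essentially the same route as the paper: build the one-step isomorphism $\R_-\circ\R_+$ from Proposition~\ref{propNN}, compose along a chain realizing $g\simeq g'$ (inverting at reversed steps), and read off $\mu=\prod_k\rho_k'$ and the composite fiberwise isometry by the same direct calculation the paper performs; for (2) the paper merely declares the proof trivial, so your explicit kernel- and support-preservation argument is a welcome expansion. One small inaccuracy there: the operator $\rho\P_\chi\kappa_{\chi0}-\kappa_{\chi0}\P_0$ appearing in $\R_+$ vanishes only on $t^{-1}((-\infty,t_0))$, not outside the slab $t^{-1}([t_0,t_1])$, since for $t>t_1$ one has $g_\chi=g_1\neq g_0$; hence for $A\in\Gamma_{sc}$ the section $DA$ is past-compact and spatially compact but not compactly supported in general. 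The conclusion nevertheless stands because the Green operator is matched to the direction in which $D$ switches off: $\G^+$ applied to a past-compact, spatially compact section yields a section supported in $J^+$ of such a set, which is again spatially compact (and symmetrically for $\R_-$ with $\G^-$).
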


\begin{proof}  Since $g\simeq g'$, there exists a finite sequence of globally hyperbolic metrics $g_0=g,g_1,..,g_N=g$ such that at each step $g_k\preceq g_{k+1}$ or $g_{k+1}\preceq g_k$. For all $k\in\{0,..,N\}$ we can associate to the metric a Proca operator $\P_k$.\\
At each step the hypotheses of Proposition~\ref{propNN} are verified, in fact we can choose functions $\rho_k$ and $\rho'_k$ and the M{\o}ller map is given by $\R_k={\R_k}_-\circ{\R_k}_+$. The general map is then built straightforwardly by composing the $N$ maps constructed step by step:
$$\R=\R_N\circ...\circ\R_1.$$
Regarding (1), by direct calculation we get that $\mu=\prod_{k=1}^{N}\rho'_k$, while $\kappa_{gg'}=\kappa_{g_0 g_1}\circ...\circ \kappa_{g_{N-1} g_N}$. The proof of (2) is trivial.
\end{proof}

\subsection{Causal propagator and M{\o}ller operator}
The rest of this section is devoted to study the relation between M\o ller maps and the causal propagator of Proca operators. To this end, we  use a natural extension of the notion of \textit{adjoint operator} introduced in~\cite[Section 4.5]{Norm}.

Let $g$ and $g'$ (possibly $g\neq g'$) globally hyperbolic metric and let $\V_g$ and $\V_{g'}$ be a Proca bundle on the manifold $\M$.
Consider a $\RR$-linear operator 
$$\T : \D(\T) \to \Gamma(\V_{g'})\:,$$
where $\D(\T) \subset \Gamma(\V_g)$ is a $\RR$-linear subspace and $\D(\T) \supset \Gamma_c(\V_g)$. 
\begin{definition} \label{defadjoint}
An operator $$\T^{\dagger_{gg'}} : \Gamma_c(\V_{g'}) \to \Gamma_c(\V_g)$$ is said to be the {\bf adjoint of $\T$ with respect to  $g, g'$} (with the said order) if it satisfies 
\begin{equation*} 
\int_\M g'^\sharp\left({\fh},{\T\ff}\right)(x)\, \vol_{g'}(x) = \int_\M g^\sharp\left(\T^{\dagger_{gg'}} \fh,\ff\right)(x)\,  \vol_{g}(x)\quad 
\forall \ff \in \D(\T)\:,\: \forall \fh \in \Gamma_c(\E).
\end{equation*}
When $g=g'$, we use the simplified notation $\T^\dagger := \T^{\dagger_{gg}}$.
\end{definition}
As in~\cite{Norm}, the adjoint operator satisfies a lot of useful properties which we summarize in the following proposition. Since the proof is analogous to the one of~\cite[Proposition 4.11]{Norm}, we leave it to the reader.  Though the rest of this paper deal with the real case only, we state the theorem encompassing the case  where the sections are complex and the fiber scalar product is made Hermitian by adding a complex conjugation of the left entry in the usual fiberwise real $g^\sharp$ inner product, which becomes $g^\sharp(\overline{\ff},\fg)$, where the bar denotes the complex conjugation. Definition \ref{defadjoint} extends accordingly.
  For this reason $\mathbb{K}$ will denote either $\mathbb{R}$ or $\mathbb{C}$, and the complex conjugate $\overline{c}$ reduces to $c$ itself when $\mathbb{K}= \mathbb{R}$.  We keep the notation $\V_g$ for indicating either the  real or complex vector bundle $\T^*\M$ or $\T^*\M + i\T^*\M$ corresponding to two possible  choices of $\KK$.

\begin{proposition}\label{propadjoint}
	Referring to the notion of adjoint in Definition \ref{defadjoint}, the following facts are valid.
	\begin{itemize}
		\item[(1)] If the adjoint $\T^{\dagger_{gg'}}$ of $\T$  exists, then it is unique.
		\item[(2)] If $\T : \Gamma(\V_g) \to \Gamma(\V_{g'})$ is a differential operator and $g=g'$, then $\T^{\dagger_{gg}}$ exists and is the restriction of the formal adjoint to $\Gamma_c(\E)$. (In turn, the formal adjoint of $\T$ is  the unique extension to $\Gamma(\E)$ of the differential operator $\T^\dagger$ as a differential operator.)
		\item[(3)] 
		Consider a pair of $\KK$-linear operators  $\T : \D(\T) \to \Gamma(\V_{g'})$,  $\T' : \D(\T') \to \Gamma(\V_{g'})$ with $\D(\T),\D(\T')\subset \Gamma(\V_{g})$
		and $a,b \in \KK$.  Then 
		$$(a\T+b\T')^{\dagger_{gg'}} = \overline{a} \T^{\dagger_{gg'}} + \overline{b} \T'^{\dagger_{gg'}}$$
		provided 
		$\T^{\dagger_{gg'}}$ and $\T'^{\dagger_{gg'}}$ exist.
		\item[(4)] Consider a pair of $\KK$-linear operators  $\T : \D(\T) \to \Gamma(\V_{g'})$,  $\T' : \D(\T') \to \Gamma(\V_{g''})$ with $\D(\T)\subset \Gamma(\V_{g})$ and $\D(\T')\subset\Gamma(\V_{g'})$ such that 
		\begin{itemize}
			\item[(i)]  $\D(\T'\circ \T) \supset \Gamma_c(\V_{g})$,
			\item[(ii)] $\T^{\dagger_{gg'}}$ and $\T'^{\dagger_{g'g''}}$ exist,
		\end{itemize} 
		then $(\T'\circ \T)^{\dagger_{gg''}}$  exists and
		$$(\T'\circ \T)^{\dagger_{gg''}} = \T^{\dagger_{gg'}} \circ \T'^{\dagger_{g'g''}}\:.$$
		\item[(5)] If $\T^{\dagger_{gg'}}$ exists, then $(\T^{\dagger_{gg'}})^{\dagger_{g'g}} = \T|_{\Gamma_c(\V_{g})}$.
		\item[(6)] If $\T : \D(T)= \Gamma(\V_g) \to \Gamma(\V_{g'})$ is bijective,  admits $\T^{\dagger_{gg'}}$, and $\T^{-1}$ admits $(\T^{-1})^{\dagger_{g'g}}$, then $\T^{\dagger_{gg'}}$ is bijective and 
		$(\T^{-1})^{\dagger_{g'g}} = (\T^{\dagger_{gg'}})^{-1}$.
	\end{itemize}
\end{proposition}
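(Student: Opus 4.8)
The whole proposition rests on a single analytic fact: the integrated fibre pairing is \emph{non-degenerate} on compactly supported sections, i.e. if $\psi\in\Gamma(\V_g)$ and $\int_\M g^\sharp(\psi,\ff)\,\vol_g=0$ for every $\ff\in\Gamma_c(\V_g)$, then $\psi=0$. This is the usual fundamental lemma of the calculus of variations applied fibrewise, and it uses only non-degeneracy of $g^\sharp$, not its (absent) positivity — precisely the situation we are in for the indefinite Proca metric. Item~(1) is then immediate: if $\T_1^{\dagger_{gg'}}$ and $\T_2^{\dagger_{gg'}}$ both satisfy Definition~\ref{defadjoint}, their difference $D$ obeys $\int_\M g^\sharp(D\fh,\ff)\,\vol_g=0$ for all $\ff\in\D(\T)\supset\Gamma_c(\V_g)$ and all $\fh\in\Gamma_c(\V_{g'})$; fixing $\fh$ and letting $\ff$ range over $\Gamma_c(\V_g)$ forces $D\fh=0$, hence $\T_1^{\dagger_{gg'}}=\T_2^{\dagger_{gg'}}$. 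From here on, ``by uniqueness'' means exactly this device.

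For~(2) I would invoke the standard existence of the formal adjoint $\T^\ast$ of a linear differential operator on a manifold without boundary (Green's formula/integration by parts), which is again a differential operator of the same order and satisfies $\int_\M g^\sharp(\fh,\T\ff)\,\vol_g=\int_\M g^\sharp(\T^\ast\fh,\ff)\,\vol_g$ whenever $\ff$ or $\fh$ is compactly supported. Restricting $\T^\ast$ to $\Gamma_c(\V_g)$ yields an operator satisfying Definition~\ref{defadjoint} with $g=g'$, so by uniqueness it \emph{is} $\T^{\dagger_{gg}}$; the involutivity $(\T^\ast)^\ast=\T$ of formal adjoints gives the parenthetical claim. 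Item~(3) is a direct substitution: plugging $a\T+b\T'$ into the defining identity and using linearity of $g'^\sharp$ in its right slot gives $a\int g^\sharp(\T^{\dagger_{gg'}}\fh,\ff)+b\int g^\sharp(\T'^{\dagger_{gg'}}\fh,\ff)$; pulling the scalars into the \emph{left} slot of $g^\sharp$ produces conjugates (the Hermitian product being conjugate-linear on the left, so $a\,g^\sharp(v,\ff)=g^\sharp(\bar a v,\ff)$), and uniqueness identifies the result with $\bar a\,\T^{\dagger_{gg'}}+\bar b\,\T'^{\dagger_{gg'}}$. In the real case all conjugations are trivial.

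Items~(4) and~(5) are obtained by chaining defining identities. For~(4), for $\fh\in\Gamma_c(\V_{g''})$ and $\ff\in\D(\T'\circ\T)$ one writes $\int g''^\sharp(\fh,\T'\T\ff)\,\vol_{g''}=\int g'^\sharp(\T'^{\dagger_{g'g''}}\fh,\T\ff)\,\vol_{g'}=\int g^\sharp(\T^{\dagger_{gg'}}\T'^{\dagger_{g'g''}}\fh,\ff)\,\vol_g$, the first step using the adjoint of $\T'$ (legitimate because $\T\ff\in\D(\T')$) and the second the adjoint of $\T$; assumption~(i) guarantees $\Gamma_c(\V_g)\subset\D(\T'\circ\T)$ so there are enough test sections, and uniqueness yields $(\T'\circ\T)^{\dagger_{gg''}}=\T^{\dagger_{gg'}}\circ\T'^{\dagger_{g'g''}}$. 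For~(5), apply Definition~\ref{defadjoint} to the operator $\T^{\dagger_{gg'}}$ (whose adjoint is taken with respect to $g',g$ in that order) and compare with the defining identity of $\T$ evaluated on a compactly supported section; in the real case symmetry of $g^\sharp,g'^\sharp$ gives $(\T^{\dagger_{gg'}})^{\dagger_{g'g}}\ff=\T\ff$ for $\ff\in\Gamma_c(\V_g)$ after one application of non-degeneracy, and in the Hermitian case the two conjugations introduced by the left-slot antilinearity cancel.

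Finally,~(6) is the capstone and follows formally from~(2) and~(4): applying~(4) to $\T^{-1}\circ\T=\Id_{\Gamma(\V_g)}$ and to $\T\circ\T^{-1}=\Id_{\Gamma(\V_{g'})}$, and using that the identity is its own adjoint (item~(2)), gives $\T^{\dagger_{gg'}}\circ(\T^{-1})^{\dagger_{g'g}}=\Id|_{\Gamma_c(\V_g)}$ and $(\T^{-1})^{\dagger_{g'g}}\circ\T^{\dagger_{gg'}}=\Id|_{\Gamma_c(\V_{g'})}$, so $(\T^{-1})^{\dagger_{g'g}}$ is a two-sided inverse and $\T^{\dagger_{gg'}}$ is bijective with the stated inverse. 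The proof is entirely routine; the only places demanding care — and hence the main obstacle — are the bookkeeping of the $g,g'$ ordering in the $\dagger$-indices together with the domain hypotheses in~(4) (ensuring the intermediate pairings make sense on enough compactly supported test sections), and the tracking of complex-conjugation factors in the Hermitian version of~(3) and~(5).
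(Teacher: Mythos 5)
Your proof is correct and is exactly the routine verification the paper has in mind: the paper itself omits the argument, referring to the analogous \cite[Proposition 4.11]{Norm}, and your chain of defining identities plus the non-degeneracy/uniqueness device is that standard proof. The domain bookkeeping in (4), the conjugation tracking in (3) and (5), and the reduction of (6) to (2) and (4) via $\T^{-1}\circ\T=\Id$ and $\T\circ\T^{-1}=\Id$ are all handled correctly.
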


Now we are ready to prove that the operators  $\R$ admit adjoints and we explicitly compute them.

\begin{proposition}\label{MollerAdj}
Let $g_0,g_1$ be globally hyperbolic metrics satisfying $g_0\preceq g_1$.
Let $\R_{+}$, $\R_{-}$ and $\R$ be the operators defined in Proposition~\ref{propNN} and fix, once and for all, $\rho=c_{0}^{\chi}$ and $\rho'=c_{0}^{1}$ where $c_{0}^{\chi}$, $c_{0}^{1}$ are the unique smooth functions on $\M$ such that:
\begin{equation}\label{volumes}
\vol_{g_{\chi}}=c_0^{\chi}\vol_{g_0} \qquad \vol_{g_{1}}=c_0^{1}\vol_{g_0}.
\end{equation}
Then we have:
\begin{itemize}\label{adjoints}
\item[(1)] $\R_+^{\dagger_{g_0g_{\chi}}}:\Gamma_c(\V_{g_{\chi}})\to\Gamma_c(\V_{g_0})$ satisfies:
$$\R_+^{\dagger_{g_0g_{\chi}}}=\left(c_{0}^{\chi}\kappa_{0 \chi}-\left(c_0^{\chi}\kappa_{0\chi} \P_{\chi}-\P_0\kappa_{0\chi}\right)\G_{\P_\chi}^-\right)|_{\Gamma_c (\V_{\chi})}$$
and can be recast in the form $$\R_+^{\dagger_{g_0g_{\chi}}}=\P_0\kappa_{0 \chi}\G_{\P_\chi}^-|_{\Gamma_c (\V_{\chi})}.$$
\item[(2)]
$\R_-^{\dagger_{g_\chi g_1}}:\Gamma_c(\V_{g_{1}})\to\Gamma_c(\V_{g_{\chi}})$ satisfies
$$\R_-^{\dagger_{g_\chi g_1}}=\left(c_1^{\chi}\kappa_{\chi 1}-\left(c_1^{\chi}\kappa_{\chi 1}\P_1-\P_{\chi}\kappa_{\chi 1}\right)\G_{\P_1}^+\right)|_{\Gamma_c(\V_1)},$$
and can be recast in the form $$\R_-^{\dagger_{g_\chi g_1}}=P_\chi\kappa_{\chi 1}\G_{\P_1}^+|_{\Gamma_c(\V_1)}.$$
\item[(3)] The map $\R^{\dagger_{g_0 g_1}} : \Gamma_c(\V_{g_1})\to\Gamma_c(\V_{g_0})$ defined by  $\R^{\dagger_{g_0g_1}}:=\R_+^{\dagger_{g_0g_{\chi}}}\circ\R_-^{\dagger_{g_\chi g_1}}$
is invertible and 
$$(\R^{\dagger_{g_0 g_1}})^{-1} = (\R^{-1})^{\dagger_{g_1 g_0}}: \Gamma_c(\V_{g_1}) \to \Gamma_c(\V_{g_0})\,.$$
We call it \textbf{adjoint M{\o}ller operator}.\\
 Moreover $\R^{\dagger_{g_0g_1}}$ is a homeomorphism with respect to the natural (test section)  topologies of the domain and of the co-domain.\\

\end{itemize}
\end{proposition}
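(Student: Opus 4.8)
The plan is to verify parts (1) and (2) by establishing the defining integral identity of Definition~\ref{defadjoint} directly, reducing everything to two elementary adjoint computations, and then to deduce (3) from the order-reversing composition rule and the inversion rule of Proposition~\ref{propadjoint}. First I would record the adjoints of the two building blocks appearing in $\R_+$ and $\R_-$. For the fibre isometry, unwinding $g_\chi^\sharp(\kappa_{\chi0}\ff,\kappa_{\chi0}\fg)=g_0^\sharp(\ff,\fg)$ and changing the volume form through $\vol_{g_\chi}=c_0^\chi\vol_{g_0}$ gives at once
\begin{equation*}
\int_\M g_\chi^\sharp(\fh,\kappa_{\chi0}\ff)\,\vol_{g_\chi}=\int_\M g_0^\sharp(c_0^\chi\kappa_{0\chi}\fh,\ff)\,\vol_{g_0}\,,
\end{equation*}
so that $\kappa_{\chi0}^{\dagger_{g_0g_\chi}}=c_0^\chi\kappa_{0\chi}$ (recall $\kappa$ commutes with multiplication by scalar functions, being fibrewise linear). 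For the Green operators I would combine $\G^\pm_{\rho\P_\chi}=\G^\pm_{\P_\chi}\rho^{-1}$ with the duality \eqref{eqGstar}, which for the formally selfadjoint $\P_\chi$ reads $\int_\M g_\chi^\sharp(\fh,\G^+_{\P_\chi}u)\,\vol_{g_\chi}=\int_\M g_\chi^\sharp(\G^-_{\P_\chi}\fh,u)\,\vol_{g_\chi}$ for $\fh\in\Gamma_c(\V_{g_\chi})$ and $u\in\Gamma_{pc}(\V_{g_\chi})$; and finally that $\P_0,\P_\chi$ are formally selfadjoint for their own Hodge products.

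I would then feed these into $\R_+=\kappa_{\chi0}-\G^+_{\rho\P_\chi}(\rho\P_\chi\kappa_{\chi0}-\kappa_{\chi0}\P_0)$, keeping the compactly supported section $\fh$ on the left throughout so that every pairing is a legitimate integral: the inner differential operator sends $\ff$ into $\Gamma_{pc}(\V_{g_\chi})$ (it vanishes where $g_\chi=g_0$, i.e. in the deep past), while $\G^-_{\P_\chi}\fh$ is future-compact, so the two overlap on a compact set. Collecting the resulting terms reproduces the stated defining formula for $\R_+^{\dagger_{g_0g_\chi}}$; here the prescription $\rho=c_0^\chi$ is precisely what cancels the density $c_0^\chi$ generated by the isometry adjoint against the $\rho^{-1}$ coming from $\G^+_{\rho\P_\chi}=\G^+_{\P_\chi}\rho^{-1}$, leaving the $\P_0$-term density-free. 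Invoking $\P_\chi\G^-_{\P_\chi}=\Id$ on $\Gamma_c$ then collapses $c_0^\chi\kappa_{0\chi}-c_0^\chi\kappa_{0\chi}\P_\chi\G^-_{\P_\chi}$ to zero and leaves $\R_+^{\dagger_{g_0g_\chi}}=\P_0\kappa_{0\chi}\G^-_{\P_\chi}$. Part (2) follows by the same scheme applied to $\R_-$ on the pair $g_\chi,g_1$, the analogous cancellation of volume factors now using both prescribed densities $\rho=c_0^\chi$, $\rho'=c_0^1$, and the collapse being effected by $\P_1\G^+_{\P_1}=\Id$.

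The step I expect to be the real obstacle is the bookkeeping of the support classes, which is also exactly what makes the simplified operators well defined as maps $\Gamma_c\to\Gamma_c$. Although $\P_0\kappa_{0\chi}\G^-_{\P_\chi}\fh$ naively has support in the whole causal past $J^-(\supp\fh)$, on the region where the two metrics coincide one has $\P_0=\P_\chi$ and $\kappa_{0\chi}=\Id$, so there $\P_0\kappa_{0\chi}\G^-_{\P_\chi}\fh=\P_\chi\G^-_{\P_\chi}\fh=\fh$, which vanishes in the deep past; consequently the support is contained in the truncated diamond $J^-(\supp\fh)\cap\{t\ge t_0\}$ together with $\supp\fh$, and this set is compact by global hyperbolicity. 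One must check throughout that the Green-operator dualities and the selfadjoint integrations by parts are applied only to pairings with compact overlap, which is guaranteed by the future-/past-compactness just described.

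Finally, for (3) I would simply note $\R=\R_-\circ\R_+$ and apply the order-reversing rule Proposition~\ref{propadjoint}(4) to obtain $\R^{\dagger_{g_0g_1}}=\R_+^{\dagger_{g_0g_\chi}}\circ\R_-^{\dagger_{g_\chi g_1}}$, which is the stated definition. Since $\R$ is a vector space isomorphism by Proposition~\ref{propNN} and $\R^{-1}=\R_+^{-1}\circ\R_-^{-1}$ likewise admits an adjoint by the same computation applied to the inverse formulae, Proposition~\ref{propadjoint}(6) yields at once that $\R^{\dagger_{g_0g_1}}$ is bijective with $(\R^{\dagger_{g_0g_1}})^{-1}=(\R^{-1})^{\dagger_{g_1g_0}}$. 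Continuity of $\R^{\dagger_{g_0g_1}}$ and of its inverse in the test-section topologies follows because each constituent -- the fibre isometries, the differential operators $\P$, and the Green operators $\G^\pm_\P$ (continuous by the results recalled after \eqref{CAUSALPROP}) -- is continuous, so the composition is a homeomorphism.
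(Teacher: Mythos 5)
Your overall route is the same as the paper's: compute the adjoints of the two building blocks ($\kappa_{\chi 0}^{\dagger_{g_0 g_\chi}}=c_0^\chi\kappa_{0\chi}$ from the isometry plus the volume relation, and the Green-operator duality \eqref{eqGstar} for the formally selfadjoint $\P_\chi$), assemble them, collapse via $\P_\chi\G^-_{\P_\chi}=\Id$ on $\Gamma_c$, and get (3) from Proposition~\ref{propadjoint}. The support discussion for why the image lands in $\Gamma_c$ and the continuity claim also match the paper. For the invertibility in (3) you lean on Proposition~\ref{propadjoint}(6) after asserting that $\R^{-1}$ admits an adjoint, whereas the paper exhibits an explicit two-sided inverse of $\R_+^{\dagger_{g_0g_\chi}}$; both are acceptable, though yours silently requires carrying out the adjoint computation a second time for the inverse formulae.

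There is, however, one step whose justification as written does not hold up: the splitting of
\begin{equation*}
\int_\M g_\chi^\sharp\Bigl(\G^-_{\P_\chi}\fh,\;\bigl(\P_\chi\kappa_{\chi0}-\tfrac{\kappa_{\chi0}}{c_0^\chi}\P_0\bigr)\ff\Bigr)\,\vol_{g_\chi}
\end{equation*}
into two separate integrals, which is unavoidable because the two terms require different integrations by parts ($\P_\chi$ against $\vol_{g_\chi}$, $\P_0$ against $\vol_{g_0}$). The compact-overlap argument you invoke covers only the \emph{difference}: it is the difference that is past-compact, while each individual term $\P_\chi\kappa_{\chi0}\ff$ and $\tfrac{\kappa_{\chi0}}{c_0^\chi}\P_0\ff$ has no support restriction for generic $\ff\in\Gamma(\V_{g_0})$, so each separate integrand is supported on all of $J^{g_\chi}_-(\supp\fh)$, which is not compact, and the individual integrals need not converge. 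Your closing remark that the legitimacy of every pairing ``is guaranteed by the future-/past-compactness just described'' is therefore false precisely at the point where it is needed. The paper repairs this by inserting a cutoff $s:\M\to[0,1]$ vanishing below a Cauchy surface $\Sigma_{t'}$ with $t'<t_0$ and equal to the identity on the support of the difference, replacing $\ff$ by $s\ff$ (which changes nothing, since the difference operator annihilates $\ff$ below $t_0$) so that each summand becomes a convergent integral; only then is linearity applied, and $s$ is removed at the end. Your proof needs this (or an equivalent) device to be complete; everything else goes through as you describe.
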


\begin{proof}
We start by proving points (1) and (2).
For any $\ff \in \D(\R_+)=\Gamma(\V_{g_0})$ and $  \fh \in \Gamma_c(\V_{g_{\chi}})$ we have
\begin{align*}
		&&\int_\M g_\chi^\sharp\left(\fh,\R_+\ff\right) \vol_{g_{\chi}}=
		 \int_\M g_\chi^\sharp\left(\fh , \big(\kappa_{\chi 0} - \G_{c_0^{\chi} \P_\chi}^+\left(c_0^{\chi}  \P_\chi \kappa_{\chi0}  -\kappa_{\chi0} \P_0\right)\big)\ff\right) \vol_{g_{\chi}}=\\
		 && \int_\M g_\chi^\sharp\left(\fh,\kappa_{\chi 0}\ff\right) \vol_{g_{\chi}}- \int_\M g_\chi^\sharp\left({\fh},{\big(\G_{c_0^{\chi} \P_\chi}^+\left(c_0^{\chi}  \P_\chi \kappa_{\chi0}  -\kappa_{\chi0} \P_0\right)\big)\ff}\right) \vol_{g_{\chi}}.
\end{align*}

\noindent We now split the problem and compute the adjoint of the two summands separately.\\
The adjoint of the first one follows immediately by exploiting the properties of the existing isometry and Equations~\eqref{volumes}
\begin{align*}
\int_\M g_\chi^\sharp\left({\fh},{\kappa_{\chi 0}\ff} \right) \vol_{g_{\chi}}=\int_\M g_{0}^\sharp\left({c_0^{\chi}\kappa_{0 \chi}\fh},{\ff}\right) \vol_{g_{0}}.
\end{align*}
For the second summand the situation is trickier and we cannot split the calculation in two more summands since it is crucial that the whole difference $\left(c_0^{\chi}  \P_\chi \kappa_{\chi0}  -\kappa_{\chi0} \P_0\right)$ acts on a general $\ff\in\Gamma(\V_{g_{\chi}})$ before we apply the Green operator whose domain is $\Gamma_{pc}(\V_{g_\chi})$.\\
So we first rewrite $\G_{c_0^{\chi} \P_\chi}^+=\G_{\P_\chi}^+ \frac{1}{c_0^{\chi}}$ use the properties of standard adjoints of Green operators for formally self-adjoint Green hyperbolic differential operators to get
\begin{align*}
	\int_\M g_\chi^\sharp\left({\fh},{\big(\G_{c_0^{\chi} \P_\chi}^+\left(c_0^{\chi}  \P_\chi \kappa_{\chi0}  -\kappa_{\chi0} \P_0\right)\big)\ff}\right) \vol_{g_{\chi}}=\int_\M g_\chi^\sharp\left({\G_{\P_\chi}^-\fh},{\big( \P_\chi \kappa_{\chi0}  -\frac{\kappa_{\chi0}}{c_0^{\chi}} \P_0\big)\ff}\right) \vol_{g_{\chi}}.
\end{align*}
Now we are tempted to exploit the linearity of the integral and of the fiber product, but first, to ensure that the two integrals individually converge, we need to introduce a cutoff function:
\begin{itemize}
	\item We notice again that there is a Cauchy surface of the foliation $\Sigma_{t_0}$ such that for all leaves with $t<t_0$ the operator $\left( \P_\chi \kappa_{\chi0}  -\frac{\kappa_{\chi0}}{c_0^{\chi}} \P_0\right)=0$;
	\item So take a $t'<t_0$ and define a cutoff smooth function $s:\M\to[0,1]$ such that $s=0$ on all leaves with $t<t'$.
\end{itemize}

\noindent In this way we are allowed to rewrite our last integral and split it in two convergent summands without modifying its numerical value.

\begin{align*}
\int_\M g_\chi^\sharp\Big({\G_{\P_\chi}^-\fh},  \big( \P_\chi  \kappa_{\chi0} & -\frac{\kappa_{\chi0}}{c_0^{\chi}} \P_0\big)s\ff \Big) \vol_{g_{\chi}} = \\
 =& \int_\M g_\chi^\sharp\left({\G_{\P_\chi}^-\fh},{ \P_\chi \kappa_{\chi0}  s\ff}\right) \vol_{g_{\chi}} -\int_\M g_\chi^\sharp\left({\G_{\P_\chi}^-\fh},
{\frac{\kappa_{\chi0}}{c_0^{\chi}} \P_0 s\ff}\right) \vol_{g_{\chi}}\\
=&\int_\M \g_0^\sharp\left({c_0^{\chi}\kappa_{0 \chi}\P_{\chi}\G_{\P_\chi}^-\fh},{  s\ff}\right) \vol_{g_{0}}-\int_\M \g_0^\sharp
(\P_0 \kappa_{0 \chi}\G_{\P_\chi}^-\fh,s\ff) \vol_{g_{0}}\\
=&\int_\M \g_0^\sharp\left({\big(c_0^{\chi}\kappa_{0\chi} \P_{\chi}-\P_0\kappa_{0\chi}\big)\G_{\P_\chi}^-\fh},{s\ff}\right) \vol_{g_{0}}\\
=&\int_\M \g_0^\sharp\left({\big(c_0^{\chi}\kappa_{0\chi} \P_{\chi}-\P_0\kappa_{0\chi}\big)\G_{\P_\chi}^-\fh},{\ff} \right) \vol_{g_{0}}.
\end{align*}
\noindent where in the last identities we have used properties of the standard adjoints of the formally self-adjoint operators, of the isometries and of the cutoff function. \\
Since the domain of the operator is just made up of compactly supported sections, we may exploit the inverse property of the Green operators to immediately obtain that
$$
c_{0}^{\chi}\kappa_{0 \chi}-\left(c_0^{\chi}\kappa_{0\chi} \P_{\chi}-\P_0\kappa_{0\chi}\right)\G_{\P_\chi}^-|_{\Gamma_c (\V_{\chi})}=\P_0\kappa_{0 \chi}\G_{\P_\chi}^-|_{\Gamma_c (\V_{\chi})}.
$$
To see that the image of the operators is indeed compactly supported we can focus on $\R^{\dagger_{g_0g_{\chi}}}$, the rest follows straightforwardly. The first summand $c_0^{\chi}\kappa_{0 \chi}$ does not modify the support of the sections, whereas the second does. Let us fix $\ff\in\Gamma_c(\V_{g_{\chi}})$, then $\supp(\G^-_{\P_{\chi}} \ff)\subset J_{g_{\chi}}^-(\supp\ff)$ which means that $\G^-_{\P_{\chi}} \ff\in\Gamma_{sfc}$, i.e it is space-like and future compact. The thesis follows by again observing that there is a Cauchy surface such that in its past $\left( \P_\chi \kappa_{\chi0}  -\frac{\kappa_{\chi0}}{c_0^{\chi}} \P_0\right)\G^-_{\P_{\chi}} \ff=0$.

\noindent The computation of the adjoint of $\R_-$ is almost identical to the one just performed.

The first part of (3) is an immediate consequence of property (4) in Proposition~\ref{propadjoint}, while the invertibility of the adjoint can be proved by explicitly showing that  the operator
$$(\R_+^{\dagger_{g_0 g_{\chi}}})^{-1}=\left(\frac{\kappa_{\chi 0}}{c_0^{\chi}}+\left(\P_{\chi}\kappa_{\chi 0}-\frac{\kappa_{\chi 0}}{c_0^{\chi}}\G^-_{\P_{0}}\right)\right)\Big{|}_{\Gamma_c(\V_{g_0})}$$
serves as a left and right inverse of $\R_+^{\dagger_{g_0 g_{\chi}}}$. An analogous argument can be used for $\R_-^{\dagger_{g_{\chi}g_{1}}}$.\\
The continuity of both the adjoint and its inverse comes by the same arguments used in the proof of \cite[Theorem 4.12]{Norm} (with the only immaterial difference that this time the smooth isometry $\kappa_{\chi0}$ is included in the definition of the M{\o}ller operator.)
\end{proof}
\begin{remark}
An interesting fact to remark is that having defined the adjoints over compactly supported sections makes the dependence on the auxiliary volume fixing functions disappear.
\end{remark}

 We conclude the section, by proving the second part of Theorem~\ref{thm:main intro 1}. 

\begin{theorem} \label{causalpropR}Let $(\M,g)$ and $(\M,g')$ be globally hyperbolic spacetimes,   with associated Proca bundles  $\V_{g}$ and $\V_{g'}$ and Proca operators   $\P, \P'$.\\
	If $g\simeq g'$,  it is  possible to specialize  the  $\RR$-vector space isomorphism $\R : \Gamma(\V_{g}) \to \Gamma(\V_{g'})$
of Proposition \ref{remchain}
 such that the following further facts are true.
	\begin{itemize}
		\item[(1)]   The causal propagators 
		$\G_{\P}$ and $\G_{\P'}$ (\ref{CAUSALPROP}), respectively
		of $\P$ and $\P'$, satisfy
		\begin{equation*} \R \G_{\P}  \R^{\dagger_{g g'}} = \G_{\P'}\:. 
		\end{equation*}
	
		\item[(2)] It holds 
		\begin{equation*}
		\R^{\dagger_{g g'}} \P'\kappa_{g'g}|_{\Gamma_c(\V_{g})}= \P|_{\Gamma_c(\V_g)}\: .
		\end{equation*}

	\end{itemize}
$\R$ as above is called {\bf M\o ller operator} of $g,g'$ (with this order).
\end{theorem}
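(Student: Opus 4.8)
The plan is to obtain (2) by dualising Theorem~\ref{remchain}, and then to deduce (1) by reducing to a single elementary M\o ller step where the causal supports of the Green operators do the work. For (2) I would start from the intertwining identity $\mu\,\kappa_{gg'}\,\P'\,\R=\P$ of Theorem~\ref{remchain}(1), now specialised as in Proposition~\ref{MollerAdj} by choosing $\rho=c_0^{\chi}$ and $\rho'=c_0^{1}$ at each elementary step. With this choice the accumulated multiplier telescopes to the global volume ratio, $\mu=c_g^{g'}$ with $\vol_{g'}=\mu\,\vol_g$. Taking the $\dagger_{gg}$-adjoint of the operator identity and reading it off through Proposition~\ref{propadjoint} then gives the result: part (2) of that proposition turns $\P$ and $\P'$ into their own adjoints on $\Gamma_c$, part (4) reverses the order of the composition, and the only nonstandard factor $(\mu\,\kappa_{gg'})^{\dagger_{g'g}}$ is computed by hand. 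Precisely because $\mu$ is the volume ratio, multiplying by $\mu$ converts $\vol_g$ into $\vol_{g'}$ and the fibre isometry $\kappa_{gg'}$ contributes its inverse, so that $(\mu\,\kappa_{gg'})^{\dagger_{g'g}}=\kappa_{g'g}$. Assembling the pieces yields $\P|_{\Gamma_c}=\R^{\dagger_{gg'}}\,\P'\,\kappa_{g'g}|_{\Gamma_c}$, which is exactly (2); the existence of all the adjoints invoked is guaranteed by Propositions~\ref{propadjoint}(2) and~\ref{MollerAdj}(3).

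For (1) I would first observe that the relation $\R\,\G_{\P}\,\R^{\dagger_{gg'}}=\G_{\P'}$ is stable under composition: since $\R=\R_N\circ\cdots\circ\R_1$ and, by Proposition~\ref{propadjoint}(4), $\R^{\dagger_{gg'}}=\R_1^{\dagger}\circ\cdots\circ\R_N^{\dagger}$, one reduces, working from the inside out, to proving the identity for a single elementary step, i.e. for $\R_+$ connecting $g_0\preceq g_\chi$ and, symmetrically, for $\R_-$ connecting $g_\chi\preceq g_1$. For $\R_+$ the target is $\R_+\,\G_{\P_0}\,\R_+^{\dagger_{g_0g_\chi}}=\G_{\P_\chi}$. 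I would first check that the left-hand side is a $\P_\chi$-bisolution: from $\P_\chi\R_+=\rho^{-1}\kappa_{\chi0}\P_0$ (Proposition~\ref{propNN}(2)) and $\P_0\G_{\P_0}=0$ on $\Gamma_c$ one gets $\P_\chi(\R_+\G_{\P_0}\R_+^{\dagger})=0$, while the explicit formula $\R_+^{\dagger}=\P_0\kappa_{0\chi}\G^-_{\P_\chi}$ of Proposition~\ref{MollerAdj} together with $\G^-_{\P_\chi}\P_\chi=\Id$ on $\Gamma_c$ gives $\R_+^{\dagger}\P_\chi=\P_0\kappa_{0\chi}$ on $\Gamma_c$, whence $(\R_+\G_{\P_0}\R_+^{\dagger})\P_\chi=0$ using $\G_{\P_0}\P_0=0$.

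Both $\R_+\G_{\P_0}\R_+^{\dagger}$ and $\G_{\P_\chi}$ therefore map $\Gamma_c(\V_{g_\chi})$ into $\Ker_{sc}(\P_\chi)$ and annihilate $\P_\chi\Gamma_c$. To promote this to an equality I would compare their antisymmetric Hodge forms: since a smooth section is determined by its $g_\chi^{\sharp}$-pairing against all of $\Gamma_c(\V_{g_\chi})$, it suffices to show $\langle\fh',\R_+\G_{\P_0}\R_+^{\dagger}\ff'\rangle_{g_\chi}=\langle\fh',\G_{\P_\chi}\ff'\rangle_{g_\chi}$ for all $\ff',\fh'\in\Gamma_c(\V_{g_\chi})$. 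Pushing $\R_+$ onto the left slot through its adjoint rewrites the left-hand side as $\langle\R_+^{\dagger}\fh',\G_{\P_0}\R_+^{\dagger}\ff'\rangle_{g_0}$; inserting $\R_+^{\dagger}=\P_0\kappa_{0\chi}\G^-_{\P_\chi}$, using the support properties of $\G^{\pm}$ and the past-identity $\R_+=\Id$ for $t<t_0$ (Proposition~\ref{propNN}(3)), I expect the computation to collapse to the Cauchy-surface symplectic pairing of $\P_\chi$, i.e. to $\langle\fh',\G_{\P_\chi}\ff'\rangle_{g_\chi}$. The symmetric argument for $\R_-$ (identity in the future, with $\G^+$ in place of $\G^-$) settles the other elementary step, and composition finishes the proof.

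\textbf{Main obstacle.} The delicate point is this last step. Because $\R_+$ and $\R_-$ reduce to the identity only in the past, respectively the future, and not globally, the advanced and retarded Green operators are \emph{not} individually intertwined: one readily computes $\P_\chi(\R_+\G^+_{\P_0}\R_+^{\dagger})\ff'=\rho^{-1}\kappa_{\chi0}\R_+^{\dagger}\ff'\neq\ff'$ in general, so $\R_+\G^+_{\P_0}\R_+^{\dagger}\neq\G^+_{\P_\chi}$. Only the difference $\G^+-\G^-$ survives, through a cancellation of the wrong-support contributions. Making this cancellation rigorous requires careful control of the future-, past- and spatially-compact supports, so that the boundary (flux) terms generated by integration by parts concentrate on a single Cauchy surface and reproduce exactly the causal propagator's symplectic form, as in the Klein--Gordon treatment of~\cite{Norm}.
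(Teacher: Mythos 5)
Your treatment of (2) is sound and, if anything, more systematic than the paper's (which merely refers back to the proof of Proposition~\ref{propNN}): with the specialization $\rho=c_0^{\chi}$, $\rho'=c_0^{1}$ at each elementary step the multiplier $\mu$ of Theorem~\ref{remchain} telescopes to the global volume ratio, $(\mu\kappa_{gg'})^{\dagger_{g'g}}=\kappa_{g'g}|_{\Gamma_c}$ follows from the fibre isometry property, and Proposition~\ref{propadjoint} then yields (2) by dualising the intertwining relation. Likewise, your reduction of (1) to a single elementary step via Proposition~\ref{propadjoint}(4), and your identification of the obstruction --- that $\R_{\pm}$ do \emph{not} intertwine the advanced and retarded Green operators separately --- are both correct and match the structure of the paper's argument.

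The gap is that for (1) you stop exactly where the proof has to be done. Your plan is to observe that $\R_+\G_{\P_0}\R_+^{\dagger_{g_0g_\chi}}$ and $\G_{\P_\chi}$ are both $\P_\chi$-bisolutions and then to verify the pairing identity $\langle\R_+^{\dagger}\fh',\G_{\P_0}\R_+^{\dagger}\ff'\rangle_{g_0}=\langle\fh',\G_{\P_\chi}\ff'\rangle_{g_\chi}$; but being a bisolution does not pin the operator down, so this pairing identity \emph{is} the theorem, and you leave it at ``I expect the computation to collapse.'' The paper closes precisely this step by a direct algebraic computation rather than by a flux/integration-by-parts argument: inserting the explicit formula $\R_+^{\dagger_{g_0g_\chi}}=\P_0\kappa_{0\chi}\G^-_{\P_\chi}$ from Proposition~\ref{MollerAdj} and using the Green identities, one obtains
$$\R_+\,\G^{\pm}_{\P_0}\,\R_+^{\dagger_{g_0g_\chi}}=\G^{\pm}_{\P_\chi}-\G_{\P_\chi}\Big(\P_\chi-\tfrac{\kappa_{\chi0}}{c_0^{\chi}}\,\P_0\,\kappa_{0\chi}\Big)\G^-_{\P_\chi}\:,$$
i.e.\ the advanced and the retarded propagators each acquire the \emph{same} correction term (its rightmost factor is $\G^-_{\P_\chi}$ in both cases because that is what sits inside $\R_+^{\dagger}$), so the correction cancels identically in the difference $\G^+-\G^-$ with no support or boundary-term analysis required. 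The mirror computation for $\R_-$ (whose adjoint contains $\G^+_{\P_1}$) and composition along the paracausal chain then finish the proof. Without this explicit cancellation --- or an equally concrete substitute for your ``collapse'' step --- the proof of (1) is incomplete.
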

\begin{proof}
Since $g\simeq g'$ and the M\o ller map is defined as the composition 
$\R=\R_N\circ...\circ\R_1$, we can use properties (4) in Proposition~\ref{propadjoint} and reduce to the case where $g=g_0 \preceq g_1=g'$. With this assumption, (2) can be obtained following the proof of Proposition~\ref{propNN} and (3) is identical to \cite[Theorem 4.12 (5)]{Norm}. So we leave it to the reader.

It remains to prove (1).  Decomposing $\R$ as above, we  define the maps $\R^{g_0g_\chi}_\pm$, $\R^{g_\chi g_1}_\pm$ by choosing   the various arbitrary functions   as in Proposition \ref{MollerAdj}. We first notice
\begin{gather*}
\R_+ \G^+_{\P_0}\R_+^{\dagger_{g_0 g_{\chi}}}=\left(\kappa_{\chi 0} - \G_{c_0^{\chi} \P_\chi}^+\left(c_0^{\chi}  \P_\chi \kappa_{\chi0}  -\kappa_{\chi0} \P_0\right)\right)\G^+_{\P_0}\left( \P_0\kappa_{0 \chi}\G_{\P_\chi}^-\right)|_{\Gamma_c (\V_{\chi})}\\
=\G_{c_0^{\chi} \P_\chi}^+\kappa_{\chi0}\left( \P_0\kappa_{0 \chi}\G_{\P_\chi}^-\right)|_{\Gamma_c (\V_{\chi})}=\G^+_{\P_{\chi}}-\G^+_{\P_{\chi}}\left(\P_{\chi}-\frac{\kappa_{\chi 0}}{c_0^{\chi}}\P_0\kappa_{0 \chi}\right)\G^-_{\P_{\chi}}.
\end{gather*}
where the first equality follows by definition, in the second one we have used the properties of Green operators, while in the third one we have just equated the two expressions for the adjoint operator according to (1) in Proposition~\ref{adjoints} and performed some trivial algebraic manipulations.\\
Another analogous computation can be performed for the retarded Green operator yielding
\begin{gather*}
\R_+ \G^+_{\P_0}\R_+^{\dagger_{g_0 g_{\chi}}}=\G^-_{\P_{\chi}}-\G^+_{\P_{\chi}}\left(\P_{\chi}-\frac{\kappa_{\chi 0}}{c_0^{\chi}}\P_0\kappa_{0 \chi}\right)\G^-_{\P_{\chi}}.
\end{gather*}
Therefore, subtracting the two terms we get
$$\R_+ \G_{\P_0}\R_+^{\dagger_{g_0 g_{\chi}}}=\R_+ (\G^+_{\P_0}-\G^-_{\P_0})\R_+^{\dagger_{g_0 g_{\chi}}}=\G_{\P_{\chi}}.$$
Applying now $\R_-$ and its adjoint we get the claimed result.
\end{proof}

\section{M\o ller $*$-Isomorphisms and Hadamard States}\label{SECALGEBRAS}

\subsection{The CCR algebra of observables of the Proca field}
We now pass to introduce the algebraic formalism to quantize the Proca field \cite{Fewster,koProca}.

Let $(\M,g)$ be a globally hyperbolic spacetime, $\V_g$ be a  Proca bundle  and  denote by $\P:\Gamma(\V_g)\to\Gamma(\V_g)$ the Proca operator. Following~\cite{IgorValter}, we call \textbf{on-shell Proca $CCR$ $*$-algebra}, the $*$-algebra defined as 
$$\mathcal{A}_g=\mathfrak{A}_g / \mathfrak{I}_g$$ where:
\begin{itemize}
 \item[-] $\mathfrak{A}_g$ is the  free complex unital algebra finitely  generated by the set of abstract elements $\II$ (the unit element),  $\fa(\ff)$ and $\fa(\ff)^*$ for all $\ff\in\Gamma_c (\V_g)$, and  endowed with the unique  (antilinear) $*$-involution which associates $\fa(\ff)$ to $\fa(\ff)^*$and satisfies   $\II^* =\II$ and $(ab)^*= b^*a^*$.
 \item[-]  $\mathfrak{I}_g$ is the two-sided $*$-ideal generated by the following elements of $\mathfrak{A}_f$:
\begin{enumerate}
	\item $\fa(a\ff+b\fh)-a\fa(\ff)-b\fa(\fh)\:, \quad \forall a,b\in\RR \quad \forall \ff, \fh\in\Gamma_c(\V_g)$;
	\item $\fa(\ff)^*-\fa(\ff)\:,\quad \forall \ff\in\Gamma_c(\V_g)$;
	\item $\fa(\ff)\fa(\fh)-\fa(\fh)\fa(\ff)-i\G_{\P}(\ff,\fh)\II\:,  \quad \forall \ff,\fh\in\Gamma_c(\V_g)$;
	\item $\fa(\P \ff)$\:, \quad $\forall \ff\in\Gamma_c(\V_g)$.
\end{enumerate}

\end{itemize}
 The four entries of the list respectively implement linearity, hermiticity of the generators, canonical commutation relations and the equations of motion for the quantum field.

\remark\label{reminterp}  As in \cite{Fewster}, we adopt in this paper the interpretation of $\fa(\ff)$ is $( \fa|\ff )$, where the pairing is the Hodge inner  product of $1$-forms (\ref{hodge}). \\
\noindent An equivalence class in $\mathcal{A}_g$ is denoted by $[\fa(\ff)]=\hat{\fa}(\ff)$, the equivalence class corresponding to the identity is denoted by $[\II]=\Id$.
The hermitian elements of the algebra $\mathcal{A}_g$ are called \textbf{observables}.

\remark Requirement 4, when we pass to the quotient algebra corresponds to the distributional relation $\P\hat{\fa}=0$, when we take Remark \ref{reminterp} into account  and the fact that $\P$ is formally  selfadjoint. Since every solution of the Proca equation is a co-closed solution of the Klein-Gordon equation and {\em vice versa}, we conclude that 
$\delta \hat{\fa}=0$, i.e. $\hat{\fa}(d\ff)=0$ for every $\ff \in \Gamma_c(\V_g)$, must be valid.\\
If, moreover, we deprive the ideal $\mathfrak{I}_g$ of the generators in 4, the quotient algebra is said to be \textbf{off-shell}, however it would still be convenient to assume the-closedness constraint when defining the off-shell algebra. That is  when defining  the relevant ideal of the free off-shell algebra,  we should keep 1-3, we should  drop 4,  and we should replace it with the weaker condition
 \begin{enumerate}
\item[4'.] $\hat{\fa}(d \ff)$\:, \quad $\forall \ff\in\Gamma_c(\V_g)$.
 \end{enumerate}
{\em This work however deals with the on-shell algebra only, we shall  indicate by $\mathcal{A}_g$ throughout. A study of the off-shell algebra, which may result in some relevance in perturbative renormalization procedure  will be done elsewhere.}

\subsection{M\o ller $*$-isomorphism and Hadamard states}

From now on let $X$ be a topological vector space, we indicate by $X'$ its topological dual. For example $\Gamma'_c(V_g)$ represents the space of distributions acting on compactly supported test functions, and shall not be confused with the space of compactly supported distributions.

Having built the $CCR$-algebra, the subsequent step in quantization consists in finding a way to associate numbers to the abstract operators in $\mathcal{A}_g$ by identifying a distinguished state. For sake of completeness, let us recall that a \textbf{state} over the Proca algebra $\mathcal{A}_g$ a $\CC$-linear functional $\omega:\mathcal{A}_g\to\CC$ which is
	\begin{itemize}
		\item[(i)]\textbf{Positive:} $\omega(a^*a)\geq0\quad\forall a\in\mathcal{A}_g$,
		\item[(ii)]\textbf{Normalized:} $\omega(\II)=1$
	\end{itemize}

A generic element of the $CCR$-algebras $\mathcal{A}_g$ of a quantum field can be written as a finite polynomial of the generators $\hat\fa(f)$, where the zero grade term is proportional to $\II$. To specify the action of a state it is sufficient to know its action on the monomials, i.e its \textbf{n-point functions}:
\begin{equation*}
  \omega_n(\ff_1,\dots,\ff_n) :=\omega(\hat\fa(\ff_1)\dots\hat\fa(\ff_n)) 
\end{equation*} 
with $\ff_1,\dots,\ff_n\in\Gamma_c(\V_g)$.

If we impose continuity with respect to the usual topology on the space of compactly supported test sections we can uniquely extend  the $n$-point functions to distributions 
in  $\Gamma_c'(\V_g^{n\boxtimes})$ we shall hereafter indicate by the symbol $\tilde\omega_n$. 

Among all possible states the physical ones are the so-called {\em quasifree} (or {\em Gaussian}) {\em Hadamard} states. Quasifree means that the $n$-point distributions of the state have a structure resembling the one of a free theory, i.e they all can be recovered just by knowing the two-point distribution.

\begin{definition}\label{quasifree} Consider the globally hyperbolic spacetime $(\M,g)$ and a  state $\omega:\mathcal{A}_g\to\CC$ 
for the Proca algebra of observables on $(\M,g)$. $\omega$ is  called
	 \textbf{quasifree},  if for all choices of $\ff_i \in \Gamma_c(V_g)$ \begin{itemize}
		\item[(i)] $\omega_{n}(\ff_1,\dots,\ff_{n})=0$, if $n\in \mathbb{N}$ is odd,
		\item[(ii)]  $\omega_{2n}(\ff_1,\dots,\ff_{2n})=\sum_{\Pi}  \omega_2(f_{i_1},f_{i_2}) \cdots  \omega_2(f_{i_{n-1}},f_{i_n})$,
		if $n\in \mathbb{N}$ is even,
	\end{itemize}
	where $\Pi$  refers to the class of all possible decompositions of the set $\{1,2,\ldots, 2n\}$ into $n$ pairwise disjoint subsets of $2$ elements $\{i_1,i_2\}$, $\{i_3,i_4\}$, $\ldots$, $\{i_n-1,i_n\}$ with $i_{2k-1} < i_{2k}$ for 
	$k=1,2,\ldots, n$.
\end{definition}

Regarding the notion of Hadamard state for the Proca field, which is a vector field, we adopt the notions of microlocal analysis for vector-valued distributions introduced in \cite{SV}.

\remark\label{remdist}  The interpretation of the action of a distribution on test sections  is  formalized  in the sense of the Hodge product (\ref{hodge}). This interpretation is necessary in order to agree with the interpretation of the symbol $\hat{\fa}(\ff)$ stated  in Remark \ref{reminterp}, since some of the distributions we shall consider in the rest of the paper arise from field operators, as the two-point functions $\omega_2(\ff,\fg) := \omega(\hat{\fa}(\ff)\hat{\fa}(\fg))$.  If 
$$\Gamma_c(\V_g)\ni \fg \mapsto  \omega_2(\cdot ,\fg)  \in  \Gamma'_c(\V_g)$$
is well-defined and  continuous, $\omega_2$  actually defines a distribution of $\Gamma_c'(\V_g\boxtimes \V_g)$ and {\em vice versa}, as a consequence of the {\em Schwartz kernel theorem} as clarified below.\\
From now on, if $F \in \Gamma'_c(\V_g)$ and $\ff \in \Gamma_c(\V_g)$, the action of the former on the latter is therefore interpreted as the Hodge product (\ref{hodge}) $$F(\ff)= ( F| \ff)  =  ( \ff|F)   = \int_\M \g^\sharp(F, \ff) \vol_g\:.$$
With a straightforward extension of the Definition \ref{defadjoint},  operators working on a generic space of $k$ test-forms  $\T: \Omega^k_c(\M)\to \Omega^k_c(\M)$ can be extended to the topological duals, i.e the associated distributions, in terms of the action  $\T^\dagger$  on the argument of the distribution:
$$(\T F)(\ff) := F(\T^\dagger \ff)\:.$$
For instance, if $F \in {\Omega^2}'_c(\M)$ and $H \in {\Omega^0}'_c(\M)$,
$$(\delta F)(\ff) := F(d\ff)\:, \quad (dH)(\ff) := H(\delta \ff)\:, \quad \ff \in \Omega_c^1(\M)\:.$$
If $\S : \Gamma_c(\V_g) \to \Gamma'_c(\V_g)$ is continuous (in particular if $\S : \Gamma_c(\V_g) \to \Gamma_c(\V_g)$ is continuous), the standard Schwartz kernel theorem permits to introduce the distribution indicated with the same symbol $\S \in \Gamma_c'(\V_g\boxtimes \V_g)$, which is the unique distribution such that
$$\S(\ff \otimes \fg) := \S(\ff,\fg) := (\S\fg)(\ff)\:\: ``= (  \ff| \S \fg )''\:.$$
Conversely, a distribution of $\Gamma_c'(\V_g\boxtimes \V_g)$ defines a unique map $ \Gamma_c(\V_g) \to \Gamma'_c(\V_g)$ that fulfills the identity above.
In the rest of the work we shall take advantage of these  facts and notations above. Furthermore,  we adopt the notion of {\em  wavefront set}  of a distribution on test sections of  a vector bundle on $\M$ as defined in \cite{SV}.\\

\begin{definition}\label{HadamardMMV} Consider the globally hyperbolic spacetime $(\M,g)$ and a  state $\omega:\mathcal{A}_g\to\CC$ 
for the Proca algebra of observables on $(\M,g)$. $\omega$ is  called
	  \textbf{Hadamard} if it is quasifree and its two-point function $\omega_2\in \Gamma_c'(\V_g\boxtimes \V_g)$ satisfies the {\bf microlocal spectrum condition}\footnote{The notion of wavefront set  refers to distributions acting on {\em complex} valued test sections in view of the pervasive use of the Fourier transform. For this reason, when dealing with these notions we consider the natural complex extension of the involved distributions, by imposing that they are also  $\mathbb{C}$-linear.}, i.e.
	\begin{equation} \label{WHadMMV} WF(\omega_2)={\cal H}:= \{(x,k_x;y,-k_y)\in T^*\M^2\backslash\{0\}\:|\:(x,k_x)\sim_{\parallel}(y,k_y), k_x\triangleright0\}\:.\end{equation}
Above,  $(x,k_x)\sim_{\parallel}(y,k_y)$ means that $x$ and $y$ are connected by a lightlike geodesic and $k_y$ is
the co-parallel transport of $k_x$ from $x$ to $y$ along said geodesic, whereas $k_x \triangleright 0$ means that the
covector $k_x$ is future pointing.
\end{definition}

As for Klein-Gordon scalar field theory, Hadamard states for Proca fields have two important properties which were also established in \cite{Fewster} for the notion of Hadamard state adopted there. We present here independent proofs only based on Definition~\ref{HadamardMMV}. Indeed, \cite{Fewster} uses a definition of Hadamard states which is apparently different from our definition. A comparison of the two definitions  and an equivalence result appear in Section \ref{SECHADFP}.\\
The first property of Hadamard states is the fact that the difference between the two-point functions of a pair of Hadamard states is a smooth function. This fact plays a crucial role in the point-splitting renormalization procedure (for instance of Wick polynomials and time-ordered polynomials \cite{HW1,HW2, KM, KMM} and of the stress-energy tensor \cite{W,stress,HM}) and is, in fact, one of the reasons for assuming that Hadamard states are the physically relevant ones.

\begin{proposition}\label{smoothnessMMV}
Let $\omega, \omega' \in \Gamma'_c(V_g\boxtimes V_g)$ be a pair of Hadamard states  on the algebra $\mathcal{A}_g$ of  the Proca field according to Definition~\ref{HadamardMMV}. Then,   $\omega-\omega' \in \Gamma(V_g\boxtimes V_g)$, i.e., $\omega-\omega'$ is smooth.\\
More generally, $\omega-\omega'$ is smooth  if  $\omega,\omega'$ are distributions satisfying (\ref{WHadMMV}) such that  their  antisymmetric parts coincide mod. $C^\infty$.
\end{proposition}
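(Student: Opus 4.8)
The plan is to reduce the claim to a statement purely about wavefront sets and to exploit the asymmetry that \eqref{WHadMMV} builds into $\mathcal{H}$ under the exchange of the two arguments. Write $u := \omega - \omega' \in \Gamma'_c(\V_g\otimes \V_g)$. Since the wavefront set is subadditive and both two-point functions satisfy the microlocal spectrum condition \eqref{WHadMMV}, I first record
$$ WF(u) \subseteq WF(\omega)\cup WF(\omega') = \mathcal{H}\,. $$
The decisive additional input is that the \emph{antisymmetric} part of $u$ is smooth. For the first statement this part in fact \emph{vanishes}: the canonical commutation relations encoded in the ideal $\mathfrak{I}_g$ force $\omega(\ff,\fg)-\omega(\fg,\ff) = i\G_\P(\ff,\fg) = \omega'(\ff,\fg)-\omega'(\fg,\ff)$, so the two antisymmetric parts coincide; for the more general statement this coincidence mod $C^\infty$ is precisely the hypothesis.

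Let $\tau:\M\times\M\to\M\times\M$ denote the exchange diffeomorphism $\tau(x,y)=(y,x)$. Because the antisymmetric part of $u$ is smooth, $u$ and $u\circ\tau$ differ by a smooth section and hence have the same wavefront set; on the other hand $WF(u\circ\tau)=\tau^*WF(u)$ by the transformation law of wavefront sets under a diffeomorphism. Combining these two facts with the inclusion above yields
$$ WF(u) = \tau^* WF(u) \subseteq \mathcal{H}\cap \tau^*\mathcal{H}\,. $$

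It remains to verify that $\mathcal{H}\cap\tau^*\mathcal{H}=\emptyset$, which is the heart of the argument. By definition a point $(x,\xi;y,\eta)$ lies in $\mathcal{H}$ precisely when $x,y$ are joined by a lightlike geodesic, $\xi\triangleright 0$, and $\eta$ equals minus the co-parallel transport of $\xi$ along that geodesic. Since co-parallel transport along a null geodesic preserves the time-orientation of null covectors, $\xi\triangleright 0$ forces its transport to be future-pointing, hence the second-slot covector $\eta$ to be \emph{past}-pointing. For the same point to lie in $\tau^*\mathcal{H}$ — equivalently for $(y,\eta;x,\xi)\in\mathcal{H}$ — one would instead need $\eta\triangleright 0$, i.e. $\eta$ future-pointing. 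A nonzero null covector cannot be simultaneously future- and past-pointing, so the intersection is empty. Therefore $WF(u)=\emptyset$, and $u=\omega-\omega'$ is smooth.

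The only genuine subtlety I anticipate is bookkeeping: keeping the sign conventions of the definition of $\mathcal{H}$ straight when passing through $\tau^*$, and checking that the vector-bundle-valued wavefront set of \cite{SV} obeys the same subadditivity and diffeomorphism-covariance used above (it does, these being fibrewise and local statements). The orientation-preservation of co-parallel transport along null geodesics — the geometric fact that actually renders $\mathcal{H}$ and $\tau^*\mathcal{H}$ disjoint — is the conceptual crux, though it is standard in globally hyperbolic spacetimes.
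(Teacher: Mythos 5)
Your proof is correct and takes essentially the same route as the paper's: the paper phrases the flip-map argument in terms of $\omega_2^{\pm}$ and the disjoint cones $N^{\pm}\times N^{\pm}$ (in the ``primed'' wavefront-set notation), which is exactly your observation that $\mathcal{H}\cap\tau^{*}\mathcal{H}=\emptyset$ combined with $WF(u)=\tau^{*}WF(u)$ coming from smoothness (indeed coincidence, via the CCR) of the antisymmetric parts. No gaps.
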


\begin{proof} Let us first prove the second statement.
Let us  define $\omega_2^+(\ff,\fg):= \omega_2(\ff,\fg)$ and $\omega_2^-(\ff,\fg):= \omega_2(\fg,\ff)$,
$$N^{+}:= \{(x,k)\in T^*\M\backslash\{0\}\:|\: k_ak^a =0\:, \: k\triangleright0\}\:, \quad N^{-}:= \{(x,k)\in T^*\M\backslash\{0\}\:|\: k_ak^a =0\:,\: k\triangleleft0\}\:,$$
\begin{equation}\Gamma' := \{(x,k_x;y,-k_y)\in T^*\M^2\backslash\{0\}\:|\:(x,k_x;y, k_y)\in \Gamma\}\:.\label{Gammaprime}\end{equation}
for every $\Gamma \subset \T^*\M^2\setminus \{0\}$.
If both distributions satisfy (\ref{WHadMMV}), then 
\begin{equation}\label{WHadPpm} WF(\omega_2^\pm)'  \subset   N^\pm \times N^\pm\:.\end{equation}
With the hypotheses of the proposition  define
$R^\pm := \omega_2^\pm - \omega'^\pm_2$. Since $\omega_2^+ - \omega^-_2= \omega'^+_2 - \omega'^-_2 + F$ where $F$ is a smooth function, we have that 
$R^+=-R^-$ mod. $C^\infty$. At this juncture, (\ref{WHadPpm}) yields $WF(R^+)'  \cap WF(R^-)' =\emptyset$ because $N^+\cap N^-=\emptyset$.
Since $WF(R^+)=WF(-R^- +F) = WF(-R^-)= WF(R^-)$, we conclude that the wavefront set of the distributions  $R^\pm$ is empty and thus they are smooth functions. This is the thesis of the second statement. The latter  statement implies the former because, since both $\omega$ and $\omega'$ are states on the Proca $*$-algebra, their antisymmetric part must be identical and it amounts to $i\G_P$, furthermore $\omega$ and $\omega'$ satisfy (\ref{WHadMMV}) in view of Definition~\ref{HadamardMMV}.
\end{proof}

The second property regards the so called propagation property of the  Hadamard singularity or also the local-global feature of Hadamard states. It has a long history which can be traced back to \cite{FSW} passing through  \cite{KW}, \cite{Rad1,Rad2} and \cite{SV} (and the recent \cite{Valter}) at least. 

\begin{proposition}\label{propagationMMV}
Consider a globally hyperbolic spacetime $(\M,g)$ and a globally hyperbolic neighborhood $\mathcal{N}$  of a smooth spacelike Cauchy surface $\Sigma$ of $(\M,g)$.
Finally, let $\omega_{\mathcal{N}}$ be a quasifree state for the on-shell algebra of the Proca field in $(\mathcal{N}, g|_{\mathcal{N}})$. The following facts are valid.
\begin{itemize}
\item[(a)] There exists a unique a quasifree  state $\omega : \mathcal{A}_g \to \mathbb{C}$ for the Proca field on the whole $(\M,g)$ which restricts to  $\omega_{\mathcal{N}}$ on 
the Proca algebra on $\mathcal{N}$.
\item[(b)] If $\omega_{\mathcal{N}}$ is Hadamard according to Definition \ref{HadamardMMV}, then $\omega$
is.
\end{itemize}
\end{proposition}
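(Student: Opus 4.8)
The plan is to prove (a) by the \emph{time-slice property} of the on-shell Proca algebra, and then to deduce (b) by reducing the two-point function to a bisolution of the associated \emph{normally hyperbolic} Klein--Gordon operator $\N$ and invoking the propagation of singularities. The main obstacle lies in (b): since $\P$ is not normally hyperbolic, the standard propagation theorem does not apply to it directly, and the whole point will be to move the microlocal analysis onto $\N$.

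For (a), I would first establish that every generator $\hat{\fa}(\ff)$, $\ff\in\Gamma_c(\V_g)$, coincides in $\mathcal{A}_g$ with a generator localized in $\mathcal{N}$. Pick two smooth spacelike Cauchy surfaces $\Sigma_\pm\subset\mathcal{N}$ with $\Sigma$ between them and $J^+(\Sigma_-)\cap J^-(\Sigma_+)\subset\mathcal{N}$, and a cutoff $\chi\in C^\infty(\M;[0,1])$ equal to $1$ in the future of $\Sigma_+$ and to $0$ in the past of $\Sigma_-$. Setting $A:=\G_\P\ff\in\Gamma_{sc}(\V_g)$, which solves $\P A=0$, I put $\ff':=\P(\chi A)=[\P,\chi]A$; the commutator $[\P,\chi]$ is first order and supported where $d\chi\neq 0$, so together with the spatial compactness of $A$ this makes $\ff'$ compactly supported inside $\mathcal{N}$. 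Using $\G_\P^+\P=\Id$ on $\Gamma_{pc}$ applied to $\chi A$ and $\G_\P^-\P=\Id$ on $\Gamma_{fc}$ applied to $(\chi-1)A$ (note $\P(\chi A)=\P((\chi-1)A)$), one gets $\G_\P\ff'=\chi A-(\chi-1)A=A=\G_\P\ff$, whence $\ff-\ff'\in\Ker\G_\P=\P\Gamma_c(\V_g)$ by \eqref{cruciallemma0} and so $\hat{\fa}(\ff)=\hat{\fa}(\ff')$. Consequently the $*$-homomorphism $\iota:\mathcal{A}_{\mathcal{N}}\to\mathcal{A}_g$ induced by regarding sections supported in $\mathcal{N}$ as sections on $\M$ is surjective; it is well defined and injective because, $\mathcal{N}$ being causally convex, the Proca propagator on $\mathcal{N}$ is the restriction of $\G_\P$, so the CCR and the on-shell relations match. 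Thus $\iota$ is a $*$-isomorphism and I define $\omega:=\omega_{\mathcal{N}}\circ\iota^{-1}$, which is automatically a quasifree state restricting to $\omega_{\mathcal{N}}$; uniqueness follows since any quasifree extension is fixed by its two-point function, which the time-slice reduction expresses through $\omega_{\mathcal{N},2}$.

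For (b), the crucial observation is that the on-shell relations force $\omega_2$ to be a bisolution of the normally hyperbolic operator $\N=\delta d+d\delta+m^2$. Indeed $\hat{\fa}(\P\ff)=0$ gives $\omega_2(\P\ff,\fg)=\omega_2(\ff,\P\fg)=0$, i.e.\ $\P_x\omega_2=\P_y\omega_2=0$ by formal selfadjointness of $\P$; and the induced constraint $\hat{\fa}(d\ff)=0$ gives $\delta_x\omega_2=\delta_y\omega_2=0$. Hence $\N_x\omega_2=\P_x\omega_2+d_x\delta_x\omega_2=0$ and likewise $\N_y\omega_2=0$. Moreover, by the construction in (a) the restriction of $\omega_2$ to $\mathcal{N}\times\mathcal{N}$ equals $\omega_{\mathcal{N},2}$, so $WF(\omega_2)\cap(T^*\mathcal{N})^2=\mathcal{H}\cap(T^*\mathcal{N})^2$ by hypothesis (null geodesics between points of the causally convex $\mathcal{N}$ stay in $\mathcal{N}$, so the intrinsic Hadamard set of $\mathcal{N}$ is the restriction of $\mathcal{H}$).

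Finally I would apply the vector-valued propagation of singularities for the normally hyperbolic $\N$ as developed in \cite{SV} (in the spirit of \cite{Rad1,Rad2,FSW,KW}): the principal symbol of $\N$ is the scalar $-g^\sharp(k,k)\Id$, so $WF(\omega_2)$ is contained in the null bundle in each entry and is a union of complete bicharacteristic strips, along which the polarization is coparallel-transported, which is precisely the relation $\sim_\parallel$ entering $\mathcal{H}$. Since $\Sigma\subset\mathcal{N}$ is a Cauchy surface, every such strip meets $T^*\mathcal{N}$, so the global wavefront set is determined by $WF(\omega_2)\cap(T^*\mathcal{N})^2=\mathcal{H}\cap(T^*\mathcal{N})^2$; the exclusion of spurious one-sided directions and the positive-frequency character are guaranteed by the antisymmetric part $\omega_2(\ff,\fg)-\omega_2(\fg,\ff)=i\G_\P(\ff,\fg)$, whose wavefront set $WF(\G_\P)=WF(\G_\N)=\{(x,k_x;y,-k_y)\,:\,(x,k_x)\sim_\parallel(y,k_y)\}$ is the two-sided Hadamard set (using \eqref{GPGN} and that a differential operator does not enlarge the wavefront set). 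Flowing in both directions then yields $WF(\omega_2)=\mathcal{H}$, so $\omega$ is Hadamard. The delicate point throughout is that $\P$ itself is not normally hyperbolic; it is resolved by the co-closedness constraint $\delta\omega_2=0$, after which the analysis is carried entirely by $\N$ and \cite{SV} applies.
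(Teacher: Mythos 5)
Your part (a) is essentially the paper's argument: your $\ff'=\P(\chi A)$ is exactly the paper's map $\T\ff=\P\chi\G_\P\ff$, and the identification $\hat{\fa}(\ff)=\hat{\fa}(\T\ff)$ via $\Ker\G_\P=\P\Gamma_c(\V_g)$ is the same mechanism; you merely package it as a time-slice $*$-isomorphism $\iota$ rather than writing the extended two-point function $\omega_2=\omega_{\mathcal{N}2}\circ(\T\otimes\T)$ directly, which is an equivalent (and slightly tidier) bookkeeping.

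Part (b) is where you genuinely diverge. The paper stays with the explicit composition $\omega_2=\omega_{\mathcal{N}2}\circ(\P\chi\G_\P\otimes\P\chi\G_\P)$ and obtains $WF(\omega_2)\subset\mathcal{H}$ from the wavefront calculus for compositions, using $WF(\G_\P)=WF(\G_\N)$; the lower bound $\mathcal{H}\subset WF(\omega_2)$ then comes from the antisymmetric part $i\G_\P$ and the disjointness of $N^+\times N^+$ and $N^-\times N^-$, exactly as you indicate. You instead promote $\omega_2$ to a bisolution of the normally hyperbolic $\N=\P+d\delta$ via the on-shell constraint $\hat{\fa}(dh)=0$ (which does hold, since $dh=\P(m^{-2}dh)$ for $h\in\Omega^0_c(\M)$), and then invoke the local-to-global propagation theorem of \cite{Rad2,SV}. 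This is a legitimate alternative and arguably more conceptual, but two caveats deserve to be made explicit. First, the theorems of \cite{SV} are stated for two-point functions whose commutator is $i\G_\N$ (mod $C^\infty$), whereas here it is $i\G_\P=i\G_\N+\tfrac{i}{m^2}d\delta\G_\N$, which is \emph{not} smoothly equivalent to $i\G_\N$; one must check (as you implicitly do) that the proof only uses $WF(\G_\P)=WF(\G_\N)$, so the citation cannot be verbatim. Second, your identification of the intrinsic Hadamard set of $\mathcal{N}$ with $\mathcal{H}\cap(T^*\mathcal{N})^2$, and the restriction of the Proca propagator to $\mathcal{N}$, both use causal convexity of $\mathcal{N}$, which is an extra hypothesis beyond ``globally hyperbolic neighborhood''; the paper's proof has the same implicit reliance (cf.\ Remark \ref{remarkN} and \cite{Valter}), so this is a shared, not a new, gap. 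With those two points acknowledged, your argument closes.
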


\begin{proof} (a) 
According to (\ref{cruciallemma0}),
$\G_\P \ff =0 \: \mbox{for $\ff \in \Gamma_c(\V_g)$}$ if and only if $\ff = \P\fg \: \mbox{for some $\fg \in \Gamma_c(\V_g)$}$.
 We will use this fact to construct $\omega$ out of $\omega_{\mathcal{N}}$. 
Consider two other  smooth spacelike surfaces (for both $\M$ and $\mathcal{N}$) $\Sigma_+$ in the future of $\Sigma$ and $\Sigma_-$ in the past of $\Sigma$. Let $\chi^+,\chi^- :\M \to [0,1]$ be smooth maps such that $\chi^+(p)=0$ if $p$ stays in the past of $\Sigma_-$ and $\chi^+(p)=1$ if $p$ stays in the future of $\Sigma_+$ and $\chi^++\chi^-=1$.
 Then, defining \begin{equation}\T \ff := \P \chi^+ \G_\P \ff\:, \quad \ff \in \Gamma_c(\V_g)\label{PREdefT}\end{equation} we have that $\T \ff \in \Gamma_c(\V_g|_\mathcal{N})$  (more precisely $supp(\T\ff)$ stays between $\Sigma_-$ and $\Sigma_+$), and
\begin{equation}
 \T \ff - \ff = \P \fg \quad \mbox{for some $\fg \in \Gamma_c(\V_g)$}\:,\label{defT}
 \end{equation} 
because by standard properties of Green operators:
$$
\begin{gathered}
 \G_\P \T\ff = \G_\P^+ \T\ff - \G_\P^- \T\ff = 
 \left(\G_\P^+\P\right) \chi^+ \G_\P \ff - \G_\P^-\P (1-\chi^-) \G_\P \ff =\\
  \chi^+ \G_\P \ff - \G_\P^-\left(\P  \G_\P \ff\right)+\G_\P^-\P\chi^-\G_\P\ff=
  \chi^+ \G_\P \ff +\chi^-\G_\P\ff =\G_\P \ff.
 \end{gathered}
 $$
Therefore we can apply (\ref{cruciallemma0}) obtaining (\ref{defT}).\\
\noindent With these results, let us define 
\begin{equation} 
\omega_2(\ff, \fg) := \omega_{\mathcal{N}2}(\T\ff,\T\fg) \:,\quad \ff,\fg \in \Gamma_c(\V_g)\:.\label{omegaT}
\end{equation}
Taking  the continuity properties of $\G_\P$ into account,  we leave to the reader the elementary proof of the fact that there is a unique distribution $\Gamma_c'(\V_g\boxtimes \V_g)$ such that its value on $\ff \otimes  \fg$ coincides with\footnote{If $\omega_2$ indicates the distribution associated to the two-point function: $\omega_2 = \omega_{\mathcal{N}2} \circ \T\otimes \T$.} $\omega_2(\ff, \fg)$. 
(We will indicate that distribution by $\omega_2$ with the usual misuse of language.)
Furthermore, in view of the definition of $\T$, it is obvious that $\omega_2$ is also a bisolution of the Proca equation, since $\G_\P \P = \P\G_\P =0$. Using   Definition \ref{quasifree} to construct a candidate quasifree state 
$\omega$ on $\mathcal{A}_g$ out of its two-point function $\omega_2$, it is clear that the positivity requirement is guaranteed because $\omega_{\mathcal{N}}$ satisfies it.
We conclude that there is a quasifree state 
$\omega$ on  $\mathcal{A}_g$, whose two point function is (\ref{omegaT}), and this two point function is a distribution which is also bisolution of the Proca equation. Finally, observe that $\omega$ extends to the whole $\mathcal{A}_g$ the state  $\omega_{\mathcal{N}}$ since  the states are  quasifree and the  two-point function of the former  extends the two point function of the latter. Indeed, 
$$ \omega_2(\ff, \fg) = \omega_{\mathcal{N}2}(\T\ff,\T\fg) = \omega_{\mathcal{N}2}(\ff,\fg) \quad \mbox{if $\ff,\fg \in \Gamma_c(\V_g|_{\mathcal{N}})$}\:.$$
This is because, specializing (\ref{cruciallemma0}) and (\ref{PREdefT})-(\ref{defT}) to the globally hyperbolic spacetime $(\mathcal{N}, g|_{\mathcal{N}})$ since $\ff \in \Gamma_c(\V_g|_{\mathcal{N}})$,
we have that $\T\ff -\ff = \P\fg$ with $\fg \in  \Gamma_c(\V_g|_{\mathcal{N}})$ and $ \omega_{\mathcal{N}2}$
vanishes when one argument has the form $\P\fg$, because it is a bisolution of the Proca equation in $\mathcal{N}$. \\
  There is only one such quasifree state which is an extension of $\omega_{\mathcal{N}}$
 to the whole algebra $\mathcal{A}_g$,
and such that its two-point function is a bisolution of the Proca equation.
In fact, another such extension  $\omega'$  would satisfy
$$\omega_2'(\ff,\fg) = \omega'_2(\T\ff, \T\fg) = \omega_{\mathcal{N}}(\T\ff,\T\fg) =  \omega_2(\T\ff, \T\fg)= \omega_2(\ff,\fg)\:, \quad \mbox{for all  $\ff, \fg \in \Gamma_c(\V_g)$}. $$
(b) We pass to the proof that $\omega$ is Hadamard if $\omega_{\mathcal{N}}$ is. We have to prove that (\ref{WHadMMV}) is valid if it is valid for $\omega_{\mathcal{N}}$ in $(\mathcal{N}, g|_{\mathcal{N}})$.
Interpreting the two-point functions as distributions of $\Gamma_c'(\V_g\boxtimes \V_g)$,
\begin{equation}\omega_2 = \omega_{\mathcal{N}2} \circ \P \chi^+ \G_\P \otimes  \P \chi^+ \G_\P\:.\label{compG}\end{equation}
The wavefront sets of $\G_\P$ and $ \P \chi^+ \G_\P$ can be computed as follows. First of all, from (\ref{GPGN}),
\begin{equation}\label{PQN}\G_\P = \Q \G_\N = \G_\N\Q	\end{equation}
where $\Q= I + m^{-2} d\delta_g $.  It is known that 
$$WF(\G_\N) =  \{(x,k_x;y,-k_y)\in T^*\M^2\backslash\{0\}\:|\:(x,k_x)\sim_{\parallel}(y,k_y) \}$$
Notice that, in particular $k_x \neq 0$ and  $k_y \neq 0$ nor simultaneously by definition, nor separately since they are connected by a coparallel transport.\\
So, since $\Q$ is a differential operator we immediatly deduce by \ref{PQN} that $WF(\G_\P)\subset WF(\G_N)$. Then we associate to the two operator their distributional kernels $\G_\P(x,y)$ and $\G_\N(x,y)$ and recast equation \ref{PQN} in the form:
$$\G_\P(x,y)=\left(\Id_x\otimes \Q_y\right)\G_\N(x,y)
,$$
which, by standard microlocal analysis results, implies that
$$WF(\G_\N)\subset Char(\Id_x\otimes \Q_y)\cup WF(\G_\P).
$$
However explicit computations give that $Char(\Id_x\otimes\Q_y)=\{(x,k_x;y,0)|(x,k_x)\in\T^*\M,y\in\M\}$ which does not intersect $WF(\G_\N)$ at any point, implying $$WF(\G_\N)\subset WF(\G_\P)\subset WF(\G_\N).$$ So $\G_P$ and $\G_\Q$ have the same wavefront set. Therefore, since $\P \chi^+$ is a smooth differential operator,
$$ WF(\P \chi \G_\N)  \subset  \{(x,k_x;y,-k_y)\in T^*\M^2\backslash\{0\}\:|\:(x,k_x)\sim_{\parallel}(y,k_y) \}$$
A this point, a standard estimate of composition of wavefront sets in (\ref{compG}) yields (see, e.g.,  \cite{IgorValter})
$$WF(\omega_2) \subset {\cal H}\,$$
where the Hadamard wavefront set ${\cal H}$ is the one in (\ref{WHadMMV}).
To conclude the proof, it is sufficient to establish the converse inclusion. To this end, 
observe that, since the antisymmetric part of  $\omega_2$ is $\omega_2^+-\omega_2^-= i\G_\P$, 
$$WF(\G_\P)\subset  WF(\omega^+_2) \cup WF(\omega_2^-)\:,$$
where we adopted the same notation as at the beginning of  the proof of Proposition~\ref{smoothnessMMV}: $\omega_2^+=\omega_2$, $\omega_2^-(\ff,\fg)= \omega_2(\fg,\ff)$. If, according to that notation, 
the prime applied to wavefront sets is defined as in (\ref{Gammaprime}), the above inclusion can be re-phrased to
\begin{equation} \{(x,k_x;y,k_y)\in T^*\M^2\backslash\{0\}\:|\:(x,k_x)\sim_{\parallel}(y,k_y) \} = WF(\G_\P)' \subset  WF(\omega^+_2)' \cup WF(\omega_2^-)' \label{tripinc}\end{equation}
Above 
$$ WF(\omega^+_2)' \subset {\cal H}' = \{(x,k_x;y,k_y)\in T^*\M^2\backslash\{0\}\:|\:(x,k_x)\sim_{\parallel}(y,k_y), k_x\triangleright0\}$$
and, with a trivial computation,
$$ WF(\omega^-_2)' \subset  \{(x,-k_x;y,-k_y)\in T^*\M^2\backslash\{0\}\:|\:(x,k_x)\sim_{\parallel}(y,k_y), k_y\triangleright0\}\:,$$
Now suppose that $(x,k_x;y,k_y)\in {\cal H}'$ does not belong to  $WF(\omega^+_2)'$. According to (\ref{tripinc}),   $ (x,k_x;y,k_y)\not \in  WF(\G_\P)'$ (notice that  ${\cal H}' \ni (x,k_x;y,k_y) \not \in  WF(\omega_2^-)'$ since the two sets are disjoint). This is impossible because every $(x,k_x;y,k_y)\in {\cal H}'$
belongs to $ WF(\G_\P)'$ as it immediately arises by comparing the explicit expressions of these two sets written above. In summary ${\cal H}' \subset WF(\omega_2)'$, that is ${\cal H}\subset WF(\omega_2)$, concluding the proof.
\end{proof}

\noindent Hadamard  states turned also out to be relevant in the study of quantum energy conditions \cite{Fewster, SF, FV} and in black hole physics  \cite{DMPu,Ko,MP,KPV,gerardUnruh} (see references in \cite{Valter} for a summary)

We are finally ready to extend the M{\o}ller operator to the quantum algebras, proving that they are indeed isomorphic. To this end, for any paracausally related metric $g\simeq g'$, 
	we define an isomorphism of the free algebras $\mathcal{R}_{gg'}:\mathfrak{A}_{g'}\to\mathfrak{A}_{g}$ as the unique unital $*$-algebra isomorphism between the said free unital $*$-algebras such that 
	$$\mathcal{R}_{gg'}(\fa'(\ff))=\fa(\R^{\dagger_{g g'}}\ff)\quad\forall\ff\in\Gamma_c(\V_{g'})\,,$$
	where  $\R$ is a M\o ller operator of $g,g'$ and the adjoint $\R^{\dagger_{g g'}}$ is defined as in Proposition~\ref{MollerAdj}.

\subsection{M{\o}ller $*$-isomorphism and the pullback of Hadamard states}
When we pass to the quotient algebras, the preservation of the causal propagators discussed in the previous sections, immediately implies that the induced map on the quotient algebras is an isomorphism, that we call 
\textbf{M{\o}ller $*$-isomorphism.}

\begin{proposition}
Let now $\mathcal{R}_{gg'}:\mathcal{A}_{g'}=\mathfrak{A}_{g'}/\mathfrak{I}_{g'}\to\mathcal{A}_g=\mathfrak{A}_g/\mathfrak{I}_g$ be the quotient morphism constructed out of $\mathcal{R}_{gg'}$. Then $\mathcal{R}$ is well defined and is indeed a $*$-algebra isomorphism.  
\end{proposition}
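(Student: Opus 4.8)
The plan is to show that the free-algebra isomorphism $\mathcal{R}_{gg'}\colon\mathfrak{A}_{g'}\to\mathfrak{A}_g$ descends to the quotients, i.e. that it maps $\mathfrak{I}_{g'}$ into $\mathfrak{I}_g$, and then that the descended map is bijective. Since $\mathfrak{I}_{g'}$ is the two-sided $*$-ideal generated by the four families $1$--$4$ of the previous subsection and $\mathcal{R}_{gg'}$ is a unital homomorphism, it suffices to verify that $\mathcal{R}_{gg'}$ sends each generator of $\mathfrak{I}_{g'}$ into $\mathfrak{I}_g$; then, because $\mathfrak{I}_g$ is a two-sided ideal, all of $\mathfrak{I}_{g'}$ is carried into $\mathfrak{I}_g$ and the quotient morphism $\mathcal{R}_{gg'}\colon\mathcal{A}_{g'}\to\mathcal{A}_g$ is well defined. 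Being induced by a $*$-homomorphism between algebras quotiented by $*$-ideals, it is automatically a unital $*$-homomorphism.

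Two of the four checks are purely formal. For the linearity generator $\fa'(a\ff+b\fh)-a\fa'(\ff)-b\fa'(\fh)$ I would use that $\R^{\dagger_{gg'}}$ is $\RR$-linear, so its image is exactly a linearity generator of $\mathfrak{I}_g$ evaluated on $\R^{\dagger_{gg'}}\ff,\R^{\dagger_{gg'}}\fh\in\Gamma_c(\V_g)$. For the hermiticity generator $\fa'(\ff)^*-\fa'(\ff)$, applying $\mathcal{R}_{gg'}$ and using that it commutes with $*$ gives $\fa(\R^{\dagger_{gg'}}\ff)^*-\fa(\R^{\dagger_{gg'}}\ff)$, again a generator of $\mathfrak{I}_g$ since $\R^{\dagger_{gg'}}\ff\in\Gamma_c(\V_g)$ by Proposition~\ref{MollerAdj}.

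The substantive steps are the remaining two, and the commutation relation is the main obstacle. Applying $\mathcal{R}_{gg'}$ to the CCR generator $\fa'(\ff)\fa'(\fh)-\fa'(\fh)\fa'(\ff)-i\G_{\P'}(\ff,\fh)\II$ produces $\fa(\R^{\dagger_{gg'}}\ff)\fa(\R^{\dagger_{gg'}}\fh)-\fa(\R^{\dagger_{gg'}}\fh)\fa(\R^{\dagger_{gg'}}\ff)-i\G_{\P'}(\ff,\fh)\II$, which equals the CCR generator of $\mathfrak{I}_g$ on the sections $\R^{\dagger_{gg'}}\ff,\R^{\dagger_{gg'}}\fh$ plus the central term $i\big(\G_\P(\R^{\dagger_{gg'}}\ff,\R^{\dagger_{gg'}}\fh)-\G_{\P'}(\ff,\fh)\big)\II$. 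Since $\II\notin\mathfrak{I}_g$, landing in $\mathfrak{I}_g$ forces the scalar multiplying $\II$ to vanish, i.e. $\G_\P(\R^{\dagger_{gg'}}\ff,\R^{\dagger_{gg'}}\fh)=\G_{\P'}(\ff,\fh)$. This is precisely where Theorem~\ref{causalpropR}(1) enters: writing $\G_{\P'}(\ff,\fh)=(\ff\,|\,\R\G_\P\R^{\dagger_{gg'}}\fh)_{g'}$ and moving $\R$ across the Hodge pairing by Definition~\ref{defadjoint} yields $(\R^{\dagger_{gg'}}\ff\,|\,\G_\P\R^{\dagger_{gg'}}\fh)_g=\G_\P(\R^{\dagger_{gg'}}\ff,\R^{\dagger_{gg'}}\fh)$, as needed. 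For the on-shell generator $\fa'(\P'\ff)$ with $\ff\in\Gamma_c(\V_{g'})$, I would write $\ff=\kappa_{g'g}\psi$ with $\psi:=\kappa_{gg'}\ff\in\Gamma_c(\V_g)$ (the fiber isometries $\kappa_{g'g},\kappa_{gg'}$ being mutually inverse) and invoke Theorem~\ref{causalpropR}(2), $\R^{\dagger_{gg'}}\P'\kappa_{g'g}=\P$ on $\Gamma_c(\V_g)$, to get $\R^{\dagger_{gg'}}\P'\ff=\P\psi$; hence $\mathcal{R}_{gg'}(\fa'(\P'\ff))=\fa(\P\psi)$ is an on-shell generator of $\mathfrak{I}_g$.

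Finally, to upgrade the descended $*$-homomorphism to a $*$-isomorphism I would exhibit its inverse. The free-algebra inverse is $\mathcal{R}_{gg'}^{-1}(\fa(\phi))=\fa'((\R^{\dagger_{gg'}})^{-1}\phi)$, and by Proposition~\ref{MollerAdj}(3) one has $(\R^{\dagger_{gg'}})^{-1}=(\R^{-1})^{\dagger_{g'g}}$, where $\R^{-1}$ is a M\o ller operator for the reversed pair $(g',g)$ (its causal-propagator and intertwining identities follow by inverting those of Theorem~\ref{causalpropR}). Thus $\mathcal{R}_{gg'}^{-1}$ is again of the standard form, and the same four generator checks --- now for the pair $(g',g)$ --- show that it maps $\mathfrak{I}_g$ into $\mathfrak{I}_{g'}$ and so descends to a $*$-homomorphism $\mathcal{A}_g\to\mathcal{A}_{g'}$. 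Since the two maps are mutually inverse already at the level of free algebras, their descents are mutually inverse on the quotients, and $\mathcal{R}_{gg'}\colon\mathcal{A}_{g'}\to\mathcal{A}_g$ is the asserted $*$-isomorphism.
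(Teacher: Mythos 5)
Your proposal is correct and follows essentially the same route as the paper, which simply defers the details to \cite[Proposition 5.6]{Norm} while identifying the preservation of the causal propagator (Theorem~\ref{causalpropR}) as the key ingredient making the $CCR$-ideals correspond. You merely spell out explicitly the generator-by-generator check of $\mathcal{R}_{gg'}(\mathfrak{I}_{g'})\subseteq\mathfrak{I}_g$ (including the on-shell generator via Theorem~\ref{causalpropR}(2)) and the inverse via Proposition~\ref{MollerAdj}(3), which is exactly the argument the citation stands in for.
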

\begin{proof}
The proof of this statement is identical to the one performed in \cite[Proposition 5.6]{Norm}. Indeed it just relies on the preservation of the causal propagators proved in Theorem~\ref{causalpropR}, which implies that the associated $CCR$-ideals are $*$-isomorphic.
\end{proof}

The final step in our construction is to define a pullback of the M{\o}ller $*$-isomorphism to the states and then to prove that the Hadamard condition is preserved, as done in \cite[Theorem 5.14]{Norm} for normally hyperbolic field theories.

\begin{theorem}\label{thm:main intro Had}
Let $\mathcal{R}_{gg'}$ be the M{\o}ller *-isomorphism and let $\omega:\mathcal{A}_g\to\CC$ be a quasifree Hadamard state, we define the pull-back state $\omega':\mathcal{A}_{g'}\to\CC$ by
$\omega'=\omega\circ\mathcal{R}_{gg'}$. 
The following facts are true:
\begin{enumerate}
\item[1] $\omega'$ is a well-defined state;
\item[2] $\omega'$ is quasifree;
\item[3] $\omega'$ is a Hadamard state.
\end{enumerate}
\end{theorem}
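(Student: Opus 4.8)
The plan is to dispatch the first two assertions as formal consequences of the $*$-isomorphism property of $\mathcal{R}_{gg'}$ and then to concentrate the effort on the microlocal content of the third. For assertion (1) I would note that $\omega'=\omega\circ\mathcal{R}_{gg'}$ is $\CC$-linear, that it is normalized because $\mathcal{R}_{gg'}$ is unital, so that $\omega'(\II)=\omega(\mathcal{R}_{gg'}(\II))=\omega(\II)=1$, and that it is positive because for every $a\in\mathcal{A}_{g'}$ one has $\omega'(a^{*}a)=\omega(\mathcal{R}_{gg'}(a)^{*}\mathcal{R}_{gg'}(a))\geq 0$, using that $\mathcal{R}_{gg'}$ is a $*$-homomorphism and $\omega$ is a state. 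For assertion (2) the defining relation $\mathcal{R}_{gg'}(\fa(\ff))=\fa(\R^{\dagger_{gg'}}\ff)$ gives, on monomials,
\begin{equation*}
\omega'_{n}(\ff_{1},\dots,\ff_{n})=\omega_{n}(\R^{\dagger_{gg'}}\ff_{1},\dots,\R^{\dagger_{gg'}}\ff_{n})\,,
\end{equation*}
so that the vanishing of the odd $n$-point functions and the pairing factorization of the even ones in Definition~\ref{quasifree} transfer verbatim from $\omega$ to $\omega'$, the two-point function being $\omega'_{2}(\ff,\fg)=\omega_{2}(\R^{\dagger_{gg'}}\ff,\R^{\dagger_{gg'}}\fg)$.

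For assertion (3) I would first reduce to a single elementary step $g=g_{0}\preceq g_{1}=g'$: since $\R=\R_{N}\circ\cdots\circ\R_{1}$ is a finite composition of the one-step operators of Proposition~\ref{propNN} and the pullback of states composes accordingly, it suffices to preserve the microlocal spectrum condition across one step. At the distributional level the relation above reads $\omega'_{2}=\omega_{2}\circ(\R^{\dagger_{gg'}}\otimes\R^{\dagger_{gg'}})$, and I would compute $WF(\omega'_{2})$ by H\"ormander's composition calculus, feeding in the explicit form of the adjoint from Proposition~\ref{MollerAdj}, namely $\R_{+}^{\dagger_{g_{0}g_{\chi}}}=\P_{0}\kappa_{0\chi}\G^{-}_{\P_{\chi}}$ and its analogue for $\R_{-}^{\dagger}$. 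Here the fiber isometry $\kappa_{0\chi}$ is microlocally trivial, the differential factor $\P_{0}$ cannot enlarge the wavefront set, and the retarded Green operator $\G^{-}_{\P_{\chi}}$ contributes a kernel whose wavefront set is concentrated on the diagonal together with the retarded null bicharacteristic relation of $g_{\chi}$, exactly as for the normally hyperbolic $\G^{-}_{\N}$ (recall $WF(\G_{\P})=WF(\G_{\N})$ from the proof of Proposition~\ref{propagationMMV}). Composing this relation with $WF(\omega_{2})=\mathcal{H}$ should yield the upper bound $WF(\omega'_{2})\subset\mathcal{H}_{g'}$, where $\mathcal{H}_{g'}$ denotes the set of~(\ref{WHadMMV}) built from the causal structure of $g'$.

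For the reverse inclusion I would reproduce the endgame of the proof of Proposition~\ref{propagationMMV}(b). The canonical commutation relations in $\mathcal{A}_{g'}$ force the antisymmetric part of $\omega'_{2}$ to equal $i\G_{\P'}$, whose primed (in the sense of~(\ref{Gammaprime})) wavefront set is the full bidirectional null relation of $g'$, by the same computation giving $WF(\G_{\P'})=WF(\G_{\N'})$. Writing $u^{+}:=\omega'_{2}$ and $u^{-}(\ff,\fg):=\omega'_{2}(\fg,\ff)$ as in the proof of Proposition~\ref{smoothnessMMV}, the inclusion $WF(\G_{\P'})'\subset WF(u^{+})'\cup WF(u^{-})'$, together with the disjointness of the future and past null cones, forces every element of $\mathcal{H}_{g'}$ to lie in $WF(\omega'_{2})$. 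Hence $\mathcal{H}_{g'}\subset WF(\omega'_{2})$ and, combined with the upper bound, $WF(\omega'_{2})=\mathcal{H}_{g'}$, which is precisely the microlocal spectrum condition for $g'$.

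The step I expect to be the main obstacle is the upper bound: one must verify that the canonical relation of $\R^{\dagger_{gg'}}$ carries the $g$-Hadamard set precisely into $\mathcal{H}_{g'}$ while preserving the future-pointing requirement $k_{x}\triangleright 0$. This is exactly where the intertwining property $\kappa_{gg'}\P'\R=\P$ of Theorem~\ref{remchain}, together with the causal support properties of the retarded Green operator, is indispensable: they ensure that $\R^{\dagger_{gg'}}$ sends $g$-null bicharacteristics to $g'$-null bicharacteristics without mixing the two time orientations, so that no covectors outside $\mathcal{H}_{g'}$ are produced in the composition.
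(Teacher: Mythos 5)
Your treatment of assertions (1) and (2) is correct and matches the paper's (which simply delegates them to the analogous statements in \cite{Norm}), and your reduction to a single step $g\preceq g_\chi \preceq g'$ and the reverse inclusion $\mathcal{H}_{g'}\subset WF(\omega'_2)$ via the commutator argument are both sound. However, your route for the upper bound is genuinely different from the paper's, and it has a gap precisely at the step you flag as "the main obstacle". The problem is that the two wavefront sets you are composing live on the characteristic sets of \emph{different} metrics. Writing $\omega_2^{\chi}=\omega_2\circ(\R_+^{\dagger}\otimes\R_+^{\dagger})$ with $\R_+^{\dagger}=\P_0\kappa_{0\chi}\G^-_{\P_\chi}$, the wavefront relation of $\G^{\mp}_{\P_\chi}$ consists of the conormal to the diagonal \emph{plus} the causal flowout along $g_\chi$-null bicharacteristics. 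The flowout part only acts on covectors lying in the characteristic set of $\P_\chi$, i.e.\ $g_\chi$-null covectors; a covector $(z,k_z)$ appearing in $WF(\omega_2)=\mathcal{H}_g$ is $g$-null and in general \emph{not} $g_\chi$-null, so for such covectors only the diagonal part of the relation contributes and the composition returns $(z,k_z)$ unchanged. The resulting upper bound is therefore of the form "$\mathcal{H}_g$ together with its $g_\chi$-flowout", which is not contained in $\mathcal{H}_{g_\chi}$. To cut this down you would additionally need to invoke that $\omega_2^{\chi}$ is a bisolution of $\P_\chi$ (so that microlocal elliptic regularity confines $WF(\omega_2^{\chi})$ to the $g_\chi$-characteristic set) and then propagate singularities along $g_\chi$-null bicharacteristics while tracking the time orientation — at which point you have essentially rebuilt the local-to-global propagation theorem rather than bypassed it. The appeal to the intertwining property $\kappa_{gg'}\P'\R=\P$ as "ensuring" that no wrong covectors are produced does not by itself close this.

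The paper's proof avoids the composition calculus entirely: by Proposition~\ref{propNN}(3) the M\o ller operator (hence its adjoint) acts as the identity in the past of the region where $\chi$ is switched on, where moreover $g=g_\chi$; consequently $\omega_2^{\chi}$ coincides with the Hadamard distribution $\omega_2$ on a common globally hyperbolic neighbourhood $\mathcal{N}$ of a common Cauchy surface, and since $\omega_2^{\chi}$ is a bisolution of $\P_\chi$, the local-to-global propagation property (Proposition~\ref{propagationMMV}) upgrades the Hadamard condition from $(\mathcal{N},g_\chi)$ to all of $(\M,g_\chi)$; the second factor $\R_-^{\dagger}$ is handled symmetrically using the future region. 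If you want to salvage your direct microlocal approach, you must either restrict the composition estimate to a neighbourhood where the metrics agree (which reduces it to the paper's argument) or supply the full elliptic-regularity-plus-propagation-of-singularities analysis for $\P_\chi$, including the verification that the time orientation of null covectors is preserved along the flow. As it stands, the upper bound $WF(\omega'_2)\subset\mathcal{H}_{g'}$ is not established.
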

\begin{proof}
The proof of 1-2 is trivial and discussed in \cite[Proposition 5.11]{Norm}. The proof of 3 follows from the Hadamard  propagation property stated  in Proposition \ref{propagationMMV}. To prove the statement we can just focus on the case in which the M{\o}ller operator is constructed out of two spacetimes such that $g\preceq g'$, the reasoning can then be iterated at each step of the paracausal chain.\\
The two-point function of the pullback state can be written as
$$\omega'_2(\ff,\fh)=\omega'(\hat{\mathfrak{a}}'(\ff)\hat{\mathfrak{a}}'(\fh))=\omega(\mathcal{R}_{gg'}(\hat{\fa}'(\ff)\hat{\fa}'(\fh)))=\omega(\hat{\fa}(\R^{\dagger_{g g'}}\ff)\hat{\fa}(\R^{\dagger_{g g'}}\fh))=\omega_2(\R^{\dagger_{g g'}}\ff,\R^{\dagger_{g g'}}\fh). $$
We recall that the operator is the composition of two pieces $\R^{\dagger_{g g'}}=\R_+^{\dagger_{g g_\chi}}\circ\R_-^{\dagger_{g_\chi g'}}$ and split the proof in two steps.\\
First we focus on the bidistribution $\omega_2^{\chi}(\ff,\fh):=\omega_2(\R_+^{\dagger_{g g_\chi}}\ff,\R_+^{\dagger_{g g_\chi}}\fh)$ on $(\M,g_\chi)$ defining a quasifree state therein.
 By the property \ref{pastequal}, in the region in which $g_\chi=g$, there is a $t_0$ a Cauchy surface $\Sigma_{t_0}$ in common for $g$ and $g_\chi$, a common  globally hyperbolic neighborhood  $\mathcal{N}$ of that Cauchy surface such that $\omega_2^{\chi}(\ff,\fh)
= \omega_2(\ff,\fh)$ when the supports of $\ff$ and $\fg$ are chosen in $\mathcal{N}$ and thus the corresponding state is Hadamard in $(\mathcal{N}, g_\chi)$. Now  Proposition \ref{propagation} implies that 
$\omega_2^{\chi}$ is Hadamard in the whole $(\M, g_\chi)$.
Secondly, the same argument can be used once again for the operator $\R_-^{\dagger_{g_\chi g'}}$ on the Hadamard state $\omega^{\chi}$ on $(\M, g_\chi)$,
proving that the state induced by $\omega_2(\R^{\dagger_{g g'}}\cdot ,\R^{\dagger_{g g'}}\cdot) = \omega_2^{\chi}(\R_-^{\dagger_{ g_\chi g'}}\cdot ,\R_-^{\dagger_{g_\chi g'}}\cdot)$ is Hadamard as well on $(\M, g')$.
 In other words  the full M{\o}ller operator preserves  the Hadamard form.
\end{proof}

\section{Existence of Proca Hadamard states in globally hyperbolic spacetimes}\label{sec:existence} This section is devoted to the construction of Hadamard states for the real Proca field in a generic globally hyperbolic spacetime.
Actually, the technology of M{\o}ller operators, in particular Theorem~\ref{thm:main intro Had}, allows us to reduce the construction of Hadamard
states for the Proca equation to the special case of an ultrastatic spacetime with Cauchy hypersufaces of bounded geometry.
Indeed, as shown in\cite[Corollary 2.23]{Norm}, for any globally hyperbolic spacetime  $(\M,g)$, there exists a paracausally related  globally hyperbolic  spacetime $(\M,g_0)$ which is ultrastatic. In other words, first of all $(\M, g_0)$ is isometric to $\RR \times \Sigma$ where $(\Sigma, h_0)$ is a $t$-independent complete Riemannian manifold and  $g_0 = - dt\otimes dt + h_0$, where  $t$  is the natural coordinate on $\RR$ and $dt$ is past directed.  We  also  denote by $\partial_t$  the tangent vector to the submanifold $\RR$ of $\RR \times \Sigma$.
In view of the completeness of $h$, these spacetimes are  globally hyperbolic  (see e.g. \cite{Fulling}) and  $\Sigma$ is a Cauchy surface of this spacetime. 
In turn, it is possible to change the metric on $\Sigma$ in order that the final metric, indicated by $h$ is both complete and {\em of bounded geometry} \cite{greene}.  By construction, the final ultrastatic spacetime $(\M,  - dt\otimes dt + h)$ is paracausally related to  $(\M,g_0)$ because the intersection of the corresponding open cones  is non-empty as it always contains $\partial_t$.
 By transitivity  $(\M,g)$ is paracausally related with  $(\RR \times \Sigma, - dt\otimes dt + h)$.

Hence, we assume without loss of generalities, that $(\M, g)=(\RR\times \Sigma,  - dt\otimes dt + h)$ is a globally hyperbolic ultrastatic spacetime, with $dt$ past directed,  whose spatial metric $h$ is complete. When dealing with the construction of Hadamard states  we also assume that the spatial manifold $(\Sigma, h)$ is also  of bounded geometry. In the final part of the section, we will come back  to consider  a generic globally hyperbolic spacetime $(\M,g)$

\subsection{The Cauchy problem in ultrastatic spacetimes}\label{SECCP} We study here the Cauchy problem for the Proca (real and complex) field in ultrastatic spacetimes  $(\M,g) = (\RR \times \Sigma,  - dt\otimes dt + h)$, where $(\Sigma,h)$ is complete. A  more general treatise appears in \cite{koProca} where the Cauchy problem is studied, also in the presence of a source of the Proca field, in a generic globally hyperbolic spacetime and the continuity of the solutions with respect to the initial data is focused.

 Let us consider the Proca equation (\ref{ProcaA}) (where $m^2 >0$)  on the above ultrastatic  spacetime.
As observed in \cite{Fewster}, every  smooth  $1$-form $A \in \Omega^1(\M)$ naturally uniquely decomposes as \begin{equation} A(t,p) = A^{(0)}(t,p) dt + {A}^{(1)}(t,p) \quad 
\label{DECAA}
\end{equation}
where ${A}^{(i)}(t, \cdot) \in \Omega^i(\Sigma)$  for $i=0,1$ and  $t\in \RR$.   By direct inspection and taking  the equivalence of (\ref{ProcaA}) and (\ref{PROCAEQ1})-(\ref{PROCAEQ2}) into account, one sees that Eq.
\eqref{ProcaA} is equivalent to the constrained double Klein-Gordon  system
\begin{eqnarray}
\partial_t^2  A^{(0)} &=& -(\Delta_h^{(0)} + m^2)  A^{(0)}\:, \label{EQ1}\\
\partial_t^2 A^{(1)} &=& -(\Delta_h^{(1)} + m^2)  A^{(1)}\:,  \label{EQ2}\\
\partial_t  A^{(0)} &=& -\delta^{(1)}_h  A^{(1)}\:.  \label{EQ3}
\end{eqnarray}
Above,  $\Delta_h^{(k)} := \delta_h^{(k+1)}d^{(k)} +  d^{(k-1)}\delta_h^{(k)}$ is the Hodge Laplacian on $(\Sigma,h)$ for $k$-forms  and the last condition (\ref{EQ3}) is nothing but the constraint $\delta^{(1)}_g A=0$ arising from (\ref{ProcaA}). 

The theory for the fields $ A^{(1)}$ and $ A^{(0)}$ is a special case  of the theory of {\em normally hyperbolic equations on corresponding  vector bundles with positive inner product} over a globally hyperbolic spacetime \cite{Ba,BaGi}.  In our case,
\begin{itemize}
\item[(1)] there is a real vector bundle  $\V_g^{(1)}$with  basis $\M$, canonical fiber isomorphic to $T^*_q\Sigma$, and equipped with a fiberwise real symmetric scalar product 
induced by  $h_q^\sharp$.  $A^{(1)} \in \Gamma(V_g^{(1)})$;
\item[(2)]  there is another  real vector bundle  $\V_g^{(0)}$ with basis $\M$, canonical fiber isomorphic to $\RR$, and equipped with a positive fiberwise real symmetric scalar product  given by the natural product in $\RR$. $A^{(0)} \in \Gamma (V_g^{(0)})$.
\end{itemize}
Evidently 
\begin{equation}
\V_g = \V_g^{(0)} \oplus \V_g^{(1)}\label{decVV}\:.
\end{equation}
 Equations  (\ref{EQ1}) and (\ref{EQ2}) admit  existence and uniqueness theorems for smooth  compactly supported Cauchy data and corresponding  smooth  spacelike compact solutions
in $\Gamma_{sc}(\V_g^{(0)})$  and  $\Gamma_{sc}(\V_g^{(1)})$   respectively, as a consequence of very well-known results in the theory of normally hyperbolic equations  \cite{Ba,BaGi,GiMu}. 
 However, when viewing $A^{(0)}$ and $A^{(1)}$ as parts of the Proca field $A$,  we have also to deal with the additional constraint~\eqref{EQ3}. 
 Notice that (\ref{EQ3}) imposes two constraints on the Cauchy data of $ A^{(0)}$ and $ A^{(1)}$ on $\Sigma$: 
 $$\partial_t  A^{(0)}(0,p) = -\delta^{(1)}_h  A^{(1)}(0,p) \qquad \partial^2_t  A^{(0)}(0,p) =- \delta^{(1)}_h \partial_t A^{(1)}(0,p)\,.$$ The second constraint is only apparently of the second order. Indeed, taking (\ref{EQ1}) into account, it can be re-written as a condition of the Cauchy data 
 $$ (\Delta_h^{(0)} + m^2)  A^{(0)}(0,p) = \delta^{(1)}_h \partial_t A^{(1)}(0,p)\,.$$  At this juncture  we observe that, with some elementary computation (use $\Delta^{(0)}_h \delta_h^{(1)} = \delta_h^{(1)} \Delta_h^{(1)}$),  Equations~(\ref{EQ1}) and~(\ref{EQ2})
imply also the crucial condition  $$(\partial_t^2 + \Delta^{(0)}_h - m^2) (\partial_t  A^{(0)} + \delta^{(1)}_h  A^{(1)})=0$$ which, in turn, implies   Equation~(\ref{EQ3}) , 
if the initial condition of that scalar Klein-Gordon equation for $(\partial_t  A^{(0)}+ \delta^{(1)}_h  A^{(1)})$ are the zero initial conditions. This exactly amounts to  
$$\partial_t  A^{(0)}(0,p) = -\delta^{(1)}_h  A^{(1)}(0,p) \qquad\text{ and }\qquad (\Delta_h^{(0)} + m^2)  A^{(0)}= \delta^{(1)}_h \partial_t A^{(1)}(0,p)\,.$$
In summary, we are naturally led to focus on this Cauchy problem
\begin{eqnarray}
&&\partial_t^2 A^{(0)} + (\Delta_h^{(0)} + m^2)  A^{(0)}=0\:, \label{EQ11}\\
&&\partial_t^2  A^{(1)} +(\Delta_h^{(1)} + m^2)  A^{(1)}=0\:,  \label{EQ22}\\
&&(\partial_t^2 + \Delta^{(0)}_h - m^2) (\partial_t  A^{(0)} + \delta^{(1)}_h  A^{(1)})=0\:, \quad   
 \label{EQ33}
\end{eqnarray}
with initial data 
\begin{equation}
\mbox{$ A^{(0)}(0,\cdot) = a^{(0)}(\cdot)$, \quad $\partial_t  A^{(0)}(0,\cdot)= \pi^{(0)}(\cdot)$, \quad $ A^{(1)}(0,\cdot)= {a}^{(1)}(\cdot)$, \quad  $\partial_t A^{(1)}(0,\cdot)= {\pi}^{(1)}(\cdot)$} \label{Cdata}
\end{equation}
where $a^{(0)},\pi^{(0)},  a^{(1)}, \pi^{(1)}$ are pairs of  smooth compactly supported, respectively $0$ and  $1$,  forms on $\Sigma$,
and the constraints are valid
\begin{equation}
\pi^{(0)}  =- \delta^{(1)}_h a^{(1)} \:, \quad   (\Delta_h^{(0)} + m^2) a^{(0)} = \delta^{(1)}_h \pi^{(1)}\:. \label{CCdata}
\end{equation}
If $A$ is a spacelike compact solution of the Proca equation (\ref{ProcaA}), then it satisfies (\ref{EQ1})-(\ref{EQ3})  and its Cauchy data (\ref{Cdata})  satisfy the constraints (\ref{CCdata}).
On the other hand, if we have smooth compactly supported  Cauchy data  (\ref{Cdata}),  then the two Klein-Gordon equations (\ref{EQ1}) and (\ref{EQ2}) admit  unique spacelike compact smooth solutions which 
also satisfies (\ref{EQ33}) as a consequence. If the said Cauchy data satisfy the constraint  (\ref{CCdata}), then also  
(\ref{EQ3}) is satisfied, because it is equivalent to the unique solution of (\ref{EQ33}) with zero Cauchy data. In that case,  the two solutions $ A^{(0)}$ and $ A^{(1)}$ define a unique solution of the Proca equation with the said Cauchy data.

We have established the following result completely extracted from the theory of normally hyperbolic equations.
 
\begin{proposition}\label{TEOEUP} Let $(\M, g) = (\Sigma,  - dt\otimes dt + h)$ be a smooth  globally hyperbolic  ultrastatic spacetime with $dt$ past directed, where $h$ is a smooth complete Riemannian metric on $\Sigma$.
Consider the Cauchy problem on $(\M, g)$ for the 
smooth $1$-form $A$ satisfying the 
Proca equation (\ref{ProcaA}) for $m^2>0$, with smooth compactly supported  Cauchy data (\ref{Cdata}) on $\Sigma$ viewed as the $t=0$ time slice. \\ 
The Proca Cauchy problem for $A$  with constraints (\ref{CCdata})  is equivalent, regarding  existence and uniqueness of spacelike compact smooth solutions ,   to the double normally hyperbolic  Klein-Gordon constrained  Cauchy problem (\ref{EQ1})-(\ref{EQ3}),  for the fields $ A^{(0)} \in \Gamma_{sc}(\V_g^{(0)})$ and $ A^{(1)} \in  \Gamma_{sc}(\V_g^{(1)})$,  with the same initial data (\ref{Cdata}) and constraints  (\ref{CCdata}). 
As a consequence,
\begin{itemize}
\item[(1)]  every smooth spacelike compact  solution of the Proca equation  $A\in \Gamma_{sc}(\V_g)$  (\ref{ProcaA}) defines compactly supported smooth Cauchy data on $\Sigma$ which satisfy  the constraints  (\ref{CCdata});
\item[(2)] if the Cauchy data are smooth, compactly supported and satisfy  (\ref{CCdata}),  then there is a unique smooth spacelike compact solution of the Proca equation $A\in \Gamma_{sc}(\V_g)$  (\ref{ProcaA}) associated to them;
\item[(3)] the support of a solution $A\in \Gamma_{sc}(\V_g)$ with smooth compactly supported initial data  satisfies $supp(A) \subset J^+(S) \cup J^-(S)$, where $S \subset \Sigma$ is the union of the supports of the Cauchy data.\\
\end{itemize}
\end{proposition}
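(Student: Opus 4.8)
The plan is to reduce the Proca Cauchy problem to the decoupled pair of \emph{normally hyperbolic} Klein--Gordon Cauchy problems for $A^{(0)}$ and $A^{(1)}$ --- for which the existence and uniqueness theory of \cite{Ba,BaGi} applies verbatim --- and then to recover the missing first-order constraint (\ref{EQ3}) by a constraint-propagation argument. First I would record the equivalence of the partial differential equations themselves: by the equivalence of (\ref{ProcaA}) with the pair (\ref{PROCAEQ1})--(\ref{PROCAEQ2}), a spacelike compact $1$-form $A$ solves the Proca equation if and only if $\N A =0$ and $\delta A=0$; inserting the ultrastatic splitting (\ref{DECAA}) and computing the components of $\N$ and of $\delta$ on $(\RR\times\Sigma,-dt\otimes dt+h)$ yields exactly (\ref{EQ1})--(\ref{EQ2}) from $\N A=0$ and (\ref{EQ3}) from $\delta A=0$. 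This is a direct metric-specific computation, using that $\Delta_h^{(k)}$ is the spatial Hodge Laplacian and that the time direction is orthogonal and static.

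The forward implication, hence claim (1), is then immediate: if $A\in\Gamma_{sc}(\V_g)$ solves (\ref{ProcaA}) it satisfies (\ref{EQ1})--(\ref{EQ3}), so evaluating (\ref{EQ3}) at $t=0$ gives $\pi^{(0)}=-\delta_h^{(1)}a^{(1)}$, while differentiating (\ref{EQ3}) in $t$ and substituting (\ref{EQ1}) gives $(\Delta_h^{(0)}+m^2)a^{(0)}=\delta_h^{(1)}\pi^{(1)}$; these are precisely the constraints (\ref{CCdata}).

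For the converse, hence claim (2), I would start from constrained Cauchy data (\ref{Cdata})--(\ref{CCdata}) and invoke the standard well-posedness for the normally hyperbolic equations (\ref{EQ1}) and (\ref{EQ2}) on the globally hyperbolic ultrastatic spacetime to produce unique $A^{(0)}\in\Gamma_{sc}(\V_g^{(0)})$ and $A^{(1)}\in\Gamma_{sc}(\V_g^{(1)})$ realizing those data. The heart of the argument is to show that the auxiliary field $\phi:=\partial_t A^{(0)}+\delta_h^{(1)}A^{(1)}$ vanishes identically, which is equivalent to (\ref{EQ3}). Using the intertwining relation $\Delta_h^{(0)}\delta_h^{(1)}=\delta_h^{(1)}\Delta_h^{(1)}$ together with (\ref{EQ1}) and (\ref{EQ2}), one checks that $\phi$ solves the scalar Klein--Gordon equation (\ref{EQ33}); its Cauchy data are $\phi|_{t=0}=\pi^{(0)}+\delta_h^{(1)}a^{(1)}$ and $\partial_t\phi|_{t=0}=-(\Delta_h^{(0)}+m^2)a^{(0)}+\delta_h^{(1)}\pi^{(1)}$, both of which vanish precisely by the two constraints in (\ref{CCdata}). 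Uniqueness for the scalar Klein--Gordon Cauchy problem then forces $\phi\equiv0$, so $A:=A^{(0)}dt+A^{(1)}$ satisfies all of (\ref{EQ1})--(\ref{EQ3}) and hence solves (\ref{ProcaA}); uniqueness of $A$ descends from uniqueness of $A^{(0)},A^{(1)}$.

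The main obstacle, conceptually, is exactly this constraint-propagation step: the constraint (\ref{EQ3}) is first order in time, so it is not guaranteed to persist away from $t=0$ merely because the two second-order equations hold, and one must verify \emph{both} constraints in (\ref{CCdata}) (the second being the only apparently higher-order one, reduced by means of (\ref{EQ1})) in order to annihilate both Cauchy data of $\phi$. Everything else is a matter of assembling standard facts. Finally, claim (3) follows from the finite propagation speed of normally hyperbolic operators on globally hyperbolic spacetimes: each of $A^{(0)}$ and $A^{(1)}$ has support in $J^+(S)\cup J^-(S)$ with $S$ the union of the supports of the Cauchy data, and the same containment then holds for $A=A^{(0)}dt+A^{(1)}$.
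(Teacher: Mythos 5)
Your proposal is correct and follows essentially the same route as the paper: reduce to the two decoupled normally hyperbolic Klein--Gordon problems, propagate the constraint by showing that $\phi=\partial_t A^{(0)}+\delta_h^{(1)}A^{(1)}$ solves a scalar Klein--Gordon equation whose Cauchy data vanish precisely by (\ref{CCdata}), and obtain the support statement from finite propagation speed. No substantive differences from the paper's argument.
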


\remark
\begin{itemize}
\item[(1)] All the discussion above, and Proposition~\ref{TEOEUP} in particular, extends to the case of a {\em complex} Proca field and corresponding associated complex  Klein Gordon fields. The stated results can be extended easily to the case of the non-homogeneous Proca equation and also considering continuity properties of the solutions with respect to the source and the initial data referring to natural topologies. (See \cite{koProca} for a general discussion.)
\item[(2)]  A naive idea may be  that we can  freely fix smooth compactly supported Cauchy data for $ A^{(1)}$ and then define associated Cauchy conditions for $ A^{(0)}$ by solving the constraints~(\ref{CCdata}). 
In this case the true degrees of freedom of the Proca field would be the vector part $ A^{(1)}$, whereas $ A^{(0)}$ would be a constrained degree of freedom. This viewpoint is incorrect, if we decide to deal with spacelike compact solutions, because the second constraint in  Equation~(\ref{CCdata}) in general does not produce a compactly supported function $a^{(0)} $ when the source $\delta^{(1)}_h \pi^{(1)}$ is smooth compactly supported (the smoothness of $a^{(0)} $ is however guaranteed by elliptic regularity from the smoothness of $\delta^{(1)}_h \pi^{(1)}$). $a^{(0)}$ is  compactly supported  only for some smooth compactly supported  initial conditions $ \pi^{(1)}$. Therefore the linear subspace of  initial data (\ref{Cdata})  compatible with the constraints   (\ref{CCdata}) does not include {\em all}  possible compactly supported  initial conditions $ \pi^{(1)}$ which, therefore, cannot be freely chosen.
\item[(3)]
 However this space of constrained Cauchy data is non-trivial, i.e., it does not contain only zero initial conditions and in particular there are couples $(a^{(0)}, \pi^{(1)})$ such that both elements do not vanish. This is because, for every smooth compactly supported $1$-form  $f^{(1)}$ (with $\delta^{(1)}f^{(1)} \neq 0$ in particular) and for every  smooth compactly supported $2$-form  $f^{(2)}$, 
$$a^{(0)}:=\delta_h^{(1)}f^{(1)} \qquad \pi^{(1)}:=\left(\Delta_h^{(1)}+m^2\right)f^{(1)} + \delta_h^{(2)} f^{(2)}$$
 are  smooth, and compactly supported,   they solve the nontrivial constraint in (\ref{CCdata})
 $\delta_h^{(1)} \pi^{(1)} = (\Delta_{(0)} + m^2) a^{(0)}$ and $f^{(1)}, f^{(2)}$ can be chosen in order that  neither of $a^{(0)}$ and $\pi^{(1)}$  vanishes. The easier constraint $\pi^{(0)}= - \delta^{(1)}_h a^{(1)}$ is solved by every smooth compactly supported  $1$-form  $a^{(1)}$ by defining the smooth compactly supported $0$-form  $\pi^{(0)}$
correspondingly.
\end{itemize}

\subsection{The Proca symplectic form in ultrastatic spacetimes} \label{SECSYMPL}Consider two  solutions $A, A' \in \Gamma_{sc}(\V_g) \cap \Ker \P$ of the Proca  equation in our ultrastatic spacetime, choose $t\in \RR$  and consider the bilinear   form
\begin{equation}
\sigma^{(\P)}_t(A,A') := \int_{\Sigma} h^\sharp(a^{(1)}_t,  \pi^{(1)'}_t- da^{(0)'}_t) -   h^\sharp(a^{(1)'}_t,  \pi^{(1)}_t-da^{(0)}_t) \:  \vol_h \:,
 \label{SYMPr}
\end{equation}
where we are referring to the Cauchy data  on $\Sigma$ of the smooth spacelike compact  solutions of the Proca equation. $\Sigma$ is viewed as the time slice at time $t$. As is well known, it is possible to define a natural symplectic form for the Proca field in general globally hyperbolic specetimes \cite{aqft1} with properties analogous to the ones we are going to discuss here. In this section we however stick to the ultrastatic spacetime case which is enough for our ends.
 
According to \cite{aqft1} (with an argument very similar to the proof of Propositions 3.12 and 3.13
in \cite{Norm}) we have immediately that 
\begin{equation*}
\sigma^{(\P)}_t(A,A') = \sigma^{(\P)}_{t'}(A,A')\quad \forall t,t' \in \RR\:,
\end{equation*}
and, omitting the index $t$ as the symplectic form is independent of it, 
\begin{equation}
\sigma^{(\P)}(A,A') = \int_\M g^\sharp\left(\ff, \G_\P \ff'\right) \:  \vol_g \label{GSIGMA}
\end{equation}
where $A,f$ (resp. $A,f'$) are related by $A:= \G_\P \ff$ (resp. $A' := \G_\P \ff'$).  

\begin{remark} The important identity (\ref{GSIGMA}) is also valid in a generic globally hyperbolic spacetime when $\sigma^{(P)}$ is interpreted as the general symplectic form of the Proca field according to \cite{aqft1}.\end{remark}

Let us suppose to  deal with the Cauchy data of  the real vector space $C_\Sigma\subset  \Omega^{0}_c(\Sigma)^2\times \Omega^{1}_c(\Sigma)^2$   of smooth compactly supported  Cauchy data  $(a_0,\pi_0, a_1,\pi_1) $ subjected to the linear constraints~(\ref{CCdata}), 
\begin{equation}
 C_\Sigma := \left\{ (a^{(0)},\pi^{(0)}, a^{(1)},\pi^{(1)}) \in  \Omega^{0}_c(\Sigma)^2\times \Omega^{1}_c(\Sigma)^2 \:\left|\: \pi^{(0)}  =- \delta^{(1)}_h a^{(1)} \:, \quad   (\Delta_h^{(0)} + m^2) a^{(0)} = \delta^{(1)}_h \pi^{(1)}\right. \right\}\:. \label{CCdata2}
\end{equation}
Not only the Cauchy problem is well behaved in that space as a consequence of Proposition \ref{TEOEUP},  but we also have the following result which, in particular, implies that the Weyl algebra of the real Proca field has trivial center.
\begin{proposition}
The bilinear antisymmetric map $\sigma^{(P)}: C_\Sigma \times C_\Sigma \to \RR$ defined in (\ref{SYMPr}) is non-degenerate and therefore it is a symplectic form on $C_\Sigma$.
\end{proposition}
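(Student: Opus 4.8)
The plan is to sidestep working directly on the constrained subspace $C_\Sigma$ --- where a restricted symplectic form can a priori acquire a kernel --- and instead transport the question to the level of the causal propagator, where the test data range \emph{freely}. Concretely, I would show that the only element $A\in C_\Sigma$ annihilated by $\sigma^{(\P)}(A,\cdot)$ is $A=0$.

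First I would use Proposition~\ref{TEOEUP} to identify each element of $C_\Sigma$ with the unique spacelike compact solution $A\in\Gamma_{sc}(\V_g)\cap\Ker\P$ it determines, and then invoke (\ref{RAN}) to write $A=\G_\P\ff$ for some $\ff\in\Gamma_c(\V_g)$. The decisive observation is the converse direction: $\G_\P\ff'$ is a solution for \emph{every} $\ff'\in\Gamma_c(\V_g)$ and the assignment $\ff'\mapsto\G_\P\ff'$ is onto the space of solutions, hence onto $C_\Sigma$ after passing to Cauchy data. Consequently the condition $\sigma^{(\P)}(A,A')=0$ for all $A'\in C_\Sigma$ is equivalent to $\sigma^{(\P)}(A,\G_\P\ff')=0$ for all $\ff'\in\Gamma_c(\V_g)$, with no residual constraint on $\ff'$.

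Next I would apply the already established identity (\ref{GSIGMA}), namely $\sigma^{(\P)}(A,A')=\int_\M g^\sharp(\ff,\G_\P\ff')\,\vol_g$ with $A=\G_\P\ff$ and $A'=\G_\P\ff'$. The key algebraic move is to pull $\G_\P$ off the second slot: combining the antisymmetry of $\sigma^{(\P)}$ (manifest from (\ref{SYMPr})) with the pointwise symmetry of the fiber metric $g^\sharp$ gives the skew-adjointness relation $\int_\M g^\sharp(\ff,\G_\P\ff')\,\vol_g=-\int_\M g^\sharp(\G_\P\ff,\ff')\,\vol_g=-\int_\M g^\sharp(A,\ff')\,\vol_g$. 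Under the vanishing hypothesis this forces $\int_\M g^\sharp(A,\ff')\,\vol_g=0$ for all $\ff'\in\Gamma_c(\V_g)$, and since $g^\sharp$ is a non-degenerate fiber metric the fundamental lemma of the calculus of variations yields $A=0$. As $A$ is the solution attached to the chosen Cauchy data, all four components of the element of $C_\Sigma$ vanish, which is exactly the non-degeneracy of $\sigma^{(\P)}$.

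I expect the main obstacle to be conceptual rather than computational: a symplectic form that is non-degenerate on the full unconstrained data space $\Omega_c^0(\Sigma)^2\times\Omega_c^1(\Sigma)^2$ need not stay non-degenerate on the constraint surface $C_\Sigma$ defined in (\ref{CCdata2}), so a naive independent variation of $(a^{(0)'},\pi^{(0)'},a^{(1)'},\pi^{(1)'})$ is illegitimate. The device that dissolves this difficulty is the surjectivity of $\G_\P$, which re-parametrizes $C_\Sigma$ by the free test sections $\ff'$; after that, the argument rests only on (\ref{GSIGMA}), skew-adjointness, and non-degeneracy of $g^\sharp$. The two points I would verify carefully are that all integrals converge (they do, since $\ff,\ff'$ are compactly supported while $A=\G_\P\ff$ is merely spacelike compact) and that the correspondence in Proposition~\ref{TEOEUP} is a genuine linear bijection, so that ``for all $A'\in C_\Sigma$'' faithfully becomes ``for all $\ff'\in\Gamma_c(\V_g)$''.
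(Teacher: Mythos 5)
Your proposal is correct and follows essentially the same route as the paper: both arguments reparametrize $C_\Sigma$ by free test sections via the surjectivity of $\G_\P$ from (\ref{RAN}), apply (\ref{GSIGMA}), and conclude from the non-degeneracy of $g^\sharp$ together with the well-posedness of the Cauchy problem (Proposition~\ref{TEOEUP}). The only cosmetic difference is that the paper varies the \emph{first} argument, so that $\int_\M g^\sharp(\ff,\G_\P\ff')\,\vol_g=0$ for all $\ff\in\Gamma_c(\V_g)$ yields $\G_\P\ff'=0$ directly, whereas you vary the second argument and therefore need the (correct) extra skew-adjointness step to move $\G_\P$ off that slot.
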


\begin{proof} Taking (\ref{RAN}) into account,
suppose that $\Gamma_{sc}(\V_g) \cap  \Ker \P \ni  A' = \G_\P \ff$ whose Cauchy data are $(a^{(0)'},\pi^{(0)'}, a^{(1)'},\pi^{(1)'})  \in C_\Sigma$ is such that $\sigma^{(\P)}(A,A')=0$ for all $A = \G_\P \ff
\in \Gamma_{sc}(\V_g) \cap  \Ker \P \equiv  C_\Sigma$, we want to prove that  $A'=0$ namely, its initial conditions are  $(0,0,0,0)$.
From (\ref{GSIGMA}), using the fact that $g^\sharp$ is non-degenerate, we have that $A' = \G_\P \ff' =0$ so that its Cauchy data are the zero data in view of the well-posedness of the Cauchy problem Proposition \ref{TEOEUP}.
\end{proof}

To conclude this section we prove that, when using Cauchy data in $C_\Sigma$,  the expression of $\sigma^{(\P)}$ can be re-arranged in order to make contact with the analogous symplectic forms of the two Klein-Gordon fields $ A^{(0)}$ and $ A^{(1)}$ the solution $A$ is made of,  as discussed  in Section \ref{SECCP}. Indeed, remembering the constraint $\pi^{(0)} =- \delta^{(1)}_h  a^{(1)}$, and using the duality of $\delta$ and $d$, part of the  integral in the right-hand side of (\ref{SYMPr}) can be rearranged to
\begin{align*}
 \int_\Sigma h^\sharp( a^{(1)}_t, da^{(0)'}_t) -   h^\sharp( a^{(1)'}_t, da^{(0)}_t)  \:  \vol_h &=   \int_\Sigma h^\sharp(\delta_h^{(1)} a^{(1)}_t, a^{(0)'}_t) -   h^\sharp(\delta_h^{(1)}a^{(1)'}_t, a^{(0)}_t)  \:  \vol_h \\
 &=  -\int_\Sigma h^\sharp(\pi^{(0)}_t, a^{(0)'}_t) -   h^\sharp(\pi^{(0)'}_t, a^{(0)}_t)  \:  \vol_h
\:.
\end{align*}
As a consequence, if $\eta_i =0$ for $i=1$ and $\eta_i=-1$ for $i=0$ and $h^\sharp_{(i)}$ is $h^\sharp$ for $i=1$ and the pointwise product for $i=0$,
\begin{equation}
\sigma^{(\P)}(A,A') = \sum_{i=0}^1 \eta_i \int_{\Sigma_t} h_{(i)}^\sharp(a^{(i)}_t, \pi^{(i)'}_t) -   h_{(i)}^\sharp(a^{(i)'}_t,  \pi^{(i)}_t) \:  \vol_h  \label{NEWSP}\:.
\end{equation} 
In other words,  referring to the (Klein-Gordon) symplectic forms introduced in \cite{Norm} for normally hyperbolic equations (\ref{EQ1}) and (\ref{EQ2})
\begin{equation*} \sigma^{(\P)}(A,A') =\sigma^{(1)}( A^{(1)},  A^{(1)'})  - \sigma^{(0)} ( A^{(0)},  A^{(0)'})
\end{equation*}
where $\sigma^{(k)}$ is the symplectic form for a normally hyperbolic field operator on a real vector bundle 
defined, e.g.,  \cite[Proposition 3.12]{Norm}.  

A similar result is valid for the causal propagators. Decomposing $\ff = \ff^{(0)}dt + \ff^{(1)} \in \Gamma_c(\V_g)$ where $\ff^{(0)} \in \Gamma_c(\V^{(0)}_g)$ and  $\ff^{(1)} \in \Gamma_c(\V^{(1)}_g)$, (\ref{GSIGMA}), the analogs for scalar and vector Klein Gordon fields \cite{Norm} and (\ref{NEWSP}) imply
\begin{equation*}
 \int_\M g^\sharp(\ff, \G_\P \ff')  \vol_g =  \int_\M h^\sharp(\ff^{(1)}, \G^{(1)} \ff^{(1)'})  \vol_g -  \int_\M \ff^{(0)} \G^{(0)} \ff^{(0)'} \vol_g 
\end{equation*}
where $\G^{(i)}$, $i=0,1$ are the causal propagators  for the normally hyperbolic operators 
$$\N^{(i)}:=\partial_t^2  + \Delta^{(i)}_h + m^2I  : \Gamma_{sc}(\V_g^{(i)}) \to  \Gamma_{sc}(\V_g^{(i)})\quad i=0,1$$ according to the theory of \cite{Norm}. Here $\Delta^{(0)}_h$ coincides with the standard Laplace-Beltrami operator for scalar fields on $\Sigma$.

\begin{remark}\label{REMSYMP} With the same argument, the found results immediately generalize to the case of {\em complex} $k$-forms. More precisely, if the Cauchy data belong to $C_\Sigma + i C_\Sigma$,
\begin{equation*}
 \sigma^{(\P)}(\overline{A},A') =\sigma^{(1)}( \overline{A^{(1)}},  A^{(1)'})  - \sigma^{(0)} ( \overline{A^{(0)}},  A^{(0)'})\:, 
\end{equation*}
where the left-hand side is again (\ref{SYMPr}) evaluated for complex Proca fields, i.e., complex Cauchy data.
Above, the bar denotes the complex conjugation and  the Cauchy data of the considered  complex Proca fields
 satisfy the constraints  (\ref{CCdata}).  Furthermore 
\begin{equation*}
 \int_\M g^\sharp(\overline{\ff}, \G_\P \ff')  \vol_g =  \int_\M h^\sharp(\overline{\ff^{(1)}}, \G^{(1)} \ff^{(1)'})  \vol_g -  \int_\M \overline{\ff^{(0)}} \G^{(0)} \ff^{(0)'} \vol_g
 \end{equation*}
where the smooth compactly supported sections are complex. We have used the same symbols as for the real case for the causal propagators since the associated operators commute with the complex conjugation. As a consequence, a standard argument about the uniqueness of Green operators implies that the causal propagators for the real case are nothing but the restriction of the causal propagator of the complex case which, in turn, are the trivial complexification of the real ones.
\end{remark}

\subsection{The Proca energy density  in ultrastatic spacetimes} Starting from the Proca Lagrangian in every curved spacetime (see, e.g, \cite{ProcaFT})
$${\cal L} =   - \frac{1}{4}F_{\mu\nu}F^{\mu \nu} - \frac{m^2}{2} A_\mu A^\mu\quad \mbox{with}\quad F_{\mu\nu}:= \partial_\mu A_\nu - \partial_\nu A_\mu$$
and referring to local coordinates $(x^0,\ldots, x^{n-1})$ adapted to the split $\M= \RR \times \Sigma$ of our ultrastatic spacetime, where $x^0=t$ runs along the whole $\RR$ and $x^1,\ldots, x^{n-1}$ are local coordinates on $\Sigma$, the energy density reads in terms of initial conditions on $\Sigma$ of the considered Proca field
\begin{equation}\label{POSEN}
\begin{aligned}
 T_{00} = \frac{1}{2}\, & h^\sharp( \pi^{(1)}-da^{(0)},  \pi^{(1)}-da^{(0)}) + \frac{1}{2} h_{(2)}^\sharp (da^{(1)}, da^{(1)}) \\
 &+ \frac{m^2}{2} \left( h^\sharp(a^{(1)}, a^{(1)})
+ a^{(0)}a^{(0)}\right) \geq 0\:. 
\end{aligned}
\end{equation}
Above $ h_{(2)}^\sharp$ is the natural scalar product for the $2$-forms on $\Sigma$ induced by the metric tensor.
It is evident that the energy density is non-negative since the metric $h$ and its inverse $h^\sharp$ are positive by hypothesis. 
The total energy at time $t$ is the integral of $T_{00}$ on $\Sigma$, using the natural volume form, when replacing $A^{(0)}$ and $ A^{(1)}$ for the respective Cauchy data. As $\partial_t$ is a Killing vector and the solution is spacelike compact, the total energy is finite and constant in time.  
\begin{equation} \label{EP}
\begin{aligned}
E^{(P)} = \frac{1}{2}\int_\Sigma &\Big( h^\sharp( \pi^{(1)}-da^{(0)},  \pi^{(1)}-da^{(0)}) +  h_{(2)}^\sharp (da^{(1)}, da^{(1)}) \\
& + m^2\big( h^\sharp(a^{(1)}, a^{(1)})
+ a^{(0)}a^{(0)}\big) \Big)\vol_h\:. 
\end{aligned}
\end{equation}
Using Hodge duality of $d$ and $\delta$ and the definition of the Hodge Laplacian,   the expression of the total energy can be re-arranged to
$$E^{(P)} = \frac{1}{2}\int_\Sigma \Big( h^\sharp( \pi^{(1)},  \pi^{(1)}) + h^\sharp (da^{(0)},da^{(0)}) -2 h^\sharp( \pi^{(1)}, da^{(0)}) - \delta^{(1)}_h  a^{(1)}  \delta^{(1)}_h a^{(1)}$$ $$+ h^\sharp\big( a^{(1)}, \Delta_h^{(1)} a^{(1)})
+ m^2(a^{(0)}a^{(0)} + h^\sharp(a^{(1)},a^{(1)})\big) \Big) \vol_h\:.$$
 Using again the Hodge duality of $d$ and $\delta$   the third term in the integral can be rearranged to $$- \int_\Sigma h^\sharp( \pi^{(1)}, da^{(0)}) \vol_h =- \int_\Sigma a^{(0)} \delta^{(1)}_h \pi^{(1)} \vol_h\:.$$ The term  $\delta^{(1)}  \pi^{(1)}$ above and the term $\delta^{(1)}_h a^{(1)}  \delta^{(1)}_h a^{(1)}$ appearing in the expression for the total energy
can be worked out 
exploiting  the constraints
(\ref{CCdata}). Inserting the results in the found formula for the total energy, we finally find, with the notation already used for the symplectic form, 
\begin{equation}E^{(P)} = \sum_{i=0}^1 \eta_i \frac{1}{2}\int_\Sigma h_{(i)}^\sharp( \pi^{(i)},  \pi^{(i)}) + h_{(i)}^\sharp(a^{(i)}, (\Delta_h^{(i)} + m^2I)a^{(i)})\: \vol_h
 \label{2T00}\:,
\end{equation}
when the used Cauchy data belong to the constrained space $C_\Sigma$.
It is now  clear that the total energy of the Proca field is the difference between the total energies of the two Klein-Gordon fields composing it exactly as it happened for the symplectic form. This difference is however positive when working on smooth compactly supported  initial conditions satisfying the constraints (\ref{CCdata}), because the found expression of the energy is the same as the one computed with the density (\ref{POSEN}).

\begin{remark}
We notice that the negative energy component of the field can be interpreted as a ghost, in this case however no issues arise since dynamical constraints covariantly remove such a state. A different approach to the problem by generalizing to curved spacetime the Stuckelberg lagrangian, can be found in \cite{Folacci}, where it is appearently argued the no Hadamard states exist for the Proca field, contrarily to the results of \cite{Fewster} and of this work.
\end{remark}

\begin{remark} With the same argument, the found result immediately generalizes to the case of {\em complex} $k$-forms and one finds
\begin{equation}
\begin{aligned}
 \sum_{i=0}^1 \eta_i \frac{1}{2}\int_\Sigma & h_{(i)}^\sharp( \overline{\pi^{(i)}},  \pi^{(i)}) + h_{(i)}^\sharp( \overline{a^{(i)}}, (\Delta_h^{(i)} + m^2I)a^{(i)})\vol_h =\\
=  \frac{1}{2}\int_\Sigma & \Big( h^\sharp(\overline{\pi^{(1)}-da^{(0)}},  \pi^{(1)}-da^{(0)}) +  h_{(2)}^\sharp (\overline{da^{(1)}}, da^{(1)}) \\ & + m^2\big( h^\sharp(\overline{a^{(1)}}, a^{(1)})
+ \overline{a^{(0)}}a^{(0)}\big) \vol_h \Big) \geq 0
\label{POSCOMP}
\end{aligned}
\end{equation}
where the bar over the forms denotes the complex conjugation and 
 $ (a^{(0)}, \pi^{(0)}, a^{(1)}, \pi^{(1)})$ are {\em complex} forms of  $C_\Sigma + i C_\Sigma$.
\end{remark}

\subsection{Elliptic Hilbert complexes and Proca quantum states in ultrastatic  spacetimes}
We can proceed to the construction of quasifree  states. As we shall see shortly, this construction for the Proca field uses some consequences of the spectral theory applied to the theory of {\em elliptic Hilbert complexes}  \cite{BuLe} defined  in terms of the closure of Hodge operators in natural $L^2$ spaces of forms.  

Some of the following ideas were inspired by~\cite{Fewster}. However we now work in the space of Cauchy data instead of in the space of smooth supportly compacted forms and/or modes. Furthermore we do not assume restrictions on the topology of the Cauchy surfaces used in \cite{Fewster} to impose a pure point spectrum to the Hodge Laplacians.

To define quasifree states for the Proca field we observe that,  as $\P$ is Green hyperbolic,  the CCR algebra $\mathcal{A}_g$ is isomorphic to the analogous unital $*$-algebra $\mathcal{A}^{(symp)}_g$  generated by the {\bf solution-smeared field operators} $\sigma^{(\P)}(\hat{\fa}, A)$, for $A\in \Ker_{sc}(P)$, which are $\RR$-linear in $A$, Hermitian,  and  satisfy the commutation relations\footnote{Notice that,  as $\sigma^{(\P)}(A,A')$ is non degenerate, we have that $\sigma^{(\P)}(\hat{\fa}, A)=0$ only if $A=0$.}
\begin{equation}\left[\sigma^{(\P)}(\hat{\fa}, A),\sigma^{(\P)}(\hat{\fa}, A')\right] = i\sigma^{(\P)}(A,A') I\:.\label{CARSYMP}\end{equation}
The said unital $*$-algebra isomorphism $F: \mathcal{A}_g \to \mathcal{A}^{(symp)}_g$ is completely defined as the unique homomorphism of unital $*$-algebras that satisfies  $$F: \hat{\fa}(\ff) \mapsto \sigma^{(\P)}\left(\hat{\fa}, \G_\P \ff\right)\quad \mbox{with $A= \G_\P\ff$, $\quad\ff\in\Gamma_c(\V_g)$}\:.$$
The definition is well-posed in view of (\ref{GSIGMA}), (\ref{RAN}),  (\ref{cruciallemma0}), and the  definition of $\mathcal{A}_g$.
Within this framework, the two point function  $\omega_2$ is interpreted as the integral kernel of $$\omega\left(\sigma^{(\P)}(\hat{\fa}, A) \sigma^{(\P)}(\hat{\fa}, A')\right)\:.$$ In particular, its antisymmetric part is universally given by  $\frac{i}{2} \sigma^{(P)}(A,A')$ due to (\ref{CARSYMP}). The specific part of the two point function is therefore completely embodied in its symmetric part $\mu(A,A')$.

   According to this observation, a  general recipe for real  (bosonic) CCR in generic globally hyperbolic spacetimes  to  define a quasifree state on the $^*$-algebra $\mathcal{A}_g$  
   (e.g., see \cite{KW,W,IgorValter} for the scalar case and   \cite[Chapter 4, Proposition 4.9]{gerardBook} for the generic case of real bosonic CCRs)  is to assign  a   real scalar product on the space of spacelike compact solutions
$$\mu : \Ker_{sc}(\P) \times  \Ker_{sc}(\P) \to \RR$$
satisfying
\begin{itemize}
\item[(a)] the strict  positivity requirement $\mu(A,A)\geq 0$ where $\mu(A,A) =0$ implies $A =0$;  
\item[(b)] the  continuity requirement with respect to the relevant  symplectic form $\sigma^{(P)}$ (see, e.g., \cite[Proposition 4.9]{gerardBook}), 
\begin{equation}
\sigma^{(\P)}(A,A')^2 \leq 4 \mu(A,A) \mu(A',A') \label{diseq}\:.
\end{equation}
\end{itemize}
 The  continuity requirement  directly arises form the fact that the quasifree state induced by $\mu$ on the whole $*$-algebra $\mathcal{A}_g\equiv \mathcal{A}_g^{symp}$
according to Definition \ref{quasifree} 
 is a positive functional. The converse implication, though true,  is less trivial \cite{KW,gerardBook}. 
  The two mentioned requirements are nothing but the direct translation of  (2)' and (3)' stated in the introduction. (Regarding the latter, observe that  $\sigma^{(P)}$  corresponds to the causal propagator at the level of solutions -- Eq.~(\ref{GSIGMA}) in our case --
 as discussed in Section \ref{SECSYMPL}.)
At this point, it should be clear that the quasifree state defined by $\mu$ has two-point function, viewed as bilinear map on $\Gamma_c(\V_g)\times \Gamma_c(\V_g)$,
$$
\omega_\mu(\fa(\ff)\fa(\ff')) = \omega_{\mu2}(\ff,\ff') := \mu(\G_\P\ff, \G_\P\ff') + \frac{i}{2}\sigma^{(P)}(\G_\P\ff, \G_\P\ff')\:. 
$$

However, since the Cauchy problem is well posed 
on the time slices $\Sigma$ of an ultrastatic spacetime $(\RR\times \Sigma,  - dt\otimes dt + h)$,
as proved in
Proposition~\ref{TEOEUP}, we can directly define $\mu$ (and $\sigma^{(\P)}$) in the space of Cauchy data $C_\Sigma$ on $\Sigma$, for smooth spacelike compact solutions, viewed as the time slice at $t=0$,
\begin{equation*}
\mu : C_\Sigma \times C_\Sigma \to \RR\:.
\end{equation*}
In view of the peculiarity of the Cauchy problem for the Proca field as discussed in Section~\ref{SECCP},  the real vector space of the Cauchy data $C_\Sigma$ is {\em constrained}. We underline that working at the level of constrained initial data does not affect the construction of quasifree states. Indeed, it is sufficient that the space of constrained initial conditions is a real (or complex) vector space and that the constrained Cauchy problem is well posed. 
With this in mind, referring to  the canonical decomposition $A = A^{(0)} dt + A^{(1)}$ of a real  smooth spacelike compact solution $A$ of the Proca equation, we remember that
\begin{equation}
 C_\Sigma := \left\{ (a^{(0)},\pi^{(0)}, a^{(1)},\pi^{(1)}) \in \Omega^{0}_c(\Sigma)^2\times \Omega^{1}_c(\Sigma)^2 \:\left|\: \pi^{(0)}  =- \delta^{(1)}_h a^{(1)} \:, \quad   (\Delta_h^{(0)} + m^2) a^{(0)} = \delta^{(1)}_h \pi^{(1)}\right. \right\}\:. \nonumber
\end{equation}
Above $(a^{(0)},\pi^{(0)}) := (A^{(0)}, \partial_t A^{(0)})|_{t=0}$ and $(a^{(1)},\pi^{(1)}) := (A^{(1)}, \partial_t A^{(1)})|_{t=0}$.

With the said definitions and 
{\em where $A$ denotes both a solution of Proca equation and its Cauchy data on $\Sigma$},
we have the first result.
\begin{proposition} \label{PROPHAD1}
Consider the  $*$-algebra $\mathcal{A}_g$ of the real Proca field on the ultrastatic spacetime $(\M,g)= (\RR\times \Sigma, -dt\otimes dt+h)$, with $dt$ past directed and  $(\Sigma, h)$ a smooth complete Riemannian manifold.
 Let $\eta_0:=-1$, $\eta_1:=1$ and $h^\sharp_{(j)}$ denote the standard inner product of $j$-forms on $\Sigma$ induced by $h$.
The bilinear  map on the space $C_\Sigma$ of real smooth compactly supported Cauchy data (\ref{CCdata2}) \begin{equation}\mu(A,A') := \sum_{j=0}^1 \frac{\eta_j}{2} \int_\Sigma   h_{(j)}^\sharp(\pi^{(j)}, (\overline{\Delta^{(j)} + m^2})^{-1/2} \pi^{(j)'})    +  h^\sharp_{(j)}  (a^{(j)},  (\overline{\Delta^{(j)} + m^2})^{1/2} a^{(j)'}  )\: \vol_h 
 \label{DEFMU} 
 \end{equation}
is a well defined symmetric positive inner product which  satisfies (\ref{diseq}) and thus it defines a quasifree state $\omega_\mu$  on  $\mathcal{A}_g$ completely defined by its two-point function
\begin{equation}\label{omega2R}
\omega_\mu\left(\fa(\ff)\fa(\ff')\right) = \omega_{\mu2}(\ff,\ff') := \mu\left(\G_\P\ff, \G_\P\ff'\right) + \frac{i}{2}\sigma^{(P)}\left(\G_\P\ff, \G_\P\ff'\right)\ 
\end{equation}
where $ \ff,\ff' \in \Gamma_c(\V_g) $ satisfy
$$\sigma^{(P)}\left(\G_\P\ff, \G_\P\ff'\right) =   \int_\M g^\sharp(\ff, \G_\P \ff') \:  \vol_g \:.$$
\end{proposition}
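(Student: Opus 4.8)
The plan is to eliminate the scalar data $(a^{(0)},\pi^{(0)})$ via the constraints defining $C_\Sigma$ and thereby re-express $\mu$ solely in terms of the $1$-form data $(a^{(1)},\pi^{(1)})$, so that the apparent indefiniteness coming from $\eta_0=-1$ disappears. Throughout I write $K_j:=(\overline{\Delta^{(j)}+m^2})^{1/2}$, a strictly positive self-adjoint operator with spectrum in $[m,\infty)$ (here $m^2>0$ is essential: it removes any harmonic/kernel contribution and makes $K_j^{-1}$ bounded), and abbreviate by $(u,v)_{j}:=\int_\Sigma h^\sharp_{(j)}(u,v)\,\vol_h$ the $L^2$ inner product of $j$-forms. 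Well-definedness of (\ref{DEFMU}) is then immediate: smooth compactly supported forms lie in the domain of every power of $\Delta^{(j)}$, so each factor is an $L^2$ pairing of $L^2$ sections; symmetry follows since $K_j^{\pm1/2}$ are self-adjoint and $(u,v)_j$ is symmetric on real forms.

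For the strict positivity (a) I would insert $\pi^{(0)}=-\delta a^{(1)}$ and $a^{(0)}=K_0^{-2}\delta\pi^{(1)}$ into the $j=0$ summand, exactly as in the total-energy rearrangement (\ref{2T00}), and use the intertwining relation $\delta^{(1)}(\Delta^{(1)}+m^2)=(\Delta^{(0)}+m^2)\delta^{(1)}$, which (through the functional calculus of the two Hodge Laplacians) upgrades to $K_0^{-s}\delta=\delta K_1^{-s}$ for $s=1,2,3$, together with the adjunction $(\delta u,\delta v)_0=(u,d\delta v)_1$. A short computation, using $\Delta^{(1)}=d\delta+\delta d$ and $d^2=\delta^2=0$, collapses the whole expression to
\begin{equation*}
\mu(A,A)=\tfrac12\,(\pi^{(1)},K_1^{-3}B\,\pi^{(1)})_1+\tfrac12\,(a^{(1)},K_1^{-1}B\,a^{(1)})_1,\qquad B:=\delta d+m^2 .
\end{equation*}
Since $B\ge m^2>0$ and commutes with $K_1$, the operators $K_1^{-3}B$ and $K_1^{-1}B$ are products of commuting strictly positive self-adjoint operators, hence strictly positive with trivial kernel. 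This yields $\mu(A,A)\ge0$, and $\mu(A,A)=0$ forces $\pi^{(1)}=a^{(1)}=0$, whence $\pi^{(0)}=a^{(0)}=0$ by the constraints, i.e. $A=0$.

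To obtain the continuity bound (\ref{diseq}) I would exhibit a one-particle structure. Let $\mathcal H:=L^2\Omega^1(\Sigma)_\CC$ and define the $\RR$-linear map
\begin{equation*}
\mathcal K A:=\tfrac{1}{\sqrt2}\Big(K_1^{-1/2}B^{1/2}a^{(1)}+i\,K_1^{-3/2}B^{1/2}\pi^{(1)}\Big)\in\mathcal H .
\end{equation*}
Since the real cross terms cancel (the pairing $(\cdot,\cdot)_1$ being symmetric on real forms), one gets $\|\mathcal K A\|^2=\mu(A,A)$, i.e. $\mu(A,A')=\mathrm{Re}\,\langle\mathcal K A,\mathcal K A'\rangle$. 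Using the same constraint substitution and intertwining to rewrite $\sigma^{(P)}=\sigma^{(1)}-\sigma^{(0)}$ from (\ref{NEWSP}) in the $1$-form data, one finds the matching identity $\sigma^{(P)}(A,A')=(a^{(1)},K_1^{-2}B\,\pi^{(1)'})_1-(\pi^{(1)},K_1^{-2}B\,a^{(1)'})_1=2\,\mathrm{Im}\,\langle\mathcal K A,\mathcal K A'\rangle$. The bound is then immediate from Cauchy--Schwarz and $|\mathrm{Im}\,z|\le|z|$:
\begin{equation*}
\sigma^{(P)}(A,A')^2=4\big(\mathrm{Im}\,\langle\mathcal K A,\mathcal K A'\rangle\big)^2\le4\,\|\mathcal K A\|^2\,\|\mathcal K A'\|^2=4\,\mu(A,A)\,\mu(A',A') .
\end{equation*}
With (a) and (\ref{diseq}) established, the general GNS recipe for real bosonic CCR recalled before the statement (cf. \cite[Prop.~4.9]{gerardBook}) produces the quasifree state with two-point function (\ref{omega2R}), while (\ref{GSIGMA}) identifies $\sigma^{(P)}(\G_\P\ff,\G_\P\ff')$ with $\int_\M g^\sharp(\ff,\G_\P\ff')\,\vol_g$.

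The main obstacle is the functional-calculus bookkeeping rather than the algebra: promoting the elementary intertwining $\delta^{(1)}(\Delta^{(1)}+m^2)=(\Delta^{(0)}+m^2)\delta^{(1)}$ on smooth compactly supported forms to the relations $K_0^{-s}\delta=\delta K_1^{-s}$ between the closed operators, the genuinely nontrivial cases being the half-integer powers $s=1,3$, and controlling domains so that every reordering of $K_1^{\pm s}$, $B^{1/2}$, $d$ and $\delta$ inside the $L^2$ pairings is legitimate. This is precisely where the spectral theory of the elliptic Hilbert complex enters (one obtains the fractional intertwining from the resolvent intertwining via the integral representation of fractional powers, using that $d,\delta$ commute with the Laplacians). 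Once this is in place, the reduction of both $\mu$ and $\sigma^{(P)}$ to the displayed $1$-form expressions, and hence the verification that $\mathcal K$ realizes $(\mu,\sigma^{(P)})$, is routine.
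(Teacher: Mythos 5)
Your proposal is correct, and the key computations check out: substituting $\pi^{(0)}=-\delta a^{(1)}$ and $a^{(0)}=K_0^{-2}\delta\pi^{(1)}$ into the $j=0$ summand and commuting fractional powers past $d,\delta$ does collapse $\mu(A,A)$ to $\tfrac12(\pi^{(1)},K_1^{-3}B\pi^{(1)})_1+\tfrac12(a^{(1)},K_1^{-1}Ba^{(1)})_1$ with $B=\delta d+m^2\ge m^2$, and the same substitution turns $\sigma^{(\P)}$ into $2\,\mathrm{Im}\langle\mathcal K A,\mathcal K A'\rangle$, so Cauchy--Schwarz gives (\ref{diseq}). Your route is genuinely different from the paper's. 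The paper never eliminates the scalar sector: it transforms the full quadruple of Cauchy data by $(\overline{\Delta^{(j)}+m^2})^{-1/4}$, checks (via the intertwining Lemma~\ref{LEMMAFORME}) that the transformed data still satisfy the constraints, and then recognizes $\mu(A,A)$ as the total energy (\ref{EP}) of the transformed data, which is manifestly a sum of squares; for (\ref{diseq}) it complexifies, builds the Hermitian form $\Lambda=\mu+\tfrac i2\sigma^{(\P)}$ by a second explicit substitution of data satisfying the generalized constraints, proves $\Lambda\ge 0$ by the complex energy identity (\ref{POSCOMP}), and applies Cauchy--Schwarz to $\Lambda$. What your version buys is a completely explicit one-particle structure $\mathcal K$ on $L^2\Omega^1(\Sigma)_\CC$ realizing the pair $(\mu,\sigma^{(\P)})$ and a closed formula for $\mu$ on which strict positivity is immediate from $\ker B=\{0\}$; what the paper's version buys is the physical reading of $\mu$ as an energy and the avoidance of any asymmetry between the two form degrees. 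Both arguments hinge on exactly the same analytic ingredient, the fractional intertwining $K_0^{-s}\delta\subset\delta K_1^{-s}$ between the closures: the paper isolates this as Lemma~\ref{LEMMAFORME} and proves it with the spectral measure of $\overline{\Delta+cI}^{-1}$, and you correctly flag it as the only nontrivial step, your suggested derivation via resolvent intertwining and the integral representation of fractional powers being a legitimate alternative. The only point to tighten is that your identities should be stated as operator inclusions on the domains actually occurring (all of which contain the compactly supported smooth forms and their images under $\delta$), precisely as in Lemma~\ref{LEMMAFORME}.
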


The bar over the operators in (\ref{DEFMU})  denotes the closure in suitable Hilbert spaces of the operators originally defined on domains of compactly supported smooth functions. To explain  this formalism,  before starting with the proof  we have to permit some technical facts about the properties of the Hodge operators at the level of $L^2$ spaces.
Given the complete Riemannian manifold $(\Sigma, h)$, with $n:= \dim (\Sigma)$
consider  the Hilbert space $\mathcal{H}_h:= \bigoplus_{k=0}^{n} L_k^2(\Sigma, \vol_h)$, where the sum is orthogonal and $ L_k^2(\Sigma, \vol_h)$ is the complex Hilbert space   of the square-integrable $k$-forms with respect to the relevant {\em Hermitian} Hodge inner product:
$$( a| b )_k :=   \int_\Sigma  h_{(k)}^\sharp(\overline{a},b)\: \vol_h \:, \quad a,b \in L_k^2(\Sigma, \vol_h)\:,$$
where  $\overline{a}$ denotes the pointwise complex conjugation of the complex form $a$.
The overall inner product on $\mathcal{H}_h$ will be  indicated by $(\cdot|\cdot)$ and the Hilbert space adjoint of a densely-defined  operator $A: D(A) \to \mathcal{H}_h$, with $D(A)\subset \mathcal{H}_h$, will be denoted by $A^*:D(A^*)\to \mathcal{H}_h$. The closure of $A$   will be denoted by the bar: $\overline{A}: D(\overline{A})\to \mathcal{H}_h$.

If $\Omega_c(\Sigma)_\CC := \bigoplus_{k=0}^{n} \Omega^k_c(\Sigma)_\CC$ denotes the dense subspace of complex  complactly supported smooth forms  $\Omega^k_c(\Sigma)_\CC := \Omega^k_c(\Sigma) + i\Omega^k_c(\Sigma)$,
define the two  operators (we omit the index $h$ for shortness)  $$d :=  \oplus_{k=0}^n d^{(k)}: \Omega_c(\Sigma)_\CC \to  \Omega_c(\Sigma)_\CC \:, \quad \delta:= \oplus_{k=0}^n \delta^{(k)} : \Omega_c(\Sigma)_\CC \to  \Omega_c(\Sigma)_\CC$$ 
 with $d^{(n)} :=0$ and  $\delta^{(0)} :=0$.
Finally, introduce the Hodge Laplacian as $$\Delta := \sum_{k=0}^n \Delta^{(k)}:  \Omega_c(\Sigma)_\CC   \to  \Omega_c(\Sigma)_\CC \quad \mbox{with $\Delta^{(k)} := \delta^{(k+1)} d^{(k)} + d^{(k-1)}\delta^{(k)}$}. $$
Since $(\Sigma, h)$ is complete,  $\Delta$ can be proved to be essentially selfadjoint,  for instance exploiting the well-known argument by Chernoff \cite{Chernoff} (or directly referring to  \cite{AV}).  Since $\Delta$ is essentially selfadjoint, if $c\in \RR$, also $\Delta +cI$ is essentially selfadjoint. In particular,  its unique selfadjoint extension is its closure  $\overline{\Delta +cI}$.

Referring to the theory of {\em elliptic Hilbert complexes} developed in \cite[Section 3]{BuLe} and focusing in particular on   \cite[Lemma 3.3]{BuLe} based on previous achievements established in \cite{AV},  we can conclude that  the following couple of facts are true.  (The compositions of operators are henceforth defined with  their natural domains: $D(A+B):= D(A) \cap D(B)$,  $D(AB) =\{x\in D(B)\:|\: Bx\in D(A)\}$, $D(aA):= D(A)$ for $a\neq 0$, $D(0A) := \mathcal{H}_h$, and $A\subset B$ means $D(A)\subset D(B)$ with $B|_{D(A)}=A$.)
\begin{itemize}
\item[(a)] The identities hold
\begin{equation}
\overline{d}^* = \overline{\delta}\:, \quad \overline{\delta}^* = \overline{d} \label{agg}
\end{equation}
where $^*$ denotes the adjoint  in the Hilbert space $\mathcal{H}_h$.
\item[(b)]  The  unique selfadjoint extension $\overline{\Delta}$ of $\Delta$ satisfies
\begin{equation}
\overline{\Delta} = \overline{d}\:\overline{\delta}+\overline{\delta}\overline{d} =  \sum_{k=0}^n \overline{\Delta^{(k)}} \quad \mbox{with $\overline{\Delta^{(k)}} := \overline{\delta^{(k+1)}}\:\overline{d^{(k)}} + \overline{d^{{k-1}}}\:\overline{\delta^{(k)}}$}. \label{Deltaov}
\end{equation}
A trivial generalization of the  decomposition as in (\ref{Deltaov}) holds for $\overline{\Delta + cI} = \overline{\Delta} + cI$ with $c\in \RR$.
\end{itemize}

We are now prompt to prove  a preparatory technical lemma --  necessary to establish   Proposition \ref{PROPHAD1} --
 that will be fundamental for showing that the bilinear map $\mu$ is positive on the space $C_\Sigma$.

\begin{lemma}\label{LEMMAFORME}
 For every given $k=0,1,\ldots, n$, $c>0$, and $\alpha \in \RR$, the identites hold
\begin{align*}
(\overline{\Delta^{(k+1)} +cI})^{\alpha}  \overline{d^{(k)}} x &= \overline{d^{(k)}} (\overline{\Delta^{(k)}+cI})^{\alpha}x\:, \quad 
\forall x \in D((\overline{\Delta^{(k)}+ cI})^\alpha) \cap D((\overline{{\Delta}^{(k+1)}+ cI})^\alpha  \overline{d^{(k)}})\\
 (\overline{\Delta^{(k-1)} +cI})^{\alpha}  \overline{\delta^{(k)}} y &= \overline{\delta^{(k-1)}} (\overline{\Delta^{(k)}+cI})^{\alpha}y\:, \quad 
\forall y \in D((\overline{\Delta^{(k)}+ cI})^\alpha) \cap D((\overline{{\Delta}^{(k-1)}+ cI})^\alpha  \overline{\delta^{(k)}}) \:.
\end{align*}
\end{lemma}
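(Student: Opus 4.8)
The plan is to reduce both identities to a single commutation statement between the closed operator $\overline{d^{(k)}}$ (resp.\ $\overline{\delta^{(k)}}$) and the functional calculus of the positive selfadjoint operators $L_k := \overline{\Delta^{(k)}+cI}$, handling the unbounded power $(\cdot)^\alpha$ only at the very last stage. First I would record the classical intertwining on the core $\Omega_c^k(\Sigma)_\CC$: since $d^{(k+1)}d^{(k)}=0$ and $d^{(k)}d^{(k-1)}=0$, a one-line computation with $\Delta^{(k)}=\delta^{(k+1)}d^{(k)}+d^{(k-1)}\delta^{(k)}$ gives $d^{(k)}(\Delta^{(k)}+cI)=(\Delta^{(k+1)}+cI)d^{(k)}$ on $\Omega_c^k(\Sigma)_\CC$, and dually $\delta^{(k)}(\Delta^{(k)}+cI)=(\Delta^{(k-1)}+cI)\delta^{(k)}$. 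Each $L_k$ is selfadjoint with spectrum in $[c,\infty)$, so $(L_k-z)^{-1}$ is bounded for, e.g., $\Re z<c$. I also note from fact (b) and the natural domain conventions that $D(L_k)=D(\overline{\Delta^{(k)}})\subseteq D(\overline{d^{(k)}})\cap D(\overline{\delta^{(k)}})$, which is the domain ingredient that makes the following steps legitimate.

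Second, I would promote the core intertwining to a bounded resolvent identity. For $x\in\Omega_c^k(\Sigma)_\CC$ the vector $(L_k-z)^{-1}x$ lies in $D(L_k)\subseteq D(\overline{d^{(k)}})$, so $\overline{d^{(k)}}(L_k-z)^{-1}x$ is defined; using the base intertwining from the first step and the identity $\overline{\Delta^{(k+1)}}\supseteq \Delta^{(k+1)}$ one checks $(L_{k+1}-z)\,\overline{d^{(k)}}(L_k-z)^{-1}x=\overline{d^{(k)}}x$, i.e. $\overline{d^{(k)}}(L_k-z)^{-1}x=(L_{k+1}-z)^{-1}\overline{d^{(k)}}x$. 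Since $\Omega_c^k(\Sigma)_\CC$ is a core for $\overline{d^{(k)}}$, the closedness of $\overline{d^{(k)}}$ and the boundedness of the two resolvents let me extend this to the bounded-operator identity $\overline{d^{(k)}}\,(L_k-z)^{-1}=(L_{k+1}-z)^{-1}\,\overline{d^{(k)}}$ on all of $D(\overline{d^{(k)}})$.

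Third, I would upgrade the resolvent commutation to a commutation with bounded Borel functions. Assembling $L:=\bigoplus_k L_k=\overline{\Delta+cI}$ on $\mathcal{H}_h$ and $T:=\overline{d}$, the previous step says $T$ commutes with the resolvent of $L$; since the bounded Borel functional calculus of a selfadjoint operator is recovered from its resolvents (Stieltjes inversion / approximation of $f$ by polynomials in the resolvent), it follows that $f(L)$ maps $D(T)$ into $D(T)$ with $Tf(L)=f(L)T$ on $D(T)$ for every bounded Borel $f$. Restricting to the degree-$k$ block yields $\overline{d^{(k)}}f(L_k)=f(L_{k+1})\overline{d^{(k)}}$. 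For $\alpha\le 0$ the function $\lambda\mapsto\lambda^\alpha$ is bounded on $[c,\infty)$, so this already gives the first identity. For $\alpha>0$ I apply the bounded case to the truncations $f_n(\lambda):=\lambda^\alpha\mathbbm{1}_{[c,n]}(\lambda)$, use $f_n(L_k)x\to L_k^\alpha x$ for $x\in D(L_k^\alpha)$, and invoke closedness of $\overline{d^{(k)}}$ on the stated intersection domain to pass to the limit; the $\delta$-identity then follows either by the same scheme with $\delta$ in place of $d$, or by taking Hilbert adjoints and using $\overline{d}^{\,*}=\overline{\delta}$ from fact (a).

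The main obstacle is the domain bookkeeping at this final, genuinely unbounded, stage: one must verify that on the precise domain $D(L_k^\alpha)\cap D(L_{k+1}^\alpha\,\overline{d^{(k)}})$ the interchange of the closed operator $\overline{d^{(k)}}$ with the spectral limit defining $L_k^\alpha$ is justified, which is exactly where the closedness of $\overline{d^{(k)}}$ and the truncation approximation must be combined with care. By contrast, the resolvent identity and the bounded functional-calculus commutation are routine once fact (b) has supplied the domain inclusion $D(L_k)\subseteq D(\overline{d^{(k)}})$.
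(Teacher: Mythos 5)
Your proposal follows essentially the same route as the paper: establish that $\overline{d}$ (resp.\ $\overline{\delta}$) commutes with a bounded function of $\overline{\Delta+cI}$ --- the paper uses the bounded inverse, you use the resolvent --- pass to the spectral projections and hence to all bounded Borel functions, and finally reach the unbounded power $(\cdot)^\alpha$ by truncation ($f_n(\lambda)=\lambda^\alpha\mathbbm{1}_{[c,n]}(\lambda)$) together with closedness of $\overline{d}$. That last step, which you single out as the delicate one, is handled exactly as you describe on the stated intersection domain and is correct; the $\delta$-case by adjunction via $\overline{d}^{\,*}=\overline{\delta}$ is also fine.

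The one place your write-up is thinner than it should be is the promotion of the core intertwining to the resolvent identity. For $x\in\Omega_c^k(\Sigma)_\CC$ the vector $y=(L_k-z)^{-1}x$ is in general no longer compactly supported, so ``the base intertwining from the first step'' (valid only on $\Omega_c^k(\Sigma)_\CC$) together with the trivial inclusion $\overline{\Delta^{(k+1)}}\supseteq\Delta^{(k+1)}$ does not by itself show that $\overline{d^{(k)}}y\in D(L_{k+1})$ with $(L_{k+1}-z)\overline{d^{(k)}}y=\overline{d^{(k)}}x$; and the obvious attempt to pass from the core by a sequence $y_n\to y$ in the graph norm of $L_k$ fails, because nothing controls the convergence of $d(L_k-z)y_n$. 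The missing ingredient is the Hilbert-complex identity $\overline{\Delta}=\overline{d}\,\overline{\delta}+\overline{\delta}\,\overline{d}$ of fact (b): writing $\overline{\delta}\,\overline{d}y=x-\overline{d}\,\overline{\delta}y-(c-z)y$ and using $\overline{d}\,\overline{d}=0$ on $D(\overline{d})$ shows $\overline{d}y\in D(\overline{\Delta^{(k+1)}})$ and yields the claimed identity (alternatively, invoke essential selfadjointness to identify $\overline{\Delta^{(k+1)}}$ with the maximal distributional Laplacian and argue via elliptic regularity). This is precisely the point the paper settles first, by proving the closed-operator identity $\overline{d}\,\overline{\Delta+cI}=\overline{\Delta+cI}\,\overline{d}$ before inverting; with that repair your argument goes through.
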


\begin{proof}
 Since $dd=0$ and $\delta\delta=0$, from (\ref{agg}), we also have 
 $\overline{d}\:\overline{d}x=0$ if $x\in D(\overline{d})$ and $\overline{\delta}\:\overline{\delta}y=0$ 
if $y\in D(\overline{\delta})$, 
and thus (\ref{Deltaov}) yields\footnote{It holds $(B+C)A = BC+BA$, but $AB+AC \subset A(B+C)$.}
$$\overline{d}\: \overline{\Delta} \supset  \overline{d}\:\overline{\delta}\: \overline{d} = \overline{\Delta}\: \overline{d}\:.$$ 
However, if $D(\overline{d}\: \overline{\Delta}) \supsetneq D(\overline{d}\:\overline{\delta}\: \overline{d})$, we would have $x \in D(\overline{\Delta})= D(\overline{\delta}\:\overline{d}) \cap 
D(\overline{d}\:\overline{\delta})$ such that $\overline{\Delta} x= \overline{\delta}\:\overline{d}x + \overline{d}\:\overline{\delta}x \in D(\overline{d})$, but
$x\not \in D(\overline{d}\:\overline{\delta}\:\overline{d} )$, namely $\overline{\delta}\:\overline{d}  x \not \in D(\overline{d})$. This is impossible since 
$\overline{\delta}\:\overline{d}x + \overline{d}\:\overline{\delta}x \in D(\overline{d})$,
$D(\overline{d})$ is a subspace and 
$\overline{d}\:\overline{\delta}x \in D(\overline{d})$ (and more precisely $\overline{d}\:\overline{d}\:\overline{\delta}x =0$ as stated above). Therefore 
$$\overline{d}\: \overline{\Delta} =  \overline{d}\:\overline{\delta}\: \overline{d} = \overline{\Delta}\: \overline{d}$$
and the same result is valid  with $\delta$ in place of $d$.
Evidently, in both cases $\overline{\Delta}$ can be replaced by  the selfadjoint operator $\overline{\Delta}+cI$ = $\overline{\Delta + cI}$ for every $c\in \RR$:
\begin{equation}
\overline{d}\: \overline{\Delta +cI} = \overline{\Delta + cI}\: \overline{d}\:, \quad  \overline{\delta}\: \overline{\Delta +cI} = \overline{\Delta+cI}\: \overline{\delta} \:. \label{commut}
\end{equation} 
{\em We henceforth assume  $c>0$}. In that case,  as $\Delta$ is already positive on its domain,   the spectrum of  the selfadjoint operator $\overline{\Delta+cI}$  is strictly positive and thus $ \overline{\Delta +cI}^{-1}: \mathcal{H}_h \to D(\overline{\Delta +cI})$  is well defined, selfadjoint  and bounded. The former identity in (\ref{commut}) also implies that $D(\overline{d}\: \overline{\Delta +cI})=D(\overline{\Delta +cI} \: \overline{d})$, so that
$$ \overline{\Delta +cI}^{-1} \overline{d}\:  \overline{\Delta +cI}|_{D(\overline{d}\: \overline{\Delta +cI})}  x =  \overline{d}|_{D(\overline{d}\: \overline{\Delta +cI})}x \:.$$
By construction, we can choose $x=  \overline{\Delta +cI}^{-1} y$ with $y\in D(  \overline{d})$  in view of the definition of the natural domain  of the composition 
$\overline{d}\: \overline{\Delta +cI})$. In summary
$$  \overline{\Delta +cI}^{-1} \overline{d} y =  \overline{d}\:   \overline{\Delta +cI}^{-1} y\:,  \quad \forall y   \in D(  \overline{d})\:.$$
Since the argument is also valid for $\delta$, we have established that
\begin{equation*}
 \overline{\Delta +cI}^{-1} \overline{d} \subset   \overline{d}\:   \overline{\Delta +cI}^{-1}\:,  \quad 
 \overline{\Delta +cI}^{-1} \overline{\delta} \subset   \overline{\delta}\:   \overline{\Delta +cI}^{-1}
\end{equation*}
Iterating the argument, for every $n=0,1, \ldots$,
$$
(\overline{\Delta +cI}^{-1})^n \overline{d} \subset   \overline{d}\:   (\overline{\Delta +cI}^{-1})^n\:,  \quad 
 (\overline{\Delta +cI}^{-1})^n \overline{\delta} \subset   \overline{\delta}\:   (\overline{\Delta +cI}^{-1})^n\:.
$$ This result extends to complex polynomials of $\overline{\Delta +cI}^{-1}$ in place of powers by linearity. Using the spectral calculus  (see e.g. \cite{FFM}) where $\mu_{xy}(E) = ( x| P_Ey)$  and $P$ is the projector-valued spectral measure of $\overline{\Delta +cI}^{-1}$, the found result for $\overline{d}$ can be written
\begin{equation}\int_{[0,b]} p(\lambda) d\mu_{x, \overline{d} y}(\lambda) = \int_{[0,b]} p(\lambda) d\mu_{\overline{\delta}x, y}(\lambda)  \label{estimate}\end{equation}
for every complex polynomial $p$, where $[0,b]$ is a sufficiently large interval to include the (bounded positive) spectrum of $\overline{\Delta +cI}^{-1}$, $x\in D(\overline{\delta})$,
$y \in D(\overline{d})$, and where we have used $\overline{\delta}= \overline{d}^*$. Since the considered regular Borel complex measures are finite and supported on the compact $[0,b]$, we can pass in (\ref{estimate}) from polynomials $p$ to generic continuous functions $f$ in view of the Stone-Weierstrass theorem.  At this juncture, $P_E^*=P_E$ and  the uniqueness part of Riesz' representation theorem for regular complex Borel measures, implies that
$$( P_E \overline{\delta}  y |  x) =  (P_E y| \overline{d}x) \quad \mbox{for all $x\in D(\overline{\delta})$, 
$y \in D(\overline{d})$, and every Borel set $E\subset \RR$.}$$
which means $P_E \overline{\delta} \subset \overline{d}^* P_E$, namely $P_E \overline{\delta} \subset \overline{\delta} P_E$. Analogously, we also have
 $P_E \overline{d} \subset \overline{d} P_E$. 

If $f: \RR \to \CC$ is measurable and {\em bounded}, the standard spectral calculus and 
(\ref{agg}), with a procedure similar to the one used to prove $P_E \overline{\delta} \subset \overline{\delta} P_E$ and taking into account the fact that $D(f(\overline{\Delta +cI}^{-1}))=\mathcal{H}_h$, yields
\begin{equation} f(\overline{\Delta +cI}^{-1}) \overline{\delta} \subset  \overline{\delta} f(\overline{\Delta +cI}^{-1})\:, \quad
f(\overline{\Delta +cI}^{-1}) \overline{d} \subset  \overline{d} f(\overline{\Delta +cI}^{-1}) \label{twoint} \end{equation}
If $f$ is unbounded, we can choose a sequence of bounded measurable functions $f_n$ such that $f_n \to f$ pointwise. It is easy to prove that (see, e.g. \cite{FFM}) 
$x \in D(\int_\RR f dP)$ entails  $\int_\RR f_n dP x \to  \int_\RR f dP x $. 
This is the case for instance for $f(\lambda) = \lambda^{\beta}$ with $\beta <0$ restricted to $[0,b]$.
Referring to this function and the pointed out result for some sequence of bounded functions with $f_n \to f$ pointwise,  the latter of (\ref{twoint}) implies that\footnote{Below, $\alpha > 0$ otherwise $(\overline{\Delta+ cI})^\alpha$ is bounded in view of its spectral properties and (\ref{twoint})  is enough to conclude the proof.} ,  
$$(\overline{\Delta+ cI})^\alpha \overline{d} x = \overline{d} (\overline{\Delta+ cI})^\alpha  x\quad \mbox{if $x\in D((\overline{\Delta+ cI})^\alpha) \cap D(\overline{d})$ and  $\overline{d}x \in D((\overline{\Delta+ cI})^\alpha) $, }$$
where we used also the fact that $\overline{d}$ is closed. The case of $\delta$ can be worked out similarly.
Summing up, we have proved that, if  $\alpha \in \RR$,
\begin{align*}
(\overline{\Delta +cI})^{\alpha}  \overline{d} x &= \overline{d} (\overline{\Delta +cI})^{\alpha}x\:, \quad 
\forall x \in D((\overline{\Delta+ cI})^\alpha) \cap D((\overline{\Delta+ cI})^\alpha  \overline{d}) \\
(\overline{\Delta +cI})^{\alpha}  \overline{\delta} y &= \overline{\delta} (\overline{\Delta +cI})^{\alpha}y\:, \quad 
\forall y \in   D((\overline{\Delta+ cI})^\alpha) \cap D((\overline{\Delta+ cI})^\alpha  \overline{\delta}) \:.
\end{align*}
Let us remark that for $\alpha\leq 0$ it is sufficient to choose $x\in D(\overline{d})$  and $y\in D(\overline{\delta})$.
 For every given $k=0,1,\ldots, n$, $c>0$, and $\alpha \in \RR$, taking the decomposition of $\mathcal{H}_h$ into account  the above formulae imply
\begin{align*}
(\overline{\Delta^{(k+1)} +cI})^{\alpha}  \overline{d^{(k)}} x &= \overline{d^{(k)}} (\overline{\Delta^{(k)}+cI})^{\alpha}x\:, \quad 
\forall x \in D((\overline{\Delta^{(k)}+ cI})^\alpha) \cap D((\overline{{\Delta}^{(k+1)}+ cI})^\alpha  \overline{d^{(k)}})\\
 (\overline{\Delta^{(k-1)} +cI})^{\alpha}  \overline{\delta^{(k)}} y &= \overline{\delta^{(k-1)}} (\overline{\Delta^{(k)}+cI})^{\alpha}y\:, \quad 
\forall y \in D((\overline{\Delta^{(k)}+ cI})^\alpha) \cap D((\overline{{\Delta}^{(k-1)}+ cI})^\alpha  \overline{\delta^{(k)}}) \:.
\end{align*}
That is the thesis. 
\end{proof}

We are now prompted to prove that the bilinear map defined by Equation~\eqref{DEFMU} defines a quasifree state defined by the two-point function given by~\eqref{omega2R} establishing the thesis of 
 Proposition~\ref{PROPHAD1}.

\begin{proof}[Proof of Proposition~\ref{PROPHAD1}]
To continue with the proof of the proposition, we now demonstrate that $\mu$ is well-defined and positive. 
That bilinear form is well-defined because  $\Omega_c^{(j)}(\Sigma) \subset D( \overline{\Delta^{(j)} + m^2I}^{\alpha})$ for $\alpha \leq 1$ as one immediately proves from spectral calculus. Furthermore, the integrand in the right-hand side of Equation~(\ref{DEFMU}) is the linear combination of  products of $L^2$ functions (of which one of the two has also compact support). Let us pass to the positivity issue.  
Our strategy is to re-write $\mu(A,A)$, where $A=  (a^{(0)},\pi^{(0)}, a^{(1)},\pi^{(1)})\in C_\Sigma $,  as the quadratic form of the energy 
$\mu(A,A) =E^{(P)}(A_o)$, where the right-hand side is defined in Equation~(\ref{EP}), for a new set of initial data $A_o$ which are not necessarily smooth and compactly supported but such that $E^{(P)}(A_o)$ is well defined.
If $A \in C_\Sigma$, define  for $j=0,1$
\begin{equation} 
\begin{aligned}
& A_o=  (a^{(0)}_o,\pi^{(0)}_o, a^{(1)}_o,\pi^{(1)}_o) \\   & a^{(j)}_o := (\overline{\Delta^{(j)} + m^2I})^{-1/4} a^{(j)}\\
&  \pi^{(j)}_o := (\overline{\Delta^{(j)} + m^2I})^{-1/4} 
\pi^{(j)}\label{newA} 
\end{aligned}
\end{equation}
Notice that the definition is well posed
and the forms $a^{(j)}_o$ and $\pi^{(j)}_o$ belong to the respective Hilbert spaces of $j$-forms, 
 because $\Omega_c^{(j)}(\Sigma) \subset D( \overline{\Delta^{(j)} + m^2I}^{\alpha})$ for $\alpha \leq 1$ as said above. Furthermore the new forms are real since the initial ones are real and  $ \overline{\Delta^{(j)} + m^2I}^{\alpha}$ commutes with the complex conjugation\footnote{It easily arises from spectral calculus using the fact that the complex conjugation is bijective from $\mathcal{H}_h$ to $\mathcal{H}_h$, continuous, and commutes with $\overline{\Delta^{(j)} + m^2I}$.}.
At this juncture, we have from (\ref{DEFMU})
\begin{equation}\mu(A,A) =  \sum_{j=0}^1 \eta_j \int_\Sigma h^\sharp_{(j)}(\pi^{(j)}_o, \pi^{(j)}_o)  + h^\sharp_{(j)}(a^{(j)}_o, (\overline{\Delta^{(j)}+ m^2 I}) a^{(j)}_o) \vol_h\label{TRASF1}\end{equation}
Furthermore, the new Cauchy data, though they stay outside $C_\Sigma$  in general, they however satisfy the natural generalization of the constraints defining $C_\Sigma$ in view of Lemma~\ref{LEMMAFORME}: 
\begin{equation}
\pi^{(0)}_o  =- \overline{\delta^{(1)}_h} a^{(1)}_o \:, \quad   \overline{(\Delta_h^{(0)} + m^2)} a^{(0)}_o = \overline{\delta^{(1)}_h} \pi^{(1)}_o\:. \label{CCdataGEN}
\end{equation}
These identities arise immediately from Definitions~(\ref{newA}),  the constraints (\ref{CCdata}), and by applying Lemma~\ref{LEMMAFORME} and paying attention to the fact that $\Omega_c^{(j)}(\Sigma)\subset  D((\overline{{\Delta}^{(j-1)}+ cI})^\alpha  \overline{\delta^{(j)}})$ for every  $\alpha \leq 1$ and also using 
$(\overline{\Delta^{(j)} + m^2I}) (\overline{\Delta^{(j)} + m^2I})^{-1/4}  =  (\overline{\Delta^{(j)} + m^2I})^{-1/4}   \overline{\Delta^{(j)} + m^2I}$ (for, e.g., \cite[(f) in Proposition 3.60 ]{FFM}).  Using (\ref{agg}) and (\ref{CCdataGEN}) in the right-hand side of (\ref{TRASF1}), we can proceed backwardly as in the proof that (\ref{EP}) is equivalent to (\ref{2T00}). Indeed, the only ingredients we used in that proof were the constraint equations which are valid also for $A_o$ and the duality of $\delta$ and $d$ with respect to the Hodge inner product,  which extends to $\overline{\delta}$ and
$\overline{d}$. In summary, 
\begin{align*}
\mu(A,A) = \frac{1}{2}\int_\Sigma & \Big( h_{(1)}^\sharp( \pi^{(1)}_o-\overline{d^{(0)}}a_o^{(0)},  \pi^{(1)}_o-\overline{d^{(0)}}a^{(0)}_o) + h_{(2)}^\sharp (\overline{d^{(1)}}a^{(1)}_o, \overline{d^{(1)}}a^{(1)}_o) \\
& + 
m^2 \big( h_{(1)}^\sharp(a^{(1)}_o, a^{(1)}_o)
+ a^{(0)}_oa^{(0)}_o\big) \Big) \vol_h\:.
\end{align*}
From that identity, it is clear that $\mu(A,A) \geq 0$ and $\mu(A,A)=0$ implies $A_o=0$, which in turn yields $A=0$ because the operators $\overline{\Delta^{(j)} + m^2I}^{1/4}$ are injective. We have established that $\mu : C_\Sigma \times C_\Sigma \to \RR$ is a positive real symmetric inner product.

Let us pass to prove (\ref{diseq}). First of all, we change the notation concerning the scalar product $\mu$ making explicit the decomposition of $A$, and we work with {\em complex} valued forms. We use
$$ A = (a,\pi) = (a^{(0)}, \pi^{(0)}, a^{(1)}, \pi^{(1)})\:,\quad  a:= (a^{(0)}, a^{(1)})\:, \quad  \pi:= (\pi^{(0)}, \pi^{(1)})$$
so that, if  $(a,\pi), (a',\pi') \in (L_0^2(\Sigma, \vol_h) \oplus L_1^2(\Sigma, \vol_h)) \times  (L_0^2(\Sigma, \vol_h) \oplus L_1^2(\Sigma, \vol_h)) $ are such that the right-hand side below is defined, we can write
$$\mu( (\overline{a},\overline{\pi}), (a',\pi')) := \sum_{j=0}^1 \frac{\eta_j}{2} \int_\Sigma   h_{(j)}^\sharp(\overline{\pi^{(j)}}, H^{-1}_{(j)} \pi^{(j)'})    +  h^\sharp_{(j)}  (\overline{a^{(j)}}, H_{(j)} a^{(j)'}  )\vol_h  $$
where $H_{(j)} := \overline{\Delta^{(j)} + m^2I}^{1/2}$, 
and the bar on forms denotes the complex conjugation.
Finally, for $\alpha= \pm 1$, we defined
$$H^{\alpha} a := ( H_{(0)}^{\alpha} a^{(0)},  H_{(1)}^{\alpha} a^{(1)})\:, \quad H^{\alpha} \pi := ( H_{(0)}^{\alpha} \pi^{(0)},  H_{(1)}^{\alpha}\pi^{(1)})\:. $$
By direct inspection one sees that, if  $(a,\pi), (a',\pi') \in C_\Sigma + iC_\Sigma$, then the right-hand side of the first identity below  is well-defined and
\begin{align*}
\Lambda((a,\pi), (a',\pi')):= & \frac{1}{2}\mu\left((\overline{\pi}+iH^{-1} \overline{a}, \overline{a}- iH\overline{\pi}),(\pi'-iH^{-1} a', a'+i H\pi')\right) \\
=& \mu( (\overline{a},\overline{\pi}), (a',\pi'))+\frac{i}{2} \sigma^{(\P)} ((\overline{a},\overline{\pi}), (a',\pi'))
\end{align*}
where $\sigma^{(P)}$ is the right-hand side of (\ref{NEWSP}), which however coincides with the original symplectic form 
(\ref{SYMPr})  evaluated on complex Cauchy data because $(a,\pi), (a',\pi')\in C_\Sigma+iC_\Sigma$ and Remark~\ref{REMSYMP} holds.  Finally notice that if $(a,\pi) \in C_\Sigma + iC_\Sigma$ then $a_o := \pi -i H a$ and $\pi_o:= a+ iH^{-1}\pi$  satisfy the constraints (though they do not belong to $ C_\Sigma + iC_\Sigma$ in general)
\begin{equation*}
\pi^{(0)}_o  =- \overline{\delta^{(1)}_h} a^{(1)}_o \:, \quad   H_{(0)} a^{(0)}_o = \overline{\delta^{(1)}_h} \pi^{(1)}_o\:. 
\end{equation*}
The proof is direct, using   Lemma~\ref{LEMMAFORME} once more. As a consequence, exploiting the same argument to prove (\ref{POSCOMP}) and observing that $H^\alpha$ commutes with the complex conjugation --  so that it holds $\overline{\pi}-iH^{-1} \overline{a}= \overline{\pi +i H^{-1}a}$ for instance --  we have that  $$2\Lambda((a,\pi), (a'\pi'))=
\mu\left((\overline{\pi}+iH^{-1} \overline{a}, \overline{a}- iH\overline{\pi}),(\pi-iH^{-1} a, a +i H\pi)\right) $$ $$ =
\mu\left((\overline{\pi- iH^{-1} a, a+ iH\pi}),(\pi-iH^{-1} a, a +i H\pi)\right)
 \geq 0\:.$$
The final inequality is due to the fact that  $\mu$ is (the complexification of)  a {\em real}  positive bilinear  symmetric form.
All that means in particular that the {\em  Hermitian} form $\Lambda$ on $(C_\Sigma + iC_\Sigma)\times  (C_\Sigma + iC_\Sigma)$ is (semi)positively defined and thus it satisfies the 
Cauchy-Schwartz inequality. In particular,
$$(Im \Lambda( (a,\pi),(a',\pi')))^2 \leq | \Lambda( (a,\pi),(a',\pi'))|^2 \leq  \Lambda((a,\pi),(a,\pi))  \:\Lambda((a',\pi'),(a',\pi'))\:. $$
If choosing $ (a,\pi),(a',\pi') \in C_\Sigma$ (thus {\em real} forms), the above inequality specialises to
$$\sigma^{(\P)} ((a,\pi),(a',\pi'))^2 \leq 4\mu((a,\pi),(a,\pi))\: \mu((a',\pi'),(a',\pi'))$$
which is the inequality (\ref{diseq})  we wanted to prove.
\end{proof}

\subsection{Hadamard states in ultrastatic and generic globally hyperbolic spacetimes}
With the next proposition, we show that the quasifree states defined in Proposition~\ref{PROPHAD1} is a Hadamard state when $(\Sigma, h)$ is of bounded geometry. To prove the assertion we will take advantage of the general formalism developed in \cite{gerardBook} and \cite{gravity}. An alternative proof,  which does not assume that the manifold is of bounded geometry (however  we here take advantage of~\cite{greene}),  could be constructed along  the procedure developed in~\cite{FNW} and extending it to the vectorial Klein-Gordon field. 

\begin{proposition}\label{prop:ultraHada}

 If the metric $h$ on the time slice $\Sigma$ is of bounded geometry, then  the quasifree state $\omega_\mu: \mathcal{A}_g \to \CC$ defined in Proposition~\ref{PROPHAD1}  is Hadamard according to Definition~\ref{HadamardMMV}.
\end{proposition}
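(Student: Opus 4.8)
The plan is to reduce the Hadamard property of $\omega_\mu$ to the already-known microlocal behaviour of the scalar and vectorial Klein--Gordon fields into which the Proca field splits on an ultrastatic background, and then to upgrade the resulting wavefront inclusion to an equality by the commutator argument already used in Proposition~\ref{propagationMMV}. First I would record that, since the arguments $\G_\P\ff$, $\G_\P\ff'$ of $\omega_{\mu2}$ are genuine Proca solutions, their Cauchy data lie in $C_\Sigma$, so the decompositions of $\mu$ and of $\sigma^{(P)}$ established in Section~\ref{SECSYMPL} and in~(\ref{DEFMU}) apply verbatim. Each summand of~(\ref{DEFMU}) involves only data of one fixed form-degree $j$, and the symplectic form obeys $\sigma^{(P)}=\sigma^{(1)}-\sigma^{(0)}$; hence, block by block in $\V_g=\V_g^{(0)}\oplus\V_g^{(1)}$, the bidistribution $\omega_{\mu2}$ is built from the test-section two-point functions $\tilde\Lambda^{(j)}$ of the ultrastatic ground states $\Lambda^{(j)}:=\mu^{(j)}+\tfrac{i}{2}\sigma^{(j)}$ of the normally hyperbolic operators $\N^{(j)}=\partial_t^2+\Delta_h^{(j)}+m^2$ on the bundles $\V_g^{(j)}$, which carry \emph{positive} fibre metrics.

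The upper bound $WF(\omega_{\mu2})\subset\mathcal{H}$ is the core step. Here I would invoke the bounded-geometry machinery of \cite{gerardBook,gravity}: when $(\Sigma,h)$ is complete and of bounded geometry, the operators $(\overline{\Delta_h^{(j)}+m^2})^{\pm1/2}$ are, through the functional calculus, classical first-order pseudodifferential operators with the correct principal symbols, so the ground state they define via~(\ref{DEFMU}) satisfies the microlocal spectrum condition. Thus each $\tilde\Lambda^{(j)}$ is Hadamard, with base wavefront set exactly $\mathcal{H}$; note that the relation $\sim_{\parallel}$ and the positive-frequency condition $k_x\triangleright0$ are geometric data of $(\M,g)$ entering the Hadamard set of every normally hyperbolic field identically, so both form-degrees produce the same set $\mathcal{H}$. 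Since the wavefront set of a finite combination is contained in the union of the constituents' wavefront sets, and since any off-diagonal remainder is obtained by acting with the smooth differential operator $d\delta/m^2$ occurring in $\G_\P$ — which cannot enlarge a wavefront set — we conclude $WF(\omega_{\mu2})\subset\mathcal{H}$.

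For the reverse inclusion I would repeat the lower-bound argument of Proposition~\ref{propagationMMV}(b). The antisymmetric part of $\omega_{\mu2}$ is $i\G_\P$, and $WF(\G_\P)'=\{(x,k_x;y,k_y):(x,k_x)\sim_{\parallel}(y,k_y)\}$ contains both frequency signs (recall $WF(\G_\P)=WF(\G_\N)$ as in that proof). Writing $\omega_2^+:=\omega_{\mu2}$ and $\omega_2^-(\ff,\fg):=\omega_{\mu2}(\fg,\ff)$, one has $WF(\G_\P)'\subset WF(\omega_2^+)'\cup WF(\omega_2^-)'$, while $WF(\omega_2^+)'\subset\mathcal{H}'$ lies in the future cone and $WF(\omega_2^-)'$ in the past cone; disjointness of these cones forces $\mathcal{H}'\subset WF(\omega_2^+)'$, i.e.\ $\mathcal{H}\subset WF(\omega_{\mu2})$. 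Combined with the upper bound this gives $WF(\omega_{\mu2})=\mathcal{H}$, which is precisely Definition~\ref{HadamardMMV}.

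The main obstacle, in my view, is not the microlocal bookkeeping but justifying the block reduction of the first paragraph at the level of \emph{distributions on test sections}, rather than merely on the constrained solution space $C_\Sigma$: one must check that pulling $\mu$ and $\sigma^{(P)}$ back through $\G_\P$ expresses $\omega_{\mu2}$ in terms of $\G^{(0)},\G^{(1)}$ and $(\overline{\Delta_h^{(j)}+m^2})^{\pm1/2}$ with only a controlled (smooth-differential) remainder, and that the vectorial square root $(\overline{\Delta_h^{(1)}+m^2})^{1/2}$ genuinely fits the hypotheses of the bounded-geometry Hadamard construction of \cite{gerardBook,gravity} for a nontrivial bundle. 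Once the identities of Section~\ref{SECSYMPL} (together with Remark~\ref{REMSYMP}) are read as identities of bidistributions, the remaining steps are exactly those already carried out in the Klein--Gordon setting.
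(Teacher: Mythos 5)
Your proposal is correct and follows the same overall strategy as the paper: decompose $\omega_{\mu2}$ along $\V_g=\V_g^{(0)}\oplus\V_g^{(1)}$ using the identities of Section~\ref{SECSYMPL}, recognize each block as the two-point function of a ground state of the normally hyperbolic operator $\N^{(j)}$ built from $(\overline{\Delta^{(j)}_h+m^2})^{\pm1/2}$, and use the bounded-geometry pseudodifferential calculus of \cite{gerardBook,gravity} to conclude that each block is Hadamard (the paper makes this precise by exhibiting the covariances $c^\pm_{(j)}$ as exactly the Hadamard projectors of \cite[Section 5.2]{gravity}, which settles your worry about the vector-valued square root on a nontrivial bundle, and by passing through complexified fields and \cite[Proposition 4.14]{gerardBook} to get bona fide gauge-invariant quasifree states). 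Where you diverge is the final step: you obtain only the inclusion $WF(\omega_{\mu2})\subset\mathcal{H}$ from the union bound and then supply the reverse inclusion by the commutator argument of Proposition~\ref{propagationMMV}(b). The paper instead observes that $\omega_{\mu2}$ is \emph{block-diagonal} as a distribution on sections of $(\V_g^{(0)}\oplus\V_g^{(1)})\otimes(\V_g^{(0)}\oplus\V_g^{(1)})$ — the mixed components vanish identically — and that the wavefront set of a vector-valued distribution in the sense of \cite{SV} is by definition the union of the wavefront sets of its components, so no cancellation can occur and $WF(\omega_{\mu2})=\mathcal{H}\cup\mathcal{H}\cup\emptyset\cup\emptyset=\mathcal{H}$ follows at once from the exact equality $WF(\lambda^+_{(j)}(\G^{(j)}\cdot,\G^{(j)}\cdot))=\mathcal{H}$. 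Your extra lower-bound step is therefore redundant but harmless, and arguably more robust since it does not lean on the block-diagonality; both routes are valid.
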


\begin{proof} 
Consider a pair of {\em complex}  Klein-Gordon fields $A^{(0)}$ and $A^{(1)}$ in the ultrastatic spacetime
$(\M,g)= (\RR\times \Sigma, -dt\otimes dt+h)$, with $(\Sigma, h)$ a smooth complete Riemannian manifold of bounded geometry  obeying the normally hyperbolic equations  (\ref{EQ1}) and (\ref{EQ2}) in the respective vector bundles on $\M$, according to Section \ref{SECCP}. We stress that we now assume that the two fields are complex. Referring to  \cite[Chapter 4]{gerardBook}, we define the {\em covariances}, for $j=0,1$
\begin{align}\label{lambdaplus}
 \lambda_{(j)}^+(A^{(j)}, A^{(j)'}) :=& \frac{1}{2} \int_\Sigma   h_{(j)}^\sharp(\overline{\pi^{(j)}}, H^{-1}_{(j)} \pi^{(j)'})    +  h^\sharp_{(j)}  (\overline{a^{(j)}},  H_{(j)} a^{(j)'}  )\: \vol_h 
 + \frac{i}{2}\sigma^{(j)}(\overline{A^{(j)}},A^{(j)'})\\
 \lambda_{(j)}^-(A^{(j)}, A^{(j)'}) :=& \frac{1}{2} \int_\Sigma   h_{(j)}^\sharp(\pi^{(j)'}, H^{-1}_{(j)}\overline{\pi^{(j)}})    +  h^\sharp_{(j)}  (a^{(j)'},  H_{(j)}\overline{a^{(j)}}  )\: \vol_h
 + \frac{i}{2}\sigma^{(j)}(A^{(j)'},\overline{A^{(j)}})
\end{align}
where $H_{(j)} :=  \overline{\Delta^{(j)} + m^2}^{1/2}$,
$\sigma^{(j)}$ are the symplectic forms of the corresponding Klein-Gordon fields taking place in the right-hand side of  (\ref{NEWSP}), now evaluated on complex fields.
Above, $a^{(j)}, \pi^{(j)} \in \Omega_c^j(\Sigma)_\CC$ are the Cauchy data on $\Sigma$ of $A^{(j)}$ respectively
and  $a^{(j)'}, \pi^{(j)'} \in \Omega_c^j(\Sigma)_\CC$ are the Cauchy data on $\Sigma$ of $A^{(j)'}$ respectively. Notice that we are not imposing constraints on these initial data since we are dealing with independent Klein-Gordon fields.  
$\lambda^\pm_{(j)}$ are evidently positive because, if all involved forms in the right-hand side are smooth and compactly supported, then the right-hand side of the identity above is well-defined and
$$\lambda_{(j)}^+(A^{(j)}, A^{(j)'}) := \frac{1}{2} \int_\Sigma   h^\sharp_{(j)}  ( \overline{H^{1/2}a^{(j)} +i H^{-1/2} \pi^{(j)}},  H_{(j)}^{1/2} a^{(j)'}
+i H^{-1/2} \pi^{(j)'})\: \vol_h \:.$$
The case of $\lambda_{(j)}^-$ is strictly analogous.
Furthermore
$$ \lambda_{(j)}^+(A^{(j)}, A^{(j)'}) - \lambda_{(j)}^-(A^{(j)}, A^{(j)'})  =   i\sigma^{(j)}(\overline{A^{(j)}}, A^{(j)'})\:.$$
Therefore $\lambda^\pm_{(j)}$ satisfy the hypotheses of \cite[Proposition 4.14]{gerardBook}\footnote{The reader should pay attention to the fact that 
the Cauchy data used in  \cite{gerardBook}, {\em in the complex case},  are defined as $(f_0,f_1):= (a, -i\pi)$ {\em instead of our} $(a, \pi)$!
This is evident by comparing (2.4) and (2.20) in \cite{gerardBook}.
 With the choice of \cite{gerardBook},
 $i\overline{(f_0,f_1)}^t\cdot q (f'_0,f'_1) = \int \overline{f_0}f_1' +\overline{ f_1}f_0' \vol_h = i \sigma(\overline{(a, \pi)}, (a', \pi'))$, where $\cdot q \equiv  \sigma_1$  (the Pauli matrix) according to  \cite{gerardBook}.}
 so that  they define a pair, for $j=0,1$, of gauge-invariant quasifree states for the complex Klein-Gordon fields respectively associated to Equations~(\ref{EQ1}) and~(\ref{EQ2}). We pass to prove that both states are Hadamard exploiting the fact that $(\Sigma,h)$ is of bounded geometry.
By rewriting the covariances $\lambda_{(j)}^\pm$ as $\lambda_{(j)}^\pm = \pm qc_{(j)}^\pm$ ($q= i\sigma^{(j)}$) a quick computation shows that 
\begin{equation*}
c_{(j)}^\pm = \frac{1}{2}
\begin{bmatrix}
I & \pm H_{(j)}^{-1}  \\
\pm H_{(j)} & I  
\end{bmatrix}\,.
\end{equation*}
We can immediately realize that the operator $c_{(j)}^\pm $ is the same Hadamard projector obtained in~\cite[Section 5.2]{gravity}\footnote{ It follows immediately since $b^+(t)= -b^-(t)=H:= \overline{\Delta^{(j)}+ m^2I}^{1/2}$.} -- see also~\cite[Section 11]{gerardBook} for a more introductory  explanation for the scalar case. This operator belongs to the necessary class of pseudodifferential operators $C^\infty_b(\RR;\Psi^1_b(\Sigma))$ because $(\Sigma, h)$ is of bounded geometry. Therefore, on account of~\cite[Proposition 5.4]{gravity}, the two quasifree states associated to $\lambda_{(j)}^+$, for $A^{(j)}$ and $j=0,1$, are Hadamard.
In other words,  the Schwartz kernels provided by the  two-point functions  $\lambda^+_{(j)}(\G^{(j)}\cdot, \G^{(j)}\cdot)$, viewed as distributions of $\Gamma(\V_g^{(j)}\boxtimes \V_g^{(j)})'$,  satisfy
$$WF(\lambda^+_{(j)}(\G^{(j)}\cdot, \G^{(j)}\cdot)) = \mathcal{H} \:,$$
where $\mathcal{H}$ is defined in (\ref{WHadMMV}) and $\G^{(i)}$, $i=0,1$ are the causal propagators  for the normally hyperbolic operators 
$$\N^{(i)}:=\partial_t^2  + \Delta^{(i)}_h + m^2I  : \Gamma_{sc}(\V_g^{(i)}) \to  \Gamma_{sc}(\V_g^{(i)})\quad i=0,1\,.$$
 Above and from now on we use the same notation to indicate a bidistribution and the associated Schwartz kernel.
Notice that we have used the same symbol $\G^{(j)}$  of the  causal propagator we used for  the real vector field case. This is  because the causal propagators  for the complex fields are the direct complexification of the scalar case (see Remark \ref{REMSYMP}). We  pass now to focus on the expression of $\omega_{\mu 2}$ provided  in (\ref{omega2R}) taking the usual decomposition $\Omega_c^{1}(\M)_\CC \ni \ff= \ff^{(0)}dt + \ff^{(1)}$ into account.  It can be written
\begin{equation*} \omega_{\mu2}(\ff, \ff') = \omega^{(1)}_{\mu2}(\ff^{(1)}, \ff^{(1)'})-  \omega^{(0)}_{\mu2}(\ff^{(0)}, \ff^{(0)'})
\end{equation*}
where, comparing (\ref{DEFMU}) and (\ref{omega2R}) with (\ref{lambdaplus}) for {\em real} arguments $\ff, \ff' \in \Gamma(\V_g)$, we find
$$ \omega^{(j)}_{\mu2}(\ff^{(j)}, \ff^{(j)'})= \lambda_{(j)}^+(\G^{(j)}\ff^{(j)}, \G^{(j)}\ff^{(j)'})\:.$$
We have   $$WF(\pm \omega^{(j)}_{\mu 2}) = WF(\pm \lambda^+_{(j)}(\G^{(j)}\cdot, \G^{(j)}\cdot)) =
 WF(\lambda^+_{(j)}(\G^{(j)}\cdot, \G^{(j)}\cdot)) = \mathcal{H}\quad \mbox{for $j=0,1$.}$$
Taking (\ref{decVV}) into account, we now  observe that 
$\omega_{\mu2} \in \Gamma(\V_g \boxtimes \V_g)' =  \Gamma((\V^{(0)}_g\oplus V^{(1)}_g) \boxtimes (\V^{(0)}_g\oplus V^{(1)}_g))' $. As a matter of fact,  however, 
$\omega_{\mu2}$ does not have mixed components acting on sections of   $ V^{(1)}_g \boxtimes \V^{(0)}_g$ and  $V^{(0)}_g \boxtimes \V^{(1)}_g$ and the only components of that distribution are those which  work on sections of  $V^{(0)}_g \boxtimes \V^{(0)}_g$ and  $V^{(1)}_g \boxtimes \V^{(1)}_g$. These are respectively represented by $-\omega^{(0)}_{\mu2}$
and $\omega^{(1)}_{\mu2}$ whose wavefront set is $\mathcal{H}$ in both cases. The remaining two components have empty wavefront set since they are the zero distributions.
Applying the definition of wavefront set of a  vector-valued distribution \cite{SV},  we conclude that 
$$WF(\omega_{\mu2})= WF(-\omega^{(0)}_{\mu2}) \cup  WF(\omega^{(1)}_{\mu2})  \cup \emptyset \cup \emptyset = \mathcal{H} \cup  \mathcal{H} \cup \emptyset \cup \emptyset
=  \mathcal{H}\:,$$
concluding the proof.
\end{proof}

Combining the results obtained so far, we get the main result of this paper.
\begin{theorem}
Let $(\M, g)$ be a globally hyperbolic spacetime and refer to the
 $CCR$-algebra $\mathcal A_g$ of the real Proca field. Then there exists a quasifree Hadamard state on $\mathcal A_g$.
\end{theorem}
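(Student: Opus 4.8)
The plan is to assemble the three pillars already established in the paper: the paracausal reduction of an arbitrary globally hyperbolic spacetime to an ultrastatic model of bounded geometry, the explicit construction of a distinguished Hadamard state on such a model, and the invariance of the Hadamard property under the M\o ller pullback. Concretely, I would transport the ultrastatic Hadamard state of Proposition~\ref{PROPHAD1}--Proposition~\ref{prop:ultraHada} to the target algebra $\mathcal{A}_g$ along the M\o ller $*$-isomorphism and invoke Theorem~\ref{thm:main intro Had} to conclude.

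First I would invoke the reduction recalled at the opening of this section: by \cite[Proposition 2.23]{Norm}, $(\M,g)$ is paracausally related to an ultrastatic spacetime isometric to $(\RR\times\Sigma,-dt\otimes dt+h_0)$ with $(\Sigma,h_0)$ complete. Modifying the spatial metric by \cite{greene} to a complete metric $h$ of bounded geometry keeps a shared Cauchy temporal function and hence, by \cite[Proposition 2.26]{Norm}, preserves the paracausal relation; transitivity then yields $g\simeq g_u$, where $g_u:=-dt\otimes dt+h$ and $(\Sigma,h)$ is complete and of bounded geometry. On this ultrastatic model $(\M,g_u)$ I would produce the distinguished state: Proposition~\ref{PROPHAD1} shows that the bilinear map $\mu$ of (\ref{DEFMU}) is a well-defined positive inner product on the constrained Cauchy-data space $C_\Sigma$ satisfying the continuity bound (\ref{diseq}), so it determines a quasifree state $\omega_\mu$ on $\mathcal{A}_{g_u}$ with two-point function (\ref{omega2R}); Proposition~\ref{prop:ultraHada} then shows $\omega_\mu$ is Hadamard in the sense of Definition~\ref{HadamardMMV}, precisely because bounded geometry places the relevant covariances $c_{(j)}^\pm$ in the pseudodifferential class used in \cite{gravity}.

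Finally, since $g\simeq g_u$, Theorem~\ref{causalpropR} supplies a M\o ller operator and hence a M\o ller $*$-isomorphism $\mathcal{R}_{g_u g}:\mathcal{A}_g\to\mathcal{A}_{g_u}$. I would define the candidate state on the target algebra as the pullback $\omega:=\omega_\mu\circ\mathcal{R}_{g_u g}:\mathcal{A}_g\to\CC$. By Theorem~\ref{thm:main intro Had}, $\omega$ is again a well-defined quasifree Hadamard state on $\mathcal{A}_g$, which is exactly the assertion.

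There is no genuinely new analytic difficulty left at this stage, since each ingredient has already been proved; the substantive content --- positivity of $\mu$ on the constrained data via the energy reformulation and the spectral calculus of the elliptic Hilbert complex (Proposition~\ref{PROPHAD1} together with Lemma~\ref{LEMMAFORME}), and the microlocal control of the covariances (Proposition~\ref{prop:ultraHada}) --- sits in the earlier results. The only care I expect to need is bookkeeping of the index order in the M\o ller $*$-isomorphism, namely that the \emph{known} state sits on the source algebra $\mathcal{A}_{g_u}$ and is transported to $\mathcal{A}_g$, and verifying that the bounded-geometry ultrastatic representative can indeed be reached while remaining paracausally related to $(\M,g)$.
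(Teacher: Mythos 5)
Your proposal is correct and follows exactly the route of the paper's own proof: reduce to a paracausally related ultrastatic spacetime with Cauchy surface of bounded geometry, take the quasifree Hadamard state of Propositions~\ref{PROPHAD1} and~\ref{prop:ultraHada} there, and pull it back along the M\o ller $*$-isomorphism using Theorem~\ref{thm:main intro Had}. Your bookkeeping of the index order (the known state lives on the source algebra of the pullback) is also consistent with the paper's conventions.
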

\begin{proof}[Proof of Theorem~\ref{thm:main intro 3}]
As already explained in the beginning of Section~\ref{sec:existence}, for any globally hyperbolic spacetime  $(\M,g)$, there exists a paracausally related  globally hyperbolic  spacetime $(\M,g_0)$ which is ultrastatic and whose spatial metric is of bounded geometry.
In particular, in this class of spacetimes, the quasifree states defined in Proposition~\ref{PROPHAD1} satisfy the microlocal spectrum condition, as proved in Proposition~\ref{prop:ultraHada}.
Therefore, since the pull-back along a M\o ller $*$-isomorphism preserves the Hadamard condition on account of Theorem~\ref{thm:main intro Had}, we can conclude.
\end{proof}

\section{Comparison with Fewster-Pfenning's definition of Hadamard states}\label{SECHADFP}
Though the paper \cite{Fewster} by Fewster and Pfenning concerns {\em quantum energy inequalities}, it also offers a general theoretical discussion about the algebraic quantization of the Proca and the Maxwell fields in curved spacetime. In particular, the authors propose a definition of a Hadamard state which appears to be technically different from ours at first glance, even if it shares a number of important features with ours. This section is devoted to a comparison of the two definitions for the Proca field.

\subsection{Proca Hadamard states according to Fewster and Pfenning}
The definition of Hadamard state stated in \cite[Equation~(35)]{Fewster} is formulated in terms of causal normal neighborhoods of smooth spacelike Cauchy surfaces (see also below) and the {\em global Hadamard parametrix} for distributions which are bisolutions of the vectorial Klein-Gordon equation. Our final goal is to prove an equivalence theorem of our definition of Hadamard state Definition~\ref{HadamardMMV} and the one adopted in \cite{Fewster}.

As a first step, we translate the original  Fewster-Pfenning's definition of a Hadamard state  into an equivalent form which will turn out to be more useful for our comparison. 
The equivalence of  the version  stated below of Fewster-Pfenning's definition and  the original one 
was  established  in~\cite[Section III C]{Fewster} (see also the comments under Definition \ref{Hadamard}).

\begin{definition}\label{Hadamard} [{\bf Fewster-Pfenning's definition of Proca Hadamard state}] Consider the globally hyperbolic spacetime $(\M,g)$ and a  state $\omega:\mathcal{A}_g\to\CC$ 
for the Proca algebra of observables on $(\M,g)$. $\omega$ is  called
	  \textbf{Hadamard} if it is quasifree and its two-point function has the form
\begin{equation}  \omega(\hat{\fa}(\ff)\hat{\fa}(\fh))=W_g(\ff,Q\fh) \label{HadF} \end{equation} 
$\forall\ff,\fh\in\Gamma_c(\V_g)$, where $Q:\Gamma(\V_g)\to\Gamma(\V_g)$ in the differential operator $Q=\Id+ m^{-2}(d\delta_g) .$
Above 
	$W_g \in \Gamma_c'(\V_g \boxtimes\V_g)$ is a  Klein-Gordon distributional bisolution 
such that \begin{equation} W_g(\ff,\fg)- W_g(\fg,\ff)= i\G_\N(\ff,\fg) \quad   \mbox{mod $C^\infty$}\:,\label{HadF2}\end{equation} 
$G_\N$ being the causal propagator of the Klein-Gordon operator  (\ref{NtoP}) and 
which satisfies the  microlocal spectrum condition
	\begin{equation} \label{WHad} WF(W_g)=\{(x,k_x;y,-k_y)\in T^*\M^2\backslash\{0\}\:|\:(x,k_x)\sim_{\parallel}(y,k_y), k_x\triangleright0\}\:.\end{equation}
\end{definition}

 \remark The equivalence of Definition~\ref{Hadamard}  and the original one  stated in \cite{Fewster} relies on  Sahmann -Verch's  \cite{SV} generalization  to  vector (and spinor)  fields of some classic Radzikowski results originally formulated for scalar fields. In practice, (a) if a distribution which is a bisolution of the  vectorial Klein-Gordon equation and it is of Hadamard form in a normal causal neighborhoods of a smooth spacelike Cauchy surface, then it necessarily has the wavefront set of the form  (\ref{WHad}) ((a)  \cite[Theorem 5.8]{SV})  and its antisymmetric part satisfies (\ref{HadF2}) directly from the definition of parametrix; (b) if  a distribution which is a bisolution of the  vectorial Klein-Gordon equation satisfies (\ref{WHad}) and (\ref{HadF2}), then it is of Hadamard form in some normal causal neighborhoods of a smooth spacelike Cauchy surface (see \cite[Remark 5.9. (i)]{SV}). \\

For the Proca fields, it has been established  in \cite{Fewster} the property of propagation of the Hadamard condition stated in the next proposition. That result was  already established for the Hadamard states of  scalar and vector (including spinor)  fields in \cite{FSW,KW,SV} (see \cite{IgorValter,Valter} for a general recap for the KG scalar field).  The pivotal tool is the already mentioned notion of {\em causal normal neighborhood}
$\mathcal{N}$ {\em  of a 
smooth spacelike Cauchy surface} $\Sigma$  in a globally hyperbolic spacetime $(\M;g)$.  The notion  introduced in \cite{KW} has been recently improved (closing a gap in the geometric  definition of Hadamard states) in \cite{Valter}\footnote{Where these open sets  are named normal neighborhoods  of smooth spacelike Cauchy surfaces, omitting ``causal''.}. The propagation  results established in \cite{KW,SV} and \cite{Fewster} are  valid with the improved notion of causal normal neighborhoods and Hadamard states of \cite{Valter}.

\begin{proposition}\label{propagation}
Let $\omega:\mathcal{A}_g\to\mathbb{C}$ be a quasifree state for the Proca field 
in the globally hyperbolic spacetime $(\M,g)$. Let  $\mathcal{N}$  be a causal normal  neighborhood of a Cauchy surface $\Sigma$ of  $(\M,g)$. Suppose that the restriction of $\omega$  to $(\mathcal{N}, g|_\mathcal{N})$ is Hadamard according to Definition \ref{Hadamard}. Then $\omega$ is Hadamard in $(\M,g)$ according to the same definition.
\end{proposition}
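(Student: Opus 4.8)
The plan is to reduce the statement to the corresponding propagation result for the \emph{vectorial Klein--Gordon} field, which is classical (Sahlmann--Verch \cite{SV}), and then to transfer the conclusion back to the Proca field through the intertwining operator $Q=\Id+m^{-2}d\delta_g$ appearing in Definition~\ref{Hadamard}. First I would unpack the hypothesis: since $\omega|_{\mathcal{N}}$ is Hadamard according to Definition~\ref{Hadamard}, there is a Klein--Gordon bisolution $W_{\mathcal{N}}$ on $(\mathcal{N},g|_{\mathcal{N}})$ of Hadamard form, with wavefront set equal to the set in~(\ref{WHad}), antisymmetric part fixed by~(\ref{HadF2}), and such that $\omega_2|_{\mathcal{N}}(\ff,\fh)=W_{\mathcal{N}}(\ff,Q\fh)$ for $\ff,\fh$ supported in $\mathcal{N}$.

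The second step is to globalize $W_{\mathcal{N}}$ to a Klein--Gordon bisolution $W_g$ on all of $(\M,g)$. I would mimic the construction in the proof of Proposition~\ref{propagationMMV}: fix spacelike Cauchy surfaces $\Sigma_\pm$ straddling $\Sigma$ inside $\mathcal{N}$ and a cutoff $\chi$ interpolating from $0$ to $1$ across them, put $\T\ff:=\N\chi\G_\N\ff$, and set $W_g:=W_{\mathcal{N}}\circ(\T\otimes\T)$. Because $\G_\N\N\ff=0$ and $\supp(\T\ff)\subset\mathcal{N}$, the distribution $W_g$ is a genuine Klein--Gordon bisolution on $\M$; and since $\T\ff-\ff=\N\fg$ with $\fg$ supported in $\mathcal{N}$ whenever $\supp\ff\subset\mathcal{N}$, the bisolution property of $W_{\mathcal{N}}$ gives $W_g|_{\mathcal{N}}=W_{\mathcal{N}}$.

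The third step is the genuine microlocal input. Being a vectorial Klein--Gordon bisolution that is of Hadamard form in the causal normal neighborhood $\mathcal{N}$ of $\Sigma$, $W_g$ has wavefront set coinciding with the set in~(\ref{WHad}) on the whole of $\M$ by the Sahlmann--Verch propagation theorem \cite[Thm.~5.8]{SV}; the refined notion of causal normal neighborhood of \cite{Valter} leaves that argument untouched. Moreover its antisymmetric part equals $i\G_\N$ modulo $C^\infty$ globally: the difference of the two antisymmetric Klein--Gordon bisolutions is smooth on $\mathcal{N}$, hence has empty wavefront set there, and, being a bisolution, is smooth everywhere by propagation of singularities. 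Thus $W_g$ is a global Hadamard Klein--Gordon bisolution in the sense required by Definition~\ref{Hadamard}.

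It remains to see that $W_g$ still represents $\omega$, i.e. $\omega_2(\ff,\fh)=W_g(\ff,Q\fh)$ on all of $\M$, and this is the main obstacle. Using $Q\P=\N$ one checks immediately that $W_g(\cdot,Q\cdot)$ is annihilated by $\P$ in the second argument; in the first argument, however, a short computation gives $W_g(\P\ff,Q\fh)=-(\delta_x W_g)(\delta\ff,Q\fh)$, so that $W_g(\cdot,Q\cdot)$ is a Proca bisolution precisely when $W_g$ is \emph{transverse}, $\delta_x W_g=\delta_y W_g=0$. This transversality encodes the Proca constraint $\delta A=0$: it holds for the representative $W_{\mathcal{N}}$ on $\mathcal{N}$ (and can always be arranged there), and it propagates to $W_g$ because $\delta$ commutes with $\N$, so co-closedness of a Klein--Gordon bisolution is conserved under the globalization above. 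Granting this, $W_g(\cdot,Q\cdot)$ and $\omega_2$ are two Proca bisolutions agreeing on $\mathcal{N}$, so by the uniqueness part of Proposition~\ref{propagationMMV}(a) they coincide on $\M$; hence $\omega_2=W_g(\cdot,Q\cdot)$ with $W_g$ a global Hadamard Klein--Gordon bisolution, i.e. $\omega$ is Hadamard according to Definition~\ref{Hadamard}. Equivalently, one may simply invoke that this Proca propagation was established in \cite{Fewster} and that, by the remark following Definition~\ref{Hadamard}, the Sahlmann--Verch machinery it rests on remains valid with the improved neighborhoods of \cite{Valter}.
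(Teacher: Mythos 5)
The paper offers no proof of Proposition~\ref{propagation}: it is imported from \cite{Fewster}, resting on the Sahlmann--Verch propagation theorem \cite{SV} and on the observation (see the remark after Definition~\ref{Hadamard} and Remark~\ref{remarkN}) that the refined causal normal neighbourhoods of \cite{Valter} leave that machinery intact. Your closing sentence therefore coincides exactly with what the paper does, and as a citation it is acceptable.

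Your attempted direct proof, however, has a genuine gap precisely at the step you flag as ``the main obstacle''. The globalization $W_g=W_{\mathcal N}\circ(\T\otimes\T)$ with $\T=\N\chi\G_\N$ and the microlocal propagation are fine, but the identity $\omega_2=W_g(\cdot,Q\cdot)$ hinges on the transversality $\delta_xW=0$, and this is neither part of Definition~\ref{Hadamard} nor ``always arrangeable''. What the on-shell property of $\omega$ actually yields on $\mathcal N$ is only $W_{\mathcal N}(d\delta\ff,Q\fh)=0$, i.e.\ the vanishing of $\delta_xW_{\mathcal N}$ on the proper subspace $\delta\Omega^1_c\otimes Q\Omega^1_c$, not on all of $\Omega^0_c\otimes\Omega^1_c$. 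The natural way to enforce transversality --- replacing $W_{\mathcal N}$ by $W_{\mathcal N}(Q\cdot,Q\cdot)$, which is indeed co-closed in both slots since $Qd=m^{-2}\N d$ and $\delta Q=m^{-2}\delta\N$, and which still reproduces $\omega_2$ because $Q^2=Q+m^{-4}\N d\delta$ --- destroys property (\ref{HadF2}): its antisymmetric part becomes $i\G_\N(\cdot,Q\cdot)=i\G_\P$ modulo $C^\infty$ rather than $i\G_\N$, so the modified bisolution is no longer admissible for Definition~\ref{Hadamard}. Without transversality, $W_g(\P\ff,Q\fh)=-(\delta_xW_g)(\delta\ff,Q\fh)$ is not known to vanish, $W_g(\cdot,Q\cdot)$ is not known to be a Proca bisolution in its first argument, and the uniqueness mechanism of Proposition~\ref{propagationMMV}(a) cannot be invoked. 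Closing this is the nontrivial content of the Fewster--Pfenning argument; as written, your direct route does not close it.
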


\remark \label{remarkN} In order to compare Proposition~\ref{propagationMMV} and Proposition~\ref{propagation}  we stress that the requirement that the neighborhood  $\mathcal{N}$ of a Cauchy surface  is causal normal can be relaxed also in  Proposition~\ref{propagation} to make contact with our Proposition~\ref{propagationMMV}. One may only assume that $(\mathcal{N}, g|_{\mathcal{N}})$ is globally hyperbolic also therein. That is a consequence of the following facts.
\begin{itemize}

\item[(a)]  Every  causal normal neighborhood $\mathcal{N}\subset \M$ of a Cauchy surface $\Sigma$ of $(\M, g)$ is, by definition \cite{KW,Valter}, a globally hyperbolic spacetime  with respect to the restriction of the metric and $\Sigma$ is also a Cauchy surface in $(\mathcal{N}, g|_\mathcal{N})$.

\item[(b)] Every smooth spacelike Cauchy surface admits a causal normal neighborhood \cite{KW,Valter}. 

\item[(c)]  According  to the proof of \cite[Lemma 2.2 ]{KW} whose validity extends to \cite{Valter},  
every neighborhood of a smooth spacelike Cauchy surface includes   a causal normal neighborhood of that Cauchy surface\footnote{Essentially because  convex normal neighborhoods  of points form a topological basis of any  spacetime and in view of \cite[Proposition 9]{Valter}}.

\end{itemize}

The smoothness  property corresponding to our Proposition~\ref{smoothnessMMV} also holds for  Hadamard bisolutions in the sense of Fewster-Pfenning. In \cite{Fewster}, it is  an immediate consequence of (\ref{HadF}) and  the analogous  feature of   Klein-Gordon   bisolutions  (see the discussion on p. 4488 in \cite{Fewster}).

\begin{proposition}\label{smoothness}
Let $\omega, \omega' \in \Gamma'_c(V_g\boxtimes V_g)$ be a pair of bisolutions of the Proca equation satisfying the Hadamard condition~(\ref{HadF}) for corresponding Klein-Gordon bisolutions  which, in turn, satisfy (\ref{HadF2}). Then,  the differences between the two bisolutions is smooth: $\omega-\omega' \in \Gamma(V_g\boxtimes V_g)$.
\end{proposition}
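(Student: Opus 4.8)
The plan is to reduce the smoothness of $\omega-\omega'$ to the smoothness of the difference of the two underlying Klein-Gordon bisolutions, which is in turn governed by the microlocal machinery already set up in Proposition~\ref{smoothnessMMV}. By Definition~\ref{Hadamard} there exist Klein-Gordon Hadamard bisolutions $W_g, W_g' \in \Gamma'_c(\V_g\otimes\V_g)$ with
$$\omega(\hat{\fa}(\ff)\hat{\fa}(\fh)) = W_g(\ff, Q\fh)\,, \qquad \omega'(\hat{\fa}(\ff)\hat{\fa}(\fh)) = W_g'(\ff, Q\fh)\,, \qquad Q=\Id+m^{-2}(d\delta_g)\,,$$
where, by hypothesis, both $W_g$ and $W_g'$ satisfy the microlocal spectrum condition (\ref{WHad}) and the antisymmetry relation (\ref{HadF2}).

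First I would observe that $W_g$ and $W_g'$ fall precisely under the hypotheses of the \emph{general} (second) statement of Proposition~\ref{smoothnessMMV}. Indeed, both are distributions in $\Gamma'_c(\V_g\otimes\V_g)$ whose wavefront set equals the set $\mathcal{H}$ of (\ref{WHadMMV}), which is literally the set appearing in (\ref{WHad}); and their antisymmetric parts coincide modulo $C^\infty$, since by (\ref{HadF2}) the antisymmetric part of each of $W_g$ and $W_g'$ equals (a fixed multiple of) the Klein-Gordon causal propagator $\G_\N$ up to a smooth term. Hence the cited proposition applies verbatim and yields $W_g-W_g' \in \Gamma(\V_g\otimes\V_g)$; that is, the difference of the two Klein-Gordon bisolutions is a smooth function. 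Note that here we do not even need the individual $W_g$ to be bisolutions, only their shared wavefront set and the coincidence of their antisymmetric parts, exactly as in Proposition~\ref{smoothnessMMV}.

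The remaining step is to transfer this smoothness through the fixed differential operator $Q$. Following the conventions of Remark~\ref{remdist}, the two-point function $\omega$ is, as a distribution in $\Gamma'_c(\V_g\otimes\V_g)$, obtained from $W_g$ by letting $Q$ act on the second slot, i.e.\ $\omega = (\Id\otimes Q^{\dagger})\,W_g$, and similarly for $\omega'$. Since $d$ and $\delta_g$ are mutually formal adjoints for the Hodge product (\ref{hodge}), one has $(d\delta_g)^{\dagger}=\delta_g^{\dagger}d^{\dagger}=d\delta_g$, so $Q$ is formally selfadjoint and $Q^{\dagger}=Q$. Therefore
$$\omega-\omega' = (\Id\otimes Q)\,(W_g-W_g')\,,$$
and because $W_g-W_g'$ has a smooth integral kernel while a differential operator maps smooth kernels to smooth kernels, we conclude $\omega-\omega' \in \Gamma(\V_g\otimes\V_g)$, as claimed.

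I do not expect a genuine obstacle in this argument: its entire analytic content is carried by Proposition~\ref{smoothnessMMV}, and what remains is purely bookkeeping. The only point deserving a line of care is the identification $\omega = (\Id\otimes Q^{\dagger})W_g$ together with the observation $Q^{\dagger}=Q$, ensuring that $Q$ truly acts as a (smoothness-preserving) differential operator on the kernel and that the factor $m^{-2}(d\delta_g)$ does not reintroduce singularities; this is immediate since $Q$ is a fixed smooth differential operator independent of the state.
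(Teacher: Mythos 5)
Your proposal is correct and follows essentially the same route as the paper, which disposes of this proposition in one remark: the difference $W_g-W_g'$ of the underlying Klein--Gordon bisolutions is smooth (the paper cites the discussion in Fewster--Pfenning, while you correctly invoke the general second statement of Proposition~\ref{smoothnessMMV}, whose hypotheses --- common wavefront set $\mathcal{H}$ and antisymmetric parts agreeing with $i\G_\N$ mod $C^\infty$ --- are exactly what (\ref{WHad}) and (\ref{HadF2}) provide), and then $\omega-\omega'=(\Id\otimes Q)(W_g-W_g')$ is smooth because $Q=Q^\dagger$ is a differential operator. The identification $\omega=(\Id_x\otimes Q_y)W_g$ is precisely the one the paper itself uses in the proof of Theorem~\ref{teoequiv}, so no gap remains.
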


Finally, \cite{Fewster} also contains a proof of the existence  of Hadamard states for the Proca (and the Maxwell) field in globally hyperbolic spacetimes with compact Cauchy surfaces (whose first homology group is trivial when treating the Maxwell field). This proof establishes first the existence in ultrastatic spacetimes and next it exploits a standard deformation argument \cite{W}. 

\subsection{An (almost)  equivalence theorem}  We are in a position to state and prove our equivalence result.

\begin{theorem} \label{teoequiv}Consider the globally hyperbolic spacetime $(\M,g)$ and a quasifree  state $\omega:\mathcal{A}_g\to\CC$ 
for the $*$-algebra of observables on $(\M,g)$ of the real Proca field. Let $\omega_2\in \Gamma_c'(\V_g\boxtimes \V_g)$ be the two-point function of $\omega$.   The following facts are true.
\begin{itemize}
\item[(a)] If $\omega$ is Hadamard according to Definition~\ref{Hadamard}, then  it is also Hadamard according to Definition~\ref{HadamardMMV}. 
\item[(b)] If $(\M,g)$ admits a Proca quasifree Hadamard state according to Definition~\ref{Hadamard}  and $\omega$  is Hadamard according to Definition~\ref{HadamardMMV},  then $\omega$ is Hadamard in the sense of  Definition~\ref{Hadamard}.
\end{itemize}
\end{theorem}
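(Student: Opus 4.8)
The plan is to prove the two directions separately, the whole of part (a) resting on a single microlocal fact: the operator $Q=\Id+m^{-2}d\delta$ neither enlarges the wavefront set it acts on nor destroys the null singularities it is applied to. For (a), assume $\omega$ is Hadamard in the sense of Definition~\ref{Hadamard}, so that $\omega_2(\ff,\fh)=W_g(\ff,Q\fh)$ with $WF(W_g)=\mathcal{H}$. Since $Q$ is a differential operator acting on the second entry, I immediately get the easy inclusion $WF(\omega_2)\subseteq WF(W_g)=\mathcal{H}$. The reverse inclusion is the crux: writing $\omega_2=A+B$ with $A=W_g$ and $B(\ff,\fh)=m^{-2}W_g(\ff,d\delta\fh)$, I would note that $d\delta$ is second order, with principal symbol in the second slot at a covector $k_y$ equal to the rank-one endomorphism $\omega\mapsto g^\sharp(k_y,\omega)\,k_y$. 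Composing this with the full-rank parallel-transport polarization of the Hadamard parametrix $W_g$ at a point $(x,k_x;y,-k_y)\in\mathcal{H}$ produces a \emph{nonzero} leading symbol, because $k_y\neq0$ and the transport tensor is a fibre isomorphism. Hence at every point of $\mathcal{H}$ the summand $B$ is strictly more singular than $A$ and carries a nonvanishing top-order symbol, so that singularity cannot be cancelled by the less singular $A$; this yields $\mathcal{H}\subseteq WF(\omega_2)$ and therefore $WF(\omega_2)=\mathcal{H}$, which together with quasifreeness is exactly Definition~\ref{HadamardMMV}. I would remark that the identical computation applied to $\G_\P=Q\G_\N$ gives the identity $WF(\G_\P)=WF(\G_\N)$ invoked in the proof of Proposition~\ref{propagationMMV}.

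For (b), assume $(\M,g)$ carries some quasifree state $\tilde\omega$ Hadamard in the sense of Definition~\ref{Hadamard}, and that $\omega$ is Hadamard in the sense of Definition~\ref{HadamardMMV}. By part (a), $\tilde\omega$ is then also Hadamard according to Definition~\ref{HadamardMMV}, so both $\omega$ and $\tilde\omega$ satisfy \eqref{WHadMMV}. Proposition~\ref{smoothnessMMV} consequently forces the difference $S:=\omega_2-\tilde\omega_2\in\Gamma(\V_g\otimes\V_g)$ to be smooth. Moreover, since $\omega$ and $\tilde\omega$ are states of the on-shell algebra $\mathcal{A}_g$, both two-point functions are Proca bisolutions; applying $\delta$ to the relation $\P\omega_2=0$ and using $\delta\P=m^2\delta$ with $m^2>0$ shows $\delta\omega_2=0$ (and likewise for $\tilde\omega_2$) in each entry. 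Hence $S$ is a \emph{smooth, co-closed} Proca bisolution, which in particular makes it a smooth Klein--Gordon bisolution as well.

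The decisive observation is that co-closedness makes $Q$ act trivially on $S$: since $\delta_yS=0$ forces $S(\ff,d\delta\fh)=0$, we get $S(\ff,Q\fh)=S(\ff,\fh)$. I would therefore simply set $W_g:=\tilde W_g+S$, where $\tilde W_g$ is the Klein--Gordon Hadamard bisolution representing $\tilde\omega$ through \eqref{HadF}. Then $W_g$ is a Klein--Gordon bisolution (both summands are), its wavefront set is unaffected by the smooth term so $WF(W_g)=\mathcal{H}$, and its antisymmetric part is still $i\G_\N$ mod $C^\infty$ because $S$ is smooth. Finally $W_g(\ff,Q\fh)=\tilde W_g(\ff,Q\fh)+S(\ff,\fh)=\tilde\omega_2(\ff,\fh)+S(\ff,\fh)=\omega_2(\ff,\fh)$, which is precisely the representation \eqref{HadF}. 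Thus $\omega$ is Hadamard in the sense of Definition~\ref{Hadamard}.

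The only genuinely delicate point in this scheme is the reverse inclusion in (a): one must check that the rank-one symbol of $d\delta$ does not annihilate the leading Hadamard singularity of $W_g$, and this is exactly where the isomorphism (full-rank) character of the parallel-transport polarization and the non-triviality of $k_y$ are used. Everything else is bookkeeping: the easy inclusion is the standard ``differential operators do not enlarge $WF$'' estimate, and all of part (b) is purely algebraic once Proposition~\ref{smoothnessMMV} and the co-closedness of $S$ are in hand, requiring no further microlocal input. I expect the symbol computation in (a) to be the main obstacle and would carry it out in the language of the wavefront polarization sets of \cite{SV} to remain consistent with the vector-valued setting used throughout.
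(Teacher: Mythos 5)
Your part (b) is essentially the paper's argument: take the given Fewster--Pfenning Hadamard state, use part (a) and Proposition~\ref{smoothnessMMV} to see that the difference $S$ of the two two-point functions is smooth, observe that $S$ is a co-closed Proca bisolution so that $S(\ff,Q\fh)=S(\ff,\fh)$, and absorb $S$ into the Klein--Gordon bisolution; this matches the paper step for step (the paper derives $S(\ff,d\fh^{(0)})=0$ from $d\fh^{(0)}=m^{-2}\P d\fh^{(0)}$ rather than from $\delta\P=m^2\delta$, which is the same computation). Part (a) is where you diverge, and precisely at the step you yourself identify as the crux. The easy inclusion $WF(\omega_2)\subseteq WF(W_g)$ is common to both proofs. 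For the reverse inclusion the paper never opens up the Hadamard parametrix: from $\omega_2=(\Id\otimes Q)W_g$ it invokes microlocal elliptic regularity, $WF(W_g)\subseteq Char(\Id\otimes Q)\cup WF(\omega_2)$, notes that $Char(\Id\otimes Q)=\{(x,k_x;y,0)\}$, and uses the fact that no covector in $WF(W_g)$ has vanishing second component (they are all cotangent to null geodesics) to conclude $WF(W_g)\subseteq WF(\omega_2)$, hence equality. Your route instead decomposes $\omega_2=W_g+m^{-2}(\Id\otimes d\delta)W_g$ and argues that the second summand has a non-vanishing leading singularity two orders worse than the first, so that no cancellation can occur on $\mathcal{H}$. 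This is completable, but it is substantially heavier than you suggest: to make ``strictly more singular with non-vanishing top-order symbol'' rigorous you need a definite microlocal Sobolev order and a principal symbol for $W_g$ at \emph{every} point of $\mathcal{H}$, not only near the diagonal where the local Hadamard expansion is directly available, i.e.\ the full polarization-set or paired-Lagrangian description of vector-valued Hadamard two-point functions, on top of the (correct) observation that $k_y\otimes k_y^\sharp$ does not annihilate the transport polarization. Since you explicitly defer that computation, part (a) as written is a plan rather than a proof; the elliptic-regularity shortcut above closes the gap in two lines and is what I would substitute, while keeping your part (b) unchanged.
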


\begin{proof}Tha following argument is identical to the one used in \ref{propagationMMV} to prove $WF(\G_\P)=WF(\G_\N)$, but we repeat it here to keep this section self-contained.\\
First of all notice that, since $\omega_2(\ff,\fg) = W_g(\ff, Q\fg)$, then viewing $\omega_2$ and $W_g$ as bidistributions, we have
$\omega(x,y) = (Id_x\otimes Q_y)\: W(x,y)$  (where we have used the fact that $Q$ is formally selfadjoint)  taking Remark \ref{remdist} into account).\\
Now  suppose that $\omega$ is Hadamard according to Definition~\ref{Hadamard}.
Since  $W_g$ satisfies the microlocal spectrum condition  and the differential operator $I\otimes Q$ is smooth, we have 
$$ WF(\omega_2)\subset WF(W_g) = \{(x,k_x;y,-k_y)\in T^*\M^2\backslash\{0\}\:|\:(x,k_x)\sim_{\parallel}(y,k_y), k_x\triangleright0\}\:.$$
Notice that, in particular, $k_x$ and $k_y$ cannot vanish (simultaneously or separately) if they take part of $WF(W_g)$.
Let us prove the converse inclusion to complete  the proof of (a).
Again from known results, from $\omega_2(x,y) = (Id_x\otimes Q_y) W_g(x,y)$,  we have 
$$WF(W_g) \subset Char(I\otimes Q) \cup WF(\omega_2)\:.$$
However, by direct inspection, one sees that
 $$ Char(I\otimes Q)= \{(x, k_x; y, 0) \:|\: (x,k_x) \in \T^*\M\:, y \in \M\}\:,$$
so that
\begin{equation} WF(\omega_2)\subset WF(W_g) \subset  WF(\omega_2) \cup  \{(x, k_x; y, 0) \:|\: (x,k_x) \in \T^*\M\:, y \in \M\}\:.\label{chain}\end{equation}
However $WF(W_g) \cap  \{(x, k_x; y, 0) \:|\: (x,k_x) \in \T^*\M\:, y \in \M\} = \emptyset$  and thus  we can re-write the chain of inclusions (\ref{chain}) as 
$$ WF(\omega_2)\subset WF(W_g) \subset  WF(\omega_2) \quad \mbox{so that}\quad WF(\omega_2) = WF(W_g)\:.$$ 
This is the thesis of (a) because we have established that Definition~\ref{HadamardMMV} is satisfied by $\omega$.\\
To prove (b), let us assume that $\omega$ satisfies Definition~\ref{HadamardMMV}.  By hypotheses the  antisymmetric part of $\omega_2$ is $-iG_\P$. Let $\Omega$ be another quasifree state of the Proca field which satisfies Definition~\ref{Hadamard}. Also the antisymmetric part of $\Omega_2$ is $-iG_\P$.

Due to Proposition~\ref{smoothnessMMV}, 
$$F(x,y) := \omega_2(x,y)-\Omega_2(x,y)\:.$$
 is a smooth function.  Furthermore it is a symmetric bisolution of the Proca equation. In particular it therefore satisfies\footnote{We are grateful to C. Fewster for this observation.} $F(\ff, d \fh^{(0)}) = 0$, where $\fh^{(0)}\in \Omega^0_c(\M)$, so that 
$$F(\ff, Q\fg) = F(\ff, \fg) + \frac{1}{m^2}F(\ff, d (\delta_g \fg)) = F(\ff, \fg) \:.$$
Collecting everything  together, we can assert that, for some distributional bisolution of the Klein-Gordon equation $W_g$ 
which satisfies (\ref{HadF2}), (\ref{WHad}), and is 
associated to the Hadamard state $\Omega$, it holds
$$\omega_2(\ff,\fg) = W_g(\ff,Q\fg) + F(\ff,\fg)  = W_g(\ff,Q\fg) + F(\ff,Q\fg)\:.$$
If we re-absorb $F$ in the definition of $W_g$,
$$W'_g(\ff,\Q\fg) =  W_g(\ff,Q\fg) + F(\ff,Q\fg)\:.$$
the new $W'_g$ is again a distributional bisolution of the Klein-Gordon equation  which satisfies (\ref{HadF2}), (\ref{WHad})
and 
$$\omega_2(f,g) = W'_g(\ff,Q\fg)\:.$$
In other words, the Hadamard state  $\omega$ according to Definition~\ref{HadamardMMV} is also  Hadamard in the sense of Definition~\ref{Hadamard} concluding the proof of (b).
\end{proof}

\remark Regarding (b), the existence of Hadamard states in the sense of Definition~\ref{Hadamard} has been established in \cite{Fewster} for globally hyperbolic spacetimes whose Cauchy surfaces are compact: in those types of spacetimes at least, the two definitions are completely equivalent. We expect that actually the equivalence is  complete, even dropping the compactness hypothesis (see the conclusion section). This issue will be investigated elsewhere.

\section{Conclusion and future outlook}\label{sec:concl}
We conclude this paper by discussing some open issues which are raised in this paper and we leave for future works.

On an ultrastatic spacetime  $\M = \RR \times \Sigma$, 
 the one-parameter group of isometries given by  time-translations
has an associated  action on $\mathcal{A}_g$ in terms of $*$-algebras isomorphisms $\alpha_u$ completely induced by $$\alpha_u(\hat{\fa}(\ff)) := \hat{\fa}(\ff_u)$$ for every $\ff \in \Gamma_c(\M)$, where $\ff_u(t, p) := \ff(t-u,p)$ for every $t,u\in \RR$ and 
$p\in \Sigma$. 
It is shall not be difficult to prove that the Hadamard state constructed in Theorem~\ref{thm:intro intermezzo} is  invariant under the action of $\alpha_u$
  $$\omega_\mu(\alpha_u(a)) = \omega_\mu(a) \quad \forall u \in \RR \quad \forall  a \in \mathcal{A}_g$$
 It should be also true that the map $$\RR \ni u \mapsto \omega_\mu(b\alpha_u(a)) \in \CC$$ 
 is continuous for every $a,b\in  \mathcal{A}_g$ which would assure (see, e.g. \cite{FFM}) that $\alpha := \{\alpha_h\}_{h\in \RR}$ is unitarily implementable by a strongly continuous unitary representation of $\RR$ in the GNS representation of $\omega_\mu$ and that the vacuum vector of the Fock-GNS representation is left invariant under the said unitary representation. We expect that the selfadjoint generator of that unitary group has a positive spectrum where, necessarily,  the vacuum state is an  eigenvector with eigenvalue $0$. In other words $\omega_\mu$ should be a {\em ground state} of $\alpha$. We finally expect that $\omega_\mu$ is {\em pure} (on the Weyl algebra associated to the symplectic space $((\Ker \P) \cap \Gamma_{sc}(\M), \sigma^{(\P)})$  and it is the {\em unique quasifree algebraic state which is invariant under $\alpha$}. We can summarize the previous discussion in the following question.

\begin{question} Is the Hadamard state defined in Theorem~\ref{thm:intro intermezzo} a {\em ground state} for the time-translation? More precisely, is it the unique, pure, quasifree algebraic state  which is invariant under action of  $\alpha$?
\end{question}

Last, but not least, we have seen in Section~\ref{SECHADFP} that if a globally hyperbolic manifold admits a Proca quasifree Hadamard state according to the definition of Fewster-Pfenning, then Definition~\ref{HadamardMMV} and~\ref{Hadamard} are equivalent. 
This is the case for example for globally hyperbolic spacetimes whose Cauchy surfaces are compact. We do expect to extend this result for the whole class of globally hyperbolic spacetime. 
\begin{conjecture}
Deifnition~\ref{HadamardMMV} and~\ref{Hadamard} are equivalent on any globally hyperbolic spacetime.
 \end{conjecture}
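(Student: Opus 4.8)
The plan is to turn the conjecture into a pure \emph{existence} statement and attack that. By Theorem~\ref{teoequiv}(a) a Fewster--Pfenning state (Definition~\ref{Hadamard}) is always Hadamard in the sense of Definition~\ref{HadamardMMV}; by Theorem~\ref{teoequiv}(b) the converse holds on $(\M,g)$ \emph{as soon as $(\M,g)$ carries a single} Fewster--Pfenning Hadamard state, which then serves as the seed bisolution $W_g$ to be corrected by a smooth symmetric Proca bisolution. Since Theorem~\ref{thm:main intro 3} produces a Definition~\ref{HadamardMMV} state on every globally hyperbolic spacetime, the two notions are equivalent on $(\M,g)$ \emph{if and only if} $(\M,g)$ admits at least one Fewster--Pfenning Hadamard state. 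Thus the whole conjecture collapses to the single assertion: \emph{every globally hyperbolic spacetime carries a Proca Hadamard state in the sense of Definition~\ref{Hadamard}}.

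First I would build such a seed on an ultrastatic spacetime $(\RR\times\Sigma,-dt\otimes dt+h)$ of bounded geometry, mirroring Section~\ref{sec:existence}. The natural candidate for $W_g$ is the two-point function of a genuine Hadamard state for the \emph{unconstrained} $1$-form Klein--Gordon operator $\N=\delta d+d\delta+m^2$, produced by exactly the bounded-geometry pseudodifferential calculus of~\cite{gravity} already used in Proposition~\ref{prop:ultraHada}: along $A=A^{(0)}dt+A^{(1)}$ the covariances are built from $H_{(j)}=\overline{\Delta^{(j)}+m^2}^{1/2}$ on $\Sigma$, and $WF(W_g)=\mathcal H$ follows from~\cite[Proposition~5.4]{gravity}. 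The Proca candidate is then $\omega_2(\ff,\fh):=W_g(\ff,Q\fh)$ with $Q=\Id+m^{-2}d\delta_g$. Two things must be checked. Its antisymmetric part equals $i\G_\P$: using $\G_\P=Q\G_\N=\G_\N Q$ the Klein--Gordon commutator inside $W_g$ contributes exactly $i\G_\P$, provided the \emph{symmetric} part $S$ of $W_g$ is chosen $Q$-compatibly, i.e.\ $S(\ff,d\delta_g\fh)$ is symmetric under $\ff\leftrightarrow\fh$ so the exact-form corrections cancel. Positivity, the decisive point, I would obtain by identifying $\omega_2$ with the explicitly positive state $\omega_\mu$ of Proposition~\ref{PROPHAD1} on the constrained Cauchy data, using that every $\P$-bisolution satisfies $\delta_x\omega_2=0$ and hence $\omega_2(d\varphi,\cdot)=0$. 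Both arrangements are possible precisely because the slice $\Sigma$ is Riemannian, where $d\delta$ and $\delta d$ have orthogonal ranges and $Q$ is elliptic and invertible by Hodge theory.

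Next I would transport this ultrastatic seed to a general $(\M,g)$ by proving that the Proca M{\o}ller $*$-isomorphism preserves Definition~\ref{Hadamard}, in parallel with Theorem~\ref{thm:main intro Had}. Concretely, from $\omega_2=W_g(\cdot,Q_g\cdot)$ one wants the pullback along $\R^{\dagger}$ again to take the form $W'_{g'}(\cdot,Q_{g'}\cdot)$ with $W'_{g'}$ a Klein--Gordon Hadamard bisolution. The natural candidate is $W'_{g'}:=W_g(\R_\N^{\dagger}\cdot,\R_\N^{\dagger}\cdot)$ for a Klein--Gordon M{\o}ller operator $\R_\N$ of $\N$ in the sense of~\cite{Norm}, which would require the intertwining $Q_g\,\R^{\dagger}=\R_\N^{\dagger}\,Q_{g'}$ relating the Proca and Klein--Gordon M{\o}ller operators through $Q$.

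\textbf{The main obstacle} is everything involving $Q$, and it is exactly where the word ``almost'' in Theorem~\ref{teoequiv} lives. On a Lorentzian manifold $Q=\Id+m^{-2}d\delta_g$ is \emph{not elliptic}: its order-two part has the degenerate, rank-one principal symbol $\xi\mapsto m^{-2}g^\sharp(\xi,\cdot)\,\xi$, so $Q$ admits no global pseudodifferential parametrix and one cannot simply set $W_g:=\omega_2(\cdot,Q^{-1}\cdot)$, nor freely move $Q$ across M{\o}ller operators whose very construction is tied to $\G_\P=Q\G_\N$ rather than to $Q$ itself. The invertibility of $Q$ is available only on the Riemannian slice $\Sigma$, which is why the seed can be built ultrastatically; the difficulty is to control $Q_g$ versus $Q_{g'}$ along the deformation. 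I expect the genuine crux to be either establishing the intertwining $Q_g\R^{\dagger}=\R_\N^{\dagger}Q_{g'}$ directly, or proving positivity of $W_g(\cdot,Q\cdot)$ on $(\M,g)$ without passing through $Q^{-1}$; resolving this $Q$-compatibility is precisely what would upgrade Theorem~\ref{teoequiv} from an almost-equivalence to the full equivalence asserted by the conjecture.
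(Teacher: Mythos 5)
The statement you are addressing is left as an open conjecture in the paper: the authors explicitly do not prove it, and your proposal does not prove it either, so there is a genuine gap --- one you yourself flag. What you do establish correctly is the reduction: by Theorem~\ref{teoequiv}(a) every Fewster--Pfenning state satisfies Definition~\ref{HadamardMMV}, by Theorem~\ref{teoequiv}(b) the converse holds on any $(\M,g)$ carrying at least one Fewster--Pfenning state, and by Theorem~\ref{thm:main intro 3} a Definition~\ref{HadamardMMV} state always exists; hence the conjecture is equivalent to the existence of a Fewster--Pfenning Hadamard state on every globally hyperbolic spacetime. This is exactly the reduction the paper itself records after Theorem~\ref{teoequiv} and in Section~\ref{sec:concl}, and your candidate seed $W_g(\ff,\ff')=\mu(\G_\N\ff,\G_\N\ff')+\tfrac{i}{2}\sigma^{(N)}(\G_\N\ff,\G_\N\ff')$ on an ultrastatic spacetime of bounded geometry, with $\mu$ as in \eqref{DEFMU} but on unconstrained data, coincides with the form the authors themselves conjecture at the end of the paper.

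The gap is that none of the three decisive steps is carried out: (i) that the antisymmetric part of $W_g(\cdot,Q\cdot)$ equals $i\G_\P$ exactly (not merely mod $C^\infty$) for your chosen symmetric part; (ii) positivity of $\ff\mapsto W_g(\ff,Q\ff)$ on $\Gamma_c(\V_g)$, which you propose to obtain by identifying the restriction to constrained Cauchy data with $\omega_\mu$ of Proposition~\ref{PROPHAD1} --- plausible on the ultrastatic slice but not verified, and the identification must also handle test sections whose image under $\G_\P$ is a general element of $C_\Sigma$ rather than an exact form; and (iii) the transport step, where the intertwining $Q_g\,\R^{\dagger_{gg'}}=\R_\N^{\dagger_{gg'}}Q_{g'}$ between the Proca and Klein--Gordon M{\o}ller operators is asserted as ``the natural candidate'' but is precisely the nontrivial content: the two M{\o}ller constructions are built from different Green operators and there is no reason a priori that a single $\R_\N$ satisfies this relation, nor that $W_g(\R_\N^{\dagger}\cdot,\R_\N^{\dagger}\cdot)$ is again a Klein--Gordon \emph{bisolution} satisfying \eqref{HadF2} and \eqref{WHad}. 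Your diagnosis of the obstruction --- the non-ellipticity of $Q=\Id+m^{-2}d\delta_g$ on a Lorentzian manifold, which blocks inverting $Q$ to recover $W_g$ from $\omega_2$ --- is accurate and is indeed where the ``almost'' in Theorem~\ref{teoequiv} lives, but identifying the obstruction is not the same as overcoming it. In short, your proposal reproduces the paper's own program for settling the conjecture without closing it.
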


As is evident from our quasi equivalence theorem, a complete equivalence would take place if a Hadamard state  according to \cite{Fewster} is proved to exist for every globally hyperbolic spacetime.
As a matter of fact, we expect that every globally hyperbolic spacetime $(\M,g)$ admits a quasifree Proca Hadamard state $\omega$ according to Fewster and Pfenning. This state should exist in every  paracausally related  ultrastatic spacetime $(\RR\times \Sigma, -dt^2+h)$ with complete Cauchy surfaces of bounded geometry.  With the same argument used for  our existence proof  of Hadamard states or the deformation argument exploited in \cite{Fewster}, it should be  possible to export this state to the original space  $(\M,g)$. 
We expect that the  Hadamard Klein-Gordon bisolution for the real Proca field  on $(\RR\times \Sigma, -dt^2+h)$ used to define $\omega$ according to (\ref{HadF}) in  Definition \ref{Hadamard} should have this form.
$$W_g(\ff,\ff') := \mu(\G_\N\ff, \G_\N\ff') + \frac{i}{2}\sigma^{(N)}(\G_\N\ff, \G_\N\ff')\:,\quad \ff,\ff' \in \Gamma_c(\RR\times \Sigma)\:,$$
where $\N$ is the Klein-Gordon operator (\ref{NtoP}) associated to $\P$ and $\G_\N$ its causal propagator. The bilinear symmetric form  $\mu : \left( (\Omega^0_c(\Sigma))^2 \times (\Omega^1_c(\Sigma))^2\right) \times \left( (\Omega^0_c(\Sigma))^2 \times (\Omega^1_c(\Sigma))^2\right) \to \RR$ is defined as in (\ref{DEFMU}), but with the crucial difference that here its arguments are not restricted to $C_\Sigma \times C_\Sigma$.
 \vspace{1cm}

\end{document}